\let\csname equation*\endcsname\relax
\let\csname endequation*\endcsname\relax
\numberwithin{equation}{section}
\def\strutdepth{\dp\strutbox}
\def\nw#1{\strut\vadjust{\kern-\strutdepth\vtop to0pt{\vss\hbox to\hsize
{\hskip\hsize\hskip5pt$\leftarrow$\hss\strut}}}{\em #1}}
\newtheorem{theorem}{Theorem}[section]
\newenvironment{proof}{\noindent {\it Proof }}{\hfill $\square$}
\newcommand{\eqa}{\begin{eqnarray}}
\newcommand{\eeqa}{\end{eqnarray}}
\newcommand{\beq}{\begin{equation}}
\newcommand{\eeq}{\end{equation}}
    \newtheorem{proposition}[theorem]{Proposition}
    \newtheorem{conjecture}{Conjecture}
    \newtheorem{Definition}[theorem]{Definition}
    \newtheorem{Remark}[theorem]{Remark}
    \newenvironment{remark}{\begin{Remark}\rm}{\end{Remark}}
    \newtheorem{Example}[theorem]{Example}
    \newtheorem{Assumptions}[theorem]{Assumptions}
\begin{document}
\title[Shock formation]{ Shock formation in the 
dispersionless Kadomtsev-Petviashvili equation }

\author{
T. Grava$^{1,2}$, C. Klein$^3$, and J. Eggers$^1$
}

\address{
$^1$School of Mathematics, 
University of Bristol, University Walk,
Bristol BS8 1TW, United Kingdom  \\
$^2$SISSA, Via Bonomea 265, I-34136 Trieste, Italy \\
$^3$Institut de Math\'ematiques de Bourgogne, Universit\'e de Bourgogne, 
9 avenue Alain Savary, 21078 Dijon Cedex, France
    }

\begin{abstract}
The dispersionless Kadomtsev-Petviashvili (dKP) equation 
$(u_t+uu_x)_x=u_{yy}$ is one of the simplest nonlinear wave equations 
describing two-dimensional shocks. To solve the dKP equation we use a 
coordinate transformation 
inspired by the method of characteristics for the one-dimensional Hopf 
equation $u_t+uu_x=0$. We show numerically that the 
solutions to the transformed equation  develops singularities at later times with respect to the solution of the dKP equation.
This permits us to extend the dKP solution as the graph of a 
multivalued function beyond the critical time  when the  gradients blow up.
This overturned solution is multivalued in a lip shape region in the $(x,y)$ 
plane, where the solution of the dKP equation exists in a weak sense only, 
and a shock front develops. A local  expansion reveals the universal 
scaling structure of the shock, which after a suitable change of coordinates 
corresponds to a generic cusp catastrophe.
We provide a heuristic derivation of the shock front position near the 
critical point for the solution of the dKP equation, and study the 
solution of the dKP equation when a small amount of dissipation is added. 
Using multiple-scale analysis, we show that in the limit of small 
dissipation and near the critical point of the dKP solution, the 
solution of the dissipative dKP equation converges to a Pearcey integral.
We test and illustrate our results by detailed comparisons with 
numerical simulations of both the regularized equation,  the dKP 
equation, and the asymptotic description given in terms of the Pearcey 
integral. 
\end{abstract}

\maketitle

\section{Introduction}
\label{sec:intro}
Perhaps the best known example of a singularity in an 
evolution equation is the formation of jump discontinuities of
the density and of the velocity field in the Euler equations of
compressible gas dynamics. As these discontinuities propagate, 
they are known as shock waves. In the case of a planar shock front,
the problem can be reduced to a one-dimensional equation for the 
velocity alone \cite{LL84a} (a so-called simple wave). The resulting
wave profile overturns to form an s-shaped curve, the 
point where the gradient first becomes infinite (known as the 
gradient catastrophe) corresponds to the formation of a shock. 
From the overturned solution the physical solution can be 
reconstructed by inserting a jump discontinuity (the shock). The shock 
solution is a weak solution of the equation, which satisfies 
additional conditions motivated by physical considerations \cite{Lax57}.
This shock solution is also found by taking the limit of vanishing
viscosity in the dissipative form of the equations, yielding a weak 
solution (see \cite{Bressan2005} for conservation laws in one space 
dimension and \cite{K70,Metivier2008} for hyperbolic 
equations in several space dimensions). 

The  existence of such gradient catastrophe  points has been proved in 
\cite{Alinhac2002,Majda84} for hyperbolic equations in many 
space dimensions. However, to the best of our knowledge, if the initial 
condition depends on two or three spatial variables, little is known about
the two or three-dimensional spatial structure of the shock near the 
blow-up points of the gradients. In particular, it would be interesting 
to know the 
self-similar structure of the solution both before and after shock
formation \cite{EF09}. A rare instance of where we have a more 
or less complete understanding of a higher dimensional singularity 
is the spatial structure of caustics of wave fronts in the 
approximation of geometrical optics \cite{Arnold90b,Nye99}. Two-dimensional
wave breaking has also been studied in \cite{PLGG08}, using a simple
kinematic equation, for which an exact implicit solution is 
available. 

In this paper, we study the formation of two-dimensional shocks 
in a simple nonlinear wave equation known variously as 
the dispersionless Kadomtsev-Petviashvili (dKP) equation \cite{KP70}, 
or the Zabolotskaya-Khokhlov (ZK) equation \cite{ZK69}. The equation 
has the advantage that its one-dimensional form, the Hopf equation, 
has only one family of characteristics. The dKP equation can be 
seen as a long wavelength version of the original Kadomtsev-Petviashvili 
(KP) equation \cite{KP70}:
\beq
\label{KP}
(u_t + uu_x + u_{xxx})_x = \pm u_{yy},
\eeq
but with the highest order dispersive term $u_{xxx}$ dropped,
namely
\[
(u_t + uu_x )_x = \pm u_{yy}.
\]
The subscript denotes the derivative with respect to the variable. 
With a $+$ sign on the right hand side, (\ref{KP}) is 
known as the KPI equation, or as the KPII equation in the opposite 
case. However, in the case of the dKP equation the two signs are equivalent 
under the transformation $u\rightarrow -u$ and $x\rightarrow -x$, and for
the remainder of this manuscript we will consider only the positive sign. 
Depending on context, the KP  equation describes wave profiles for 
layers of inviscid fluid of finite depth, waves in plasmas, or 
the propagation of sound beams in nonlinear media. 

While the Cauchy problem for the KP equation is
globally well-posed in a suitable space \cite{Saut2002}, by dropping the 
dispersive term, the dKP equation becomes a nonlocal scalar conservation 
law in two space variables.  Even for smooth initial data, the solution 
remains smooth only for finite time. In \cite{Rozanova} it is shown that 
the solution of the dKP equation is locally well posed in the Sobolev space 
$H^s$, $s>2$, so that for $s\geq 4$ one has classical solutions.
Particular solutions of the dKP equations have been obtained with several 
techniques \cite{Ferapontov,GK89,Kono,DMT, Raimondo}. The 
Cauchy problem for the dKP equation and shock formation have been 
studied recently in \cite{MS06, MS08,MS12, MS11b}, using the inverse scattering 
transformation, which relies on the integrability of the dKP inherited 
from the KP equation \cite{TT1995,ZaManakov}.

To sketch a derivation of the dKP equation, we follow the original derivation 
of the KP equation \cite{KP70}. We start from the Hopf equation 
\begin{equation}
\label{H}
u_t+uu_x = 0
\end{equation}
for a wave field $u$, with only a convective non-linearity. This is 
the simplest model equation describing wave steepening and shock
formation. In a frame of reference moving at the sound speed $c$,
a simple wave can be shown to be described by (\ref{H}) \cite{LL84a}.
Assuming a weak $y$-dependence, we add a small correction 
$\psi$ on the right hand side of (\ref{H}); 
\beq
u_t+uu_x = \psi. 
\label{H_lin}
\eeq
For a wave of small amplitude, the second term in the above equation 
can be neglected. Assuming a dispersion relation 
$\omega = k c = \sqrt{k_x^2+k_y^2}c$, one obtains 
in a frame of reference moving along the $x$-axis with velocity  $c$ that 
$\omega = kc - k_x c \approx c k_y^2/(2 k_x)$. For (\ref{H_lin}) to match
this dispersion relation, we must have $\psi_x \approx c u_{yy}/2$.
Taking the $x$-derivative on both sides of (\ref{H_lin}) we obtain 
\[
(u_t+uu_x)_x = -\frac{c}{2}u_{yy}.
\]
Rescaling $x\to -x$ $u\to -u$ and $y\to \sqrt{2/c}y$, one arrives at 
the equation
\beq
\label{dKP_int}
(u_t+uu_x)_x = u_{yy}.
\eeq
Note that in spite of its name, the dKP equation (\ref{dKP_int}) 
contains dispersion, and only the highest order dispersive term has 
been dropped relative to (\ref{KP}). Other contexts in which 
(\ref{dKP_int}) is used are described in \cite{BMP90}. 

The Hopf equation (\ref{H}) is solved by observing that 
the velocity is constant along characteristic curves $x(\xi,t)$,
given by \cite{Chorin_Marsden}: 
\begin{equation}
\label{char_H}
x(\xi,t) = u_0(\xi) t + \xi. 
\end{equation}
Thus for any initial condition $u(x,0) = u_0(x)$, one finds an exact
solution $u(x,t) = u_0(\xi)$ in implicit form. Wave breaking occurs 
when two characteristics cross, which always occurs when the initial 
condition has negative slope. A shock first forms along the characteristic
originating from the point $\xi_c$ of greatest negative slope by absolute 
value, where the solution $u(x,t)$ has a point of blow up of the gradient.

Thus if one expands the initial condition about $\xi_c$,
one finds that the profile assumes a characteristic 
s-shape \cite{EF09}:
\begin{equation}
\label{S}
\Delta x - \Delta u\Delta t + t_c^4 u'''(\xi_c)\Delta u^3/6 = 0,
\end{equation}
where $\Delta u = u - u_c$, and $\Delta x=x-x_c-u_c(t-t_c)$. 
For $\Delta t = t-t_c > 0$ (after shock formation), 
the profile has become multivalued. Balancing the three terms in 
(\ref{S}), one sees directly that $\Delta u$ must be of order 
$\Delta t^{1/2}$, and so $\Delta x$ of oder $\Delta t^{3/2}$
\cite{PLGG08,EF09}. 

If one solves (\ref{dKP_int}) with an initial condition which depends 
on $y$, the equation can no longer be solved with the method of characteristics. 
The idea underlying this paper is that the dependence on the 
$y$-coordinate is weak, so the structure of the solution is essentially
the same as before, but the different stages of overturning are ``unfolded''
in the $y$-direction \cite{MN14}. This means that effectively the singularity
time  becomes a function of $y$. If we choose the origin such that a 
singularity occurs at $y=0$ first, and expand $t_c$ in a Taylor 
series near $y=0$, we obtain $t_c(y) = t_c(0) + a y^2 + O(y^3)$, with 
$a > 0$ a constant and $t_c(0)\equiv t_c$. This means
that $\Delta t = t - t_c - ay^2\equiv \bar{t} - ay^2$, and 
the two-dimensional wave breaking is governed by the scalings 
\beq
\Delta u \sim \bar{t}^{1/2}, \quad \Delta x \sim \bar{t}^{3/2}, 
\quad \Delta y \sim \bar{t}^{1/2}. 
\label{scaling}
\eeq
In this paper we will show that the scalings (\ref{scaling}) indeed 
describe the similarity structure of wave breaking in the dKP equation. 

The estimates (\ref{scaling}) imply that $\Delta y \gg \Delta x$ near the 
shock, consistent with our assumption of a slow variation in the 
$y$-direction. The central idea of our paper is to use this insight
to generalize the characteristic transformation (\ref{char_H}) to 
allow for a slow $y$-dependence: 
\begin{equation}
\label{implicit0}
\left\{
\begin{array}{rl}
& u(x,y,t)=F(\xi,y,t)\\
& x=tF(\xi,y,t)+\xi
\end{array}\right.
\end{equation}
Applying transformation (\ref{implicit0}) to (\ref{dKP_int}) results in a 
PDE for $F(\xi,y,t)$ which we will study in the next section (see equation (\ref{eqF}));
 the initial condition for $F$ is given by 
\begin{equation}
\label{initial}
F(x,y,0)=u_0(x,y).
\end{equation}
Note that if the initial data  $u_0(x,y)$ has no $y$-dependence, 
(\ref{implicit0}) yields the exact characteristic solution with
$F(x,y,t) = u_0(x)$ as described before; in particular, $F$ is $y$ and  
time-independent. As in the method of characteristics, the solution 
$u(x,y,t)$ of the dKP equation encounters a gradient catastrophe when 
the transformation  $x=tF(\xi,y,t)+\xi$ defining $\xi=\xi(x,y,t)$  is 
not invertible, namely when $tF_\xi(\xi,y,t)+1=0$. Our numerical results
show that as a result of the unfolding (\ref{implicit0}), the function 
$F(\xi,y,t)$ remains regular at the time $t_c$ of shock formation of the 
solution $u(x,y,t)$ of the dKP equation. Moreover, our numerics indicate
the derivatives of $F$ remain bounded for times substantially beyond $t_c$. 
However, since $F$ satisfies a nonlinear equation (see (\ref{eqF}) below), we 
believe that $F$ will typically develop a singularity for some time $t>t_c$;
we give an example of such a singularity in a particular case. 

Manakov and Santini \cite{MS08,MS12} have proposed a transformation 
for analysing
the gradient catastrophe of dKP equation which is superficially similar 
to ours, which is motivated by the inverse scattering transform.  
Their transformation differs from ours by a factor of $2$ in front of 
the unfolding term: 
\begin{equation}
\left\{
\begin{array}{ll}
&u(x,y,t)={\tilde F}(\zeta,y,t)\\
&x=2t{\tilde F}(\zeta,y,t)+\zeta\\
&{\tilde F}(\zeta,y,0)=u_0(\zeta,y)
\label{F_MS}
\end{array}
\right.
\end{equation}
as a result, the 
transformation does not unfold the overturned profile if there is no 
$y$-dependence. In fact, transformation of the Hopf equation leads to 
the same equation ${\tilde F}_t-{\tilde F}\tilde{F}_{\zeta}=0$ as before, 
but with propagation in the opposite direction, and with the same initial 
data  $\tilde{F}(\zeta,0)=u_0(\zeta)$. This means that for $y$-independent
initial data localized in the $x$-direction, $\tilde{F}(\zeta,t)$ 
will experience a gradient catastrophe {\it before} $u(x,t)$ does, if 
the initial profile is steeper on the left than on the right. The same 
remains true for solutions of the full dKP equation with localized 
initial data: we checked numerically that for the initial data 
considered in this manuscript, i.e. the $x$-derivative of a Schwartz 
function, the function $\tilde{F}(\zeta,y,t)$ suffers a gradient catastrophe 
{\it before} a gradient catastrophe occurs in the original profile $u(x,y,t)$. 

To further illustrate the difference between the two parameterizations,
note that combining (\ref{implicit0}) and (\ref{F_MS}) one finds 
$F$ in terms of ${\tilde F}$:
\begin{equation}
\left\{
\begin{array}{ll}
&F(\xi,y,t)={\tilde F}(\zeta,y,t)\\
&\xi=t{\tilde F}(\zeta,y,t)+\zeta,
\label{F_MS1}
\end{array}
\right.
\end{equation}
or ${\tilde F}$ in terms of $F$:
\begin{equation}
\left\{
\begin{array}{ll}
&{\tilde F}(\zeta,y,t)=F(\xi,y,t)\\
&\zeta=-tF(\xi,y,t)+\xi.
\label{F_MS2}
\end{array}
\right.
\end{equation}
If we assume that $\tilde{F}(\zeta,y,t)$  has no singularities and that 
$2t{\tilde F}_\zeta(\zeta,y,t)+1>0$ in some time interval $[0,t']$, then 
it follows from (\ref{F_MS}) that the solution  $u(x,y,t)$ of the dKP 
equation is regular in the same time interval. But since we also have
$t{\tilde F}_\zeta(\zeta,y,t)+1>0$, it follows from (\ref{F_MS1}) that 
$F(\xi,y,t)$ is regular in $[0,t']$ as well. 

On the other hand, assuming that $F(\xi,y,t)$ is regular and 
$tF_\xi(\xi,y,t)+1>0$ in some time interval $[0,t']$, it follows 
from (\ref{implicit0}) that once again $u(x,y,t)$ is regular in $[0,t']$. 
However, this does not imply that ${\tilde F}(\zeta,y,t)$ is regular, 
since it may happen that $-tF_\xi(\xi,y,t)+1=0$ for some $t\in(0,t']$, 
even though $tF_\xi(\xi,y,t)+1>0$ for all $t\in[0,t']$. This argument shows that 
$\tilde{F}(\zeta,y,t)$, 
as defined by (\ref{F_MS}),  might encounter singularities even before 
$u(x,y,t)$ does. 

Our formulation allows us to find spectrally accurate solutions to 
$F=F(\xi,y,t)$, 
from which $u(x,y,t)$ can easily be reconstructed. The alternative would be 
to use numerical methods for hyperbolic equations which remain
stable even after the formation of shocks \cite{Leveque92}. However, 
these methods introduce numerical dissipation near the shock, which 
renders the solution inaccurate. These sources of inaccuracy can be 
avoided using our transformation. 
The  main results of this paper  are the following:
\begin{itemize}
\item in section 2 we describe the solution of the dKP equation by using 
a transformation inspired by the method of characteristics and by \cite{MS08}.
This transformation reduces the Cauchy problem for the dKP  equation to 
the Cauchy problem for the function $F(\xi,y,t)$ introduced in 
(\ref{implicit0}), which is regular beyond $t_c$. 

\item in section 3 we study the singularity formation in the solution 
to the dKP equation as done in \cite{MS08},\cite{MS12}.  We then show that the local structure of  the dKP solution near 
the point of gradient catastrophe, in a suitable system of coordinates, is equivalent to the unfolding of an $A_2$ singularity. 
We derive the self-similar structure of the lip-shaped 
domain where the solution  of the dKP equation becomes multivalued.
\item In section 4 we give a heuristic derivation of the shock front 
position near the critical point of the solution of the dKP equation,
and study the solution of the dKP equation when dissipation is added
(called the dissipative dKP equation). Using multiple-scale analysis, 
we show that in the limit of small dissipation and near the critical 
point of the dKP solution, the solution of the dissipative dKP equation 
converges to a Pearcey integral. 
\item In section 5 we compare our analysis with detailed numerical 
simulations. Solutions for initial data with and without symmetry with 
respect to $y\mapsto -y$ are studied. It is shown that our
numerical approach allows to continue dKP solutions to a second 
gradient catastrophe, well after the first catastrophe has occurred.
We find no indication for blow-up of the solution to the transformed dKP 
equation.  
\end{itemize}

\section{Solution by characteristic transformation}
\label{sec:char}
We  consider the Cauchy problem  for the dKPI equation 
\begin{equation}
\label{dKP}
\left\{
\begin{array}{ll}
(u_t+uu_x)_x&=u_{yy},\\
u(x,y,t=0)&=u_0(x,y),\quad x,y\in\mathbb{R},\;t\in\mathbb{R}^+. 
\end{array}\right.
\end{equation}
Since we are interested mainly in local properties of the 
solution, we will assume that $u_0(x,y)$ is in the Schwartz class, namely it is  smooth and decreases rapidly 
at infinity. Equation (\ref{dKP}) can also be written in the evolutionary form
\beq
\label{dKP1}
u_t+uu_x=\partial_x^{-1}u_{yy},
\eeq
where $\partial_x^{-1}f(x)\equiv\int_{-\infty}^x f(x')dx'$. This has the form 
of a nonlocal conservation law 
\beq
\label{cons1}
u_t + \nabla{\bf f} = 0, \quad 
{\bf f} = \frac{u^2}{2}{\bf e}_x - \partial_x^{-1}u_y{\bf e}_y,
\eeq
with ${\bf e}_x$ and ${\bf e}_y$ unit vectors in the $x$ and $y$ directions.
As a result,
\beq
\label{p}
\int_{\mathbb{R}^2} u(x,y,t)dxdy=\int_{\mathbb{R}^2} u_0(x,y)dxdy.
\eeq
Similarly,
\beq
\label{cons2}
(u^2)_t + \left[2 u^3 / 3 - \left(\partial_x^{-1}u_y\right)^2\right]_x
+ \left(2u\partial_x^{-1}u_y\right)_y = 0,
\eeq
and hence the $L^2$ norm is also a conserved quantity:
\begin{equation}
\label{mass}
M(t)\equiv\int_{\mathbb{R}^2} u^2(x,y,t)dxdy=\int_{\mathbb{R}^2} u_0^2(x,y)dxdy.
\end{equation}

Since the left hand side of (\ref{dKP}) is a total derivative, solutions 
have to satisfy the constraint
\[
\int_{\mathbb{R}}u_{yy}(x,y,t)dx=0,\quad t>0.
\]
If the initial data does not satisfy such constraint, a low decays at infinity occurs for $t>0$ even for initial data in the Schwartz class.
This is a manifestation of the infinite speed of propagation 
in the dKP equation.  For this reason we choose initial data such that 
\begin{equation}
\label{constID}
\int_{\mathbb{R}}u_0(x,y)dx=0 
\end{equation}
for all $y$,  so that the dynamical  constraint is satisfied also at $t=0$.
After these preliminaries we transform the dKP equation using 
(\ref{implicit0}), to find an equation for $F(\xi,y,t)$. 
\begin{proposition}
The equations (\ref{implicit0}) give  a solution to  the dKP equation 
with smooth initial data $u_0(x,y)$  in implicit form, if the function 
$F(\xi,y,t)$ satisfies  the equation
\begin{equation}
\label{eqF}
\left(\frac{F_{t} + tF_{y}^{2}}{1+tF_{\xi}}\right)_{\xi}= F_{yy},
\end{equation}
with initial data
\begin{equation}
\label{initial1}
F(x,y,0)=u_0(x,y).
\end{equation}
\end{proposition}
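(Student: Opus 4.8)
The plan is to verify the equivalence by a direct chain-rule computation, treating $\xi=\xi(x,y,t)$ as the function implicitly defined by the second line of (\ref{implicit0}). Throughout I assume the transformation is locally invertible, i.e. the Jacobian $J:=1+tF_\xi$ is nonzero; this is exactly the condition $tF_\xi(\xi,y,t)+1\neq0$ mentioned before the proposition, under which $u(x,y,t)$ is a genuine single-valued solution. First I would differentiate $x=tF(\xi,y,t)+\xi$ implicitly with respect to each of $x$, $y$, $t$ in turn (holding the other two fixed), which gives
\[
\xi_x=\frac{1}{J},\qquad \xi_y=-\frac{tF_y}{J},\qquad \xi_t=-\frac{F+tF_t}{J},\qquad J:=1+tF_\xi .
\]
Equivalently, on any quantity expressed through $F$ these encode the operator identities $\partial_x=J^{-1}\partial_\xi$ and $\partial_y\big|_{x}=\partial_y\big|_{\xi}-tF_y\,\partial_x$, which I will use to avoid reintroducing $\xi$-derivatives by hand.

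Next, applying the chain rule to $u=F(\xi,y,t)$ gives
\[
u_x=\frac{F_\xi}{J},\qquad u_y=\frac{F_y}{J},\qquad u_t=\frac{F_t-FF_\xi}{J}.
\]
The decisive simplification, and the reason the unfolding (\ref{implicit0}) is natural, is that the convective term exactly cancels the inhomogeneous piece of $u_t$:
\[
u_t+uu_x=\frac{F_t-FF_\xi}{J}+F\,\frac{F_\xi}{J}=\frac{F_t}{J}.
\]

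It then remains to match the two sides of (\ref{dKP}). Using $\partial_x=J^{-1}\partial_\xi$ I would compute $(u_t+uu_x)_x=J^{-1}\partial_\xi\!\left(F_t/J\right)$, and using $\partial_y\big|_{x}=\partial_y\big|_{\xi}-tF_y\partial_x$ I would compute $u_{yy}=\partial_y\big|_x\!\left(F_y/J\right)$; the latter produces the three terms $F_{yy}/J$, $-2tF_yF_{\xi y}/J^2$ and $t^2F_y^2F_{\xi\xi}/J^3$. Setting $(u_t+uu_x)_x=u_{yy}$, multiplying through by $J$ and collecting, one finds
\[
\frac{F_{t\xi}+2tF_yF_{\xi y}}{J}-\frac{t\,(F_t+tF_y^2)\,F_{\xi\xi}}{J^2}=F_{yy},
\]
and the left-hand side is precisely the expansion of $\big((F_t+tF_y^2)/J\big)_\xi$, so this is exactly (\ref{eqF}). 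Since every step is reversible, (\ref{eqF}) is not merely sufficient but equivalent to $u$ solving the dKP equation. Finally, at $t=0$ the second line of (\ref{implicit0}) reduces to $x=\xi$, whence $u(x,y,0)=F(x,y,0)=u_0(x,y)$, matching (\ref{initial1}).

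The computation is mechanical, so the only real care lies in the bookkeeping of the second-order terms in $u_{yy}$: the mixed and pure $\xi$-derivatives of $F$ generated by differentiating $J$ twice must be grouped so as to reassemble the single conservative term $\big((F_t+tF_y^2)/J\big)_\xi$. I expect this regrouping --- in particular recognizing that the combination $-2tF_yF_{\xi y}$ together with $+t^2F_y^2F_{\xi\xi}$ is exactly what the factor $tF_y^2$ inside the bracket of (\ref{eqF}) encodes --- to be the main, though entirely routine, obstacle.
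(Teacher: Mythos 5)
Your proof is correct and follows essentially the same route as the paper's: the same implicit differentiation of $x=tF+\xi$, the same cancellation $u_t+uu_x=F_t/(1+tF_\xi)$, and the same regrouping of the three second-order terms of $u_{yy}$ into $\frac{1}{\Delta}\bigl[F_{yy}-\bigl(tF_y^2/\Delta\bigr)_\xi\bigr]$. (Incidentally, your sign $\xi_y=-tF_y/J$ is the correct one; the paper's displayed $\xi_y=tF_y/\Delta$ is a typo, since its own subsequent formula $u_y=F_y/\Delta$ requires the minus sign.)
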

\begin{proof}
Differentiating the second equation in (\ref{implicit0}) with
respect to $x$, $t$ and $y$  we find \beq
\label{trans}
\xi_x=\dfrac{1}{\Delta},\quad \xi_t=-\dfrac{F+tF_t}{\Delta},\quad \xi_y=\dfrac{tF_y}{\Delta}
\eeq
where we have defined $\Delta = 1 + tF_{\xi}$. Thus the derivatives of $u$ 
with respect to the variables are 
\beq
\label{ut}
u_t = F_{\xi}\xi_t + F_t = \frac{F_t-FF_{\xi}}{\Delta}, 
\eeq
and 
\begin{equation}
\label{uxy}
u_x=\dfrac{F_\xi}{\Delta},\quad u_{y}=\dfrac{F_y}{\Delta}. 
\end{equation}

Now the Hopf equation becomes 
\beq
0 = u_t + uu_x = \dfrac{F_t}{\Delta}, 
\label{Hopf_tr}
\eeq
which confirms that $F$ is time-independent in this case. Differentiating
(\ref{uxy}) a second time, we find
\[
u_{yy} = \left(\frac{F_y}{\Delta}\right)_y - 
\left(\frac{F_y}{\Delta}\right)_{\xi} 
\frac{t F_y}{\Delta} = \frac{1}{\Delta}\left[
F_{yy} - \left(\frac{t F_y^2}{\Delta}\right)_{\xi}\right],
\]
after some manipulations. But this means that if $u(x,y,t)$ satisfies 
the dKP equation (\ref{dKP}), $F(\xi,y,t)$ satisfies  (\ref{eqF}) 
with initial condition (\ref{initial1}). 
\end{proof}

We rewrite the equation (\ref{eqF}) in the evolutionary form
\[
F_t=\partial_{\xi}^{-1}F_{yy}+t(\partial_{\xi}^{-1}F_{yy} F_{\xi}-F_y^2)
\]
where $ \partial_{\xi}^{-1}$ is the inverse of a derivation. We observe from the above equation that the nonlinear terms
are multiplied by the time $t$ and this show that for small times the nonlinear effects are damped.
This observation  qualitatively  explains the fact that the function $F(\xi,y,t)$ develops a  singularity after $u(x,y,t)$ becomes singular.

For the remainder of this paper
we will focus on solutions to the transformed equation (\ref{eqF}). 
We observe that (\ref{eqF}) also conserves the integrals over 
$F$ and $F^2$, which we will use to check our numerics. Namely 
for $n$ integer one has 
\[
\int_{\mathbb{R}^2}u^n dx dy=\int_{\mathbb{R}^2}F^n \Delta d\xi dy = 
\int_{\mathbb{R}^2}F^n d\xi dy + 
\frac{t}{n+1}\int_{\mathbb{R}^2}(F^{n+1})_{\xi} d\xi dy = 
\int_{\mathbb{R}^2}F^n d\xi dy.
\]
In particular, conservation of the  $L^2$  norm (\ref{mass})  gives the  
constraint
\beq
\label{L2}
\int_{\mathbb{R}^2}F^2(\xi,y,t)d\xi dy=\int_{\mathbb{R}^2}u_0^2(x,y,0)dx dy.
\eeq

The transformation (\ref{implicit0}) has been constructed so as to 
unfold the overturned profile onto the initial condition in the case
of a $y$-independent initial condition. It is thus intuitive 
that if the overturning is modulated in the $y$-direction, it is 
unfolded onto a function $F(\xi,y,t)$ which shows no overturning, 
and having a weak dependence on $y$ and $t$ only.

\section{Overturning of the profile}
\label{sec:overturning}
\begin{figure}
\centering
   \includegraphics[width=0.49\textwidth]{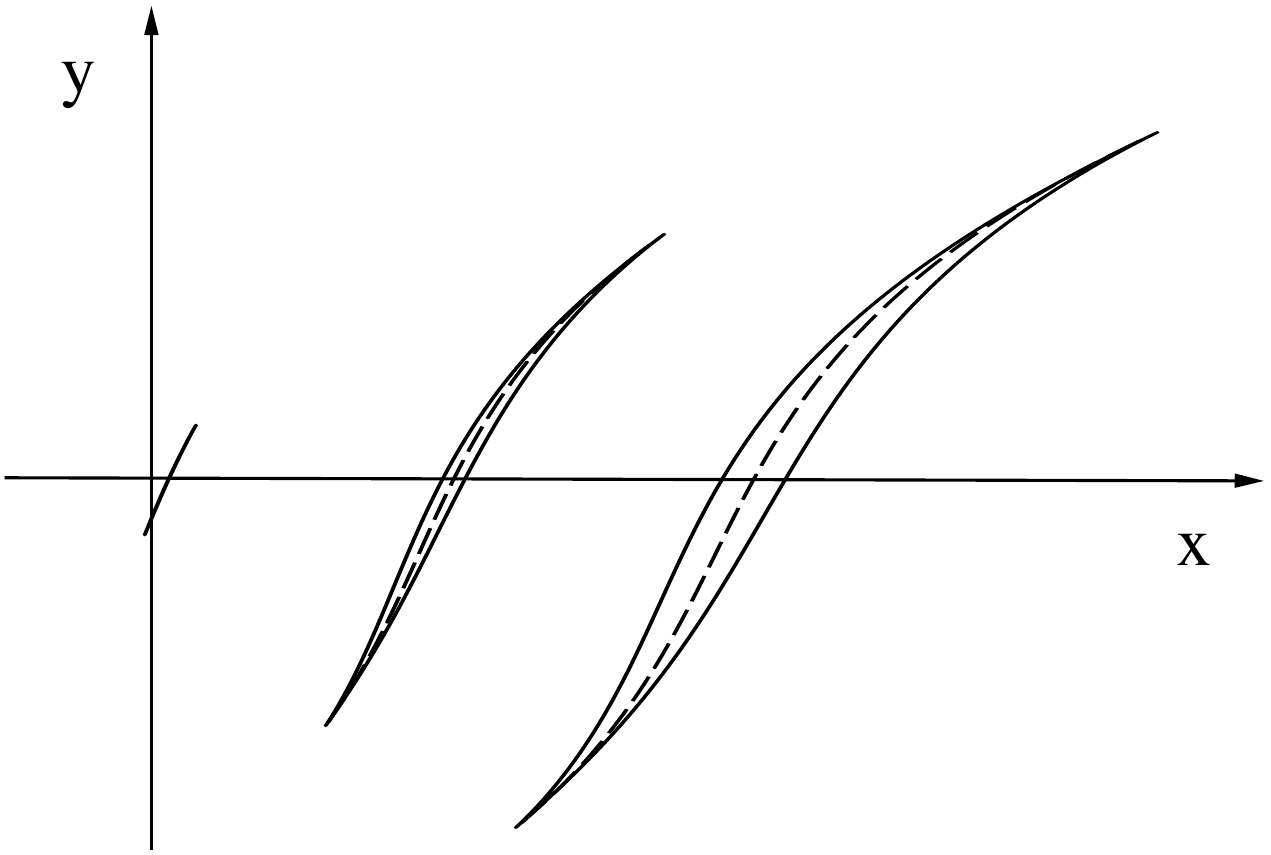} 
  \caption{A typical sequence of wave breaking, as described by 
(\ref{eqS}), showing the lip-shaped domain inside which the wave overturns.
The singularity first appears at the origin, then spreads
rapidly in the direction (\ref{transversal}). The scale of the
lip is $\bar{t}^{3/2}$ in the $x$-direction, and $\bar{t}^{1/2}$ 
in the $y$-direction. Full lines are solutions of 
$\Delta = 0$ at $\bar{t}= 0.01, 0.1$, and $0.4$, while the dashed line
is the shock front, which has to be inserted in accordance with the 
shock condition (\ref{shock_cond_final}), to be discussed in 
Section~\ref{sec:shock} below. 
}
 \label{fig:lip}
   \end{figure}
For generic initial data the solution of the dKP equation 
encounters a gradient catastrophe at points  where the transformation
(\ref{implicit0}) is not invertible \cite{MS08}
\begin{equation}
\label{eqS}
\Delta(\xi,y,t)\equiv 1+tF_{\xi}(\xi,y,t)=0,
\end{equation}
and  consequently  the gradients $u_x$ and $u_y$ go to infinity, cf.
(\ref{uxy}). This is illustrated in Fig.~\ref{fig:lip} for 
generic initial data, based on the local description to be developed below.
The singular time $t_c$ where the gradient catastrophe occurs first is 
the smallest $t$ such that (\ref{eqS}) holds. Since for $t<t_c$ the 
quantity $\Delta(\xi,y,t)$ has a definite sign in the $\xi$ and $y$ plane, 
$\Delta(\xi,y,t_c)$ must be a zero as well as an extremum:
$\Delta = \Delta_{\xi} = \Delta_y = 0$. Thus the two-dimensional 
gradient catastrophe is characterized by the equations:
\begin{equation}
\label{GC}
\begin{split}
&1+tF_{\xi}(\xi,y,t)=0\\
&F_{\xi\xi}=0\\
&F_{\xi y}=0\\
&u(x,y,t)=F(\xi,y,t)\\
&x=tF(\xi,y,t)+\xi.
\end{split}
\end{equation}
The first three equations of (\ref{GC}) determine the coordinates
$\xi_c,y_c$, and $t_c$ of the singularity in transformed variables,
taken as the origin in Fig.~\ref{fig:lip}.
The $x$ and $u$ coordinates are recovered by substitution into 
the last two equations. One finds that 
$(tF_y\partial_x+\partial_y)u(x,y,t)=F_y<\infty$, hence there is 
no gradient catastrophe in the transversal direction characterized 
by the vector field
\beq
tF_y {\bf e}_x + {\bf e}_y,
\label{transversal}
\eeq
see Fig.~\ref{fig:lip}. 
For generic  initial conditions, the second derivatives of $\Delta$
will be nonzero at the gradient catastrophe:
\beq
F_{\xi\xi\xi}(\xi_c,y_c,t_c)\neq 0,\quad F_{\xi\xi y }(\xi_c,y_c,t_c)\neq 0 
\quad F_{\xi yy}(\xi_c,y_c,t_c)\neq 0.
\label{generic}
\eeq
The conditions (\ref{GC}),(\ref{generic}) correspond to a cusp 
singularity in the notation of \cite{Alinhac95}, and will be found to 
describe the generic singularity for the dKP solution.
The condition that $F$ remains smooth, and thus the right hand side
$F_{yy}$ of (\ref{eqF}) is finite, results in the additional constraints
\begin{equation}
\label{constr}
F_t^c+t_c(F_y^c)^2=0,\quad F_{\xi t}^c=0,\quad F_{ty}^c+2t_cF^c_{yy}F_y^c=0,
\end{equation} 
where with a super-script we indicate the derivatives evaluated 
at the critical point.

We now give a local description of the two-dimensional wave 
front $u(x,y,t)$, based on expanding $F(\xi,y,t)$ near the 
gradient catastrophe described by (\ref{GC}). Our numerical 
simulations confirm that $F(\xi,y,t)$  remains smooth in the $(\xi,y)$ plane  not only near the 
first singularity, but well beyond. The region where the 
wave is multivalued has the typical lip shape also seen in 
the caustic surface of light waves near the cusp catastrophe
\cite{Nye99}. We will show them to be self-similar with 
width $\bar{t}^{3/2}$ in the horizontal direction and $\bar{t}^{1/2}$ in 
the transversal direction where $\bar{t}=t-t_c$, as done in \cite{MS08}. 
The same scalings have been observed in \cite{PLGG08} in the context of 
the 2-dimensional kinematic wave equation.

In order to illustrate the way in which (\ref{implicit0}) unfolds 
the singularity, it is instructive to consider a family of exact
solutions to (\ref{dKP}) obtained in \cite{MS11}:
\beq
u(x,y,t)=\dfrac{1}{\sqrt{t}}B\left(x-\dfrac{y^2}{4t}-2ut\right),
\label{exact_dKP}
\eeq
where $B$ is an arbitrary function of one variable. The validity of
(\ref{exact_dKP}) can be checked explicitly by substitution. Clearly 
(\ref{exact_dKP}) can be re-parameterized in the form 
\beq
u(x,y,t)=\dfrac{1}{\sqrt{t}}B\left(\zeta-\dfrac{y^2}{4t}\right),\quad 
x=2\sqrt{t}B(\zeta-\dfrac{y^2}{4t})+\zeta, 
\label{trans_MS}
\eeq
which shows that $\tilde{F}$, as defined by (\ref{F_MS}), is
\[
\tilde{F}(\zeta,y,t) = \dfrac{1}{\sqrt{t}}B(\zeta-\dfrac{y^2}{4t}).
\]

A singularity in the dKP solution occurs when the second 
equation in (\ref{trans_MS}) is no longer invertible; the first 
time this occurs is the critical time $t_c>0$, determined by
\[
\sqrt{t_c}=\min_{\zeta\in \mathbb{R}}\left(-\dfrac{1}{2B_\zeta}\right).
\]
In order to write (\ref{exact_dKP}) in terms of our 
function $F(\xi,y,t)$, we use the double re-parameterisation:
\begin{align}
\label{transfA}
&u(x,y,t)=F(\xi,y,t),\quad x=tF(\xi,y,t)+\xi,\\
\label{transfB}
& F(\xi,y,t)=\dfrac{1}{\sqrt{t}} B(\zeta-\dfrac{y^2}{4t}),\quad 
\xi=\sqrt{t}B(\zeta-\dfrac{y^2}{4t})+\zeta.
\end{align}
We observe that $F(\xi,y,t)$ has a singularity when the second 
transformation in (\ref{transfB}) is no longer invertible, namely at 
a critical time for the transformed equation (\ref{eqF}) 
\[
t_c^{(F)} = 4 t_c. 
\]
 It is also straightforward to check 
that at $t_c$, $F(\xi,y,t)$ satisfies the constraints (\ref{constr}).
Indeed one calculates directly from (\ref{transfB}) that 
\[
F_t=-\dfrac{B}{2t^{\frac{3}{2}}}\dfrac{2\sqrt{t}B'+1}{\sqrt{t}B'+1} + 
\dfrac{y^2B'}{4t^{\frac{5}{2}} }\dfrac{1}{\sqrt{t}B'+1},\quad 
F_y=-\dfrac{yB'}{2t^{\frac{3}{2}}(\sqrt{t}B'+1)},
\]
so at the critical time one obtains  the relations 
\[
F_t^c=-\dfrac{y_c^2}{4t^3_c},\quad F_y^c=\dfrac{y_c}{2t_c^2},
\]
which satisfy the first of the constraints in (\ref{constr}); the
remaining  constraints (\ref{constr}) are checked analogously. 

\subsection{Local analysis}
\label{sub:local}
In order to study the solution near the gradient catastrophe we 
expand the generalized characteristic equation (\ref{implicit0}) in
a Taylor series near $t_c$, $x_c,$ $y_c,$ $u_c$ and $\xi_c$.  Part of the analysis below is already contained in \cite{MS08}, \cite{MS12}.
 Introducing variables relative to the singularity
as
\[
\bar{x}:=x-x_c\quad \bar{t}:=t-t_c,\quad\bar{y}:= y-y_c,
\quad \bar{\xi}:\xi-\xi_c,
\]
we have argued that $\bar{x}\sim \bar{t}^{3/2}$ and 
$\bar{y}\sim \bar{t}^{1/2}$. Since $\Delta u\sim\bar{t}^{1/2}$, it 
follows from the first equation of (\ref{implicit0}) that 
$\bar{\xi}\sim\bar{t}^{1/2}$. Thus to be consistent, we include 
all terms up to $O(\bar{t}^{3/2})$:
\begin{equation}
\begin{split}
\label{char1}
\bar{x}&=\bar{t}(F^c+t_cF_t^c)+\bar{t}\bar{y}(F_y^c+t_cF^c_{yt}) +
t_c\left(F^c_{y}\bar{y}+\dfrac{1}{2}F^c_{yy}\bar{y}^2 +
\dfrac{1}{6}F_{yyy}^c\bar{y}^3\right) \\
&+\dfrac{t_c}{6}F^c_{\xi\xi\xi}\bar{\xi}^3+
\dfrac{t_c}{2}F^c_{\xi\xi y}\bar{y}\bar{\xi}^2+
\left(\frac{t_c}{2}F_{\xi yy}^c\bar{y}^2 + F_{\xi}^c\bar{t}\right)\bar{\xi}
+o(\bar{t}^2,\bar{y}^4,\bar{\xi}^4,
\bar{t}(\bar{y}^2+\bar{\xi}^2)).
\end{split}
\end{equation}

This suggests introducing the shifted variables 
(using $t_c = -1/F_{\xi}^c$):
\begin{equation}
\label{XT}
\begin{split}
&\zeta=F^c_\xi\left(\bar{\xi}+
\dfrac{F^c_{\xi\xi y}}{F^c_{\xi\xi\xi}}\bar{y}\right)\\
&X=\frac{1}{k}\left[\bar{x}-\bar{t}(F^c+t_cF_t^c)-
\bar{t}\bar{y}(F_y^c+t_cF^c_{yt})-t_c\left(F^c_{y}\bar{y}+
\dfrac{1}{2}F^c_{yy}\bar{y}^2 + \dfrac{1}{6}F_{yyy}^c\bar{y}^3\right)\right.\\
&\left.-\frac{1}{3}t_c\dfrac{(F^c_{\xi\xi y})^3}{(F^c_{\xi\xi\xi})^2}\bar{y}^3+
\frac{1}{2}t_c\dfrac{F^c_{\xi\xi y}F^c_{\xi y y}}{F^c_{\xi\xi\xi}}\bar{y}^3+
F^c_\xi\dfrac{F^c_{\xi\xi y}}{F^c_{\xi\xi\xi}}\bar{y}\bar{t}\right]\\
&T=\frac{1}{k}\left[\bar{t}+\frac{t^2_c}{2}\bar{y}^2
\left(\dfrac{(F^c_{\xi\xi y})^2}{F^c_{\xi\xi\xi}}-F^c_{\xi yy}\right)\right],\\
&k=\dfrac{t_c^4 F^c_{\xi\xi\xi}}{6}
\end{split}
\end{equation} 
so that in the variable $\zeta$, (\ref{char1}) takes the form
\begin{equation}
\label{s_orig}
-\zeta^3 + T\zeta=
X+o(\bar{t}^2,\bar{y}^4,\bar{\xi}^4,\bar{t}(\bar{y}^2+\bar{\xi}^2)).
\end{equation}

\begin{figure}
\centering
   \includegraphics[width=0.7\textwidth]{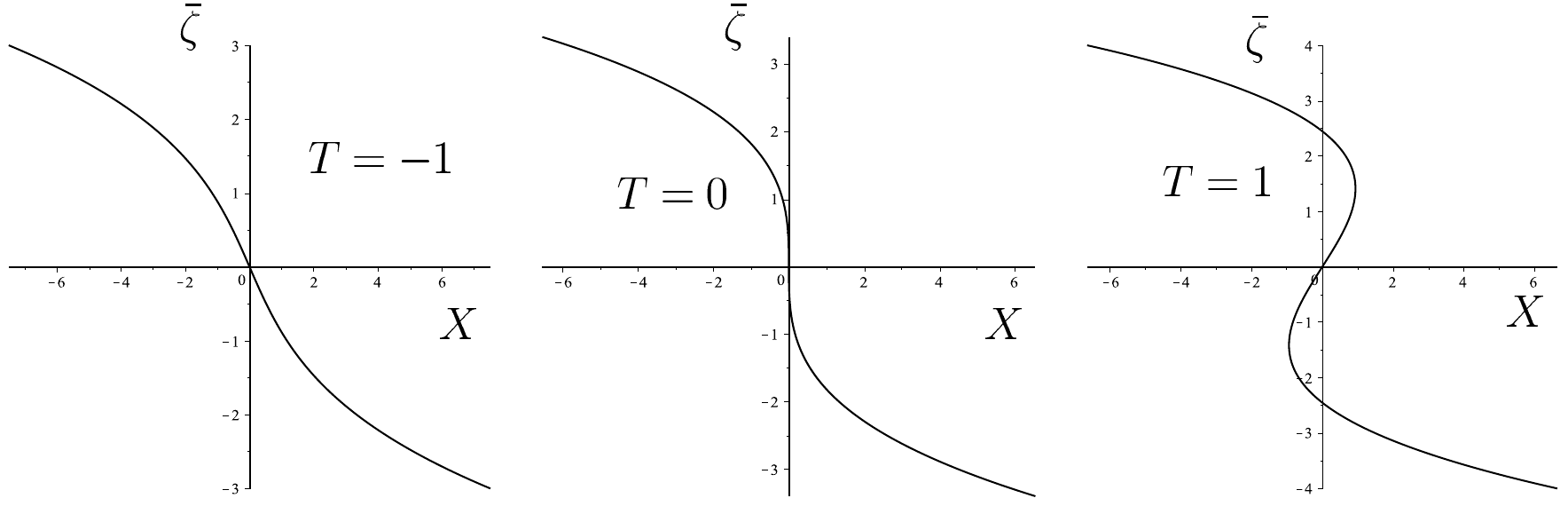} 
  \caption{The universal s-curve described by 
(\ref{char2}); for $T>0$ the profile turns over to form a 
multivalued region. 
}
 \label{fig:s}
   \end{figure}
Using the estimates $\bar{\xi}\sim\bar{y}\sim\bar{t}^{1/2}$ and 
$\bar{x}\sim\bar{t}^{3/2}$ identified previously, the scaling
\begin{equation}
\label{scaling_lambda}
\begin{split}
&X\to \lambda X\\
&\bar{t}\to\lambda^{\frac{2}{3}} \bar{t}\\
&\bar{y}\to\lambda^{\frac{1}{3}}\bar{y}\\
&\zeta\to\lambda^{\frac{1}{3}}\zeta,
\end{split}
\end{equation}
in the limit $\lambda\to 0$
reduces (\ref{s_orig}) to the 
universal s-curve
\begin{equation}
\label{char2}
-\zeta^3+T\zeta=X
\end{equation}
shown in Fig.~\ref{fig:s}. It is easy to confirm that 
the function $\zeta(X,T)$ defined by (\ref{char2}) solves 
\begin{equation}
\label{Hopf}
\zeta_T + \zeta\zeta_{X}=0
\end{equation}
with initial condition $\zeta(X,T=0)=(-X)^{\frac{1}{3}} $, 
completing our task of reducing (\ref{dKP}) locally to the Hopf
equation. A gradient catastrophe is encountered for 
$X=0$, $T=0$, and $\zeta=0$.

Using the identities (\ref{constr}), we can now calculate the 
solution to the dKP equation (\ref{dKP}), valid near the singularity. 
To leading order in the limit $\lambda \rightarrow 0$, it is consistent 
to expand $u(x,y,t)$ to linear order in $\bar{\xi},\bar{y}$:
\beq
u(x,y,t)-u_c = F(\xi,y,t) - F^c \simeq F_{\xi}^c\bar{\xi}+F_y^c\bar{y}=
\zeta(X,T) + \bar{\beta}\bar{y},
\label{u_exp}
\eeq
with 
\begin{equation}
\label{beta}
\bar{\beta} = F_y^c - \frac{F_{\xi}^cF_{\xi\xi y}^c}{F_{\xi\xi\xi}^c}.
\end{equation}

Thus putting $\bar{u}\equiv u(x,y,t)-u_c$, from (\ref{char2}) 
we find the local profile to be an s-curve, which has the 
universal similarity form: 
\begin{equation}
\label{s_u}
-\left(\bar{u}-\bar{\beta}\bar{y}\right)^3 + 
T\left(\bar{u}-\bar{\beta}\bar{y}\right) = X. 
\end{equation}
This is the central result of our theoretical analysis; the
formula (\ref{s_u}) is the unfolding of an $A_2$ singularity.
It is a complete description of the self-similar  behavior of the 
dKP solution near its  singularity for generic initial data.
In the $y$-independent case, (\ref{s_u}) coincides with the usual 
result (\ref{S}). We now derive the form of this multivalued valued 
region in the $x,y$-plane, shown previously in Fig.~\ref{fig:lip}. 

\subsection{Multivalued region}
As seen in Fig.~\ref{fig:s}, the function $\zeta=\zeta(X,T)$,
described by the cubic equation (\ref{char2}), becomes multivalued for $T>0$. 
From ${\displaystyle \frac{\partial X}{\partial \zeta}=0}$ it follows
that 
\begin{equation}
\label{Txi}
T = 3\zeta^2, \quad \mbox{or}\quad
\zeta=\pm\sqrt{T/3},
\end{equation}
so that for $X$ in the interval
\[
-\dfrac{2}{3\sqrt{3}} T^{3/2}\leq X\leq \dfrac{2}{3\sqrt{3}} T^{3/2},
\]
the function $\zeta(X,T)$ is multivalued. 

Reversing the coordinate
transformations (\ref{XT}), we can write the first equation 
(\ref{Txi}) in the form
\begin{equation}
\label{t}
\bar{t}=\dfrac{1}{2}t_c\alpha\left(\bar{\xi}^2 + 
2\beta\bar{y}\bar{\xi}+\gamma\bar{y}^2\right),
 \end{equation}
where we have introduced the constants
\begin{equation}
\label{alpha}
\alpha=t_cF^c_{\xi\xi\xi},\quad \beta=\dfrac{F^c_{\xi\xi y}}{F^c_{\xi\xi\xi}},
\quad \gamma=\dfrac{F_{\xi yy}^c}{F^c_{\xi\xi\xi}}.
\end{equation}
Alternatively, (\ref{t}) could also have been derived from 
(\ref{eqS}), and expanding $F$ in a power series around the singular
point. 

The $\bar{x}$-coordinate of the boundary of overturning can be found from 
(\ref{char1}), which using (\ref{t}) can be simplified to yield 
\eqa
\label{x}
&& \bar{x}=-\alpha\left(\dfrac{1}{3}\bar{\xi}^3 + 
\dfrac{\beta}{2}\bar{y}\bar{\xi}^2\right) + 
\bar{t}\left(F^c-t^2_c(F_y^c)^2\right) + \nonumber \\ 
&& \bar{t}\bar{y}\left(F_y^c-2t_c^2F^c_{yy}F_y^c\right) + 
t_c\left(F^c_{y}\bar{y} + \dfrac{1}{2}F^c_{yy}\bar{y}^2 + 
\dfrac{1}{6}F_{yyy}^c\bar{y}^3\right).
\eeqa
Equations (\ref{t}) and (\ref{x}) describe a curve in the 
$(\bar{x},\bar{y})$ plane, parameterized by $\bar{\xi}$. 
An example was shown previously in Fig.~\ref{fig:lip} for several 
values of $\bar{t}=t-t_c$, showing its characteristic ``lip'' shape
\cite{Arnold84}.

The overturned region starts from the singular point and then expands, 
as seen in Fig.~\ref{fig:lip}. To understand the scaling of this 
expansion, we introduce the independent variables
\begin{equation}
\label{X1Y1}
\left\{
\begin{array}{ll}
X_1&=\left[\bar{x}-\bar{t}(F^c-t^2_c(F_y^c)^2)-
t_c(F^c_{y}\bar{y}+\dfrac{1}{2}F^c_{yy}\bar{y}^2)\right]\bar{t}^{-3/2}\\
Y_1&=\bar{y}\bar{t}^{-1/2}.
\end{array}\right. 
\end{equation}
Then the lip described by equations (\ref{t}) and (\ref{x}) 
is reduced to the time-independent similarity form:
\begin{equation}
\label{slitX1Y1}
\left\{
\begin{array}{ll}
&\dfrac{1}{2}t_c\alpha\left(s^2+2\beta Y_1s+\gamma Y_1^2\right)=1\\
&X_1=-\alpha\left(\dfrac{1}{3}s^3+\dfrac{1}{2}\beta s^2 Y_1\right) + 
\delta_1 Y_1 + \dfrac{\delta_2}{6} Y_1^3,
\end{array}\right.
\end{equation}
with the additional constants 
\begin{equation}
\label{delta}
\delta_1=F_y^c-2t_c^2F^c_{yy}F_y^c, \quad
\delta_2 = t_c F_{yyy}^c .
\end{equation}

\begin{figure}
\centering
   \includegraphics[width=0.49\textwidth]{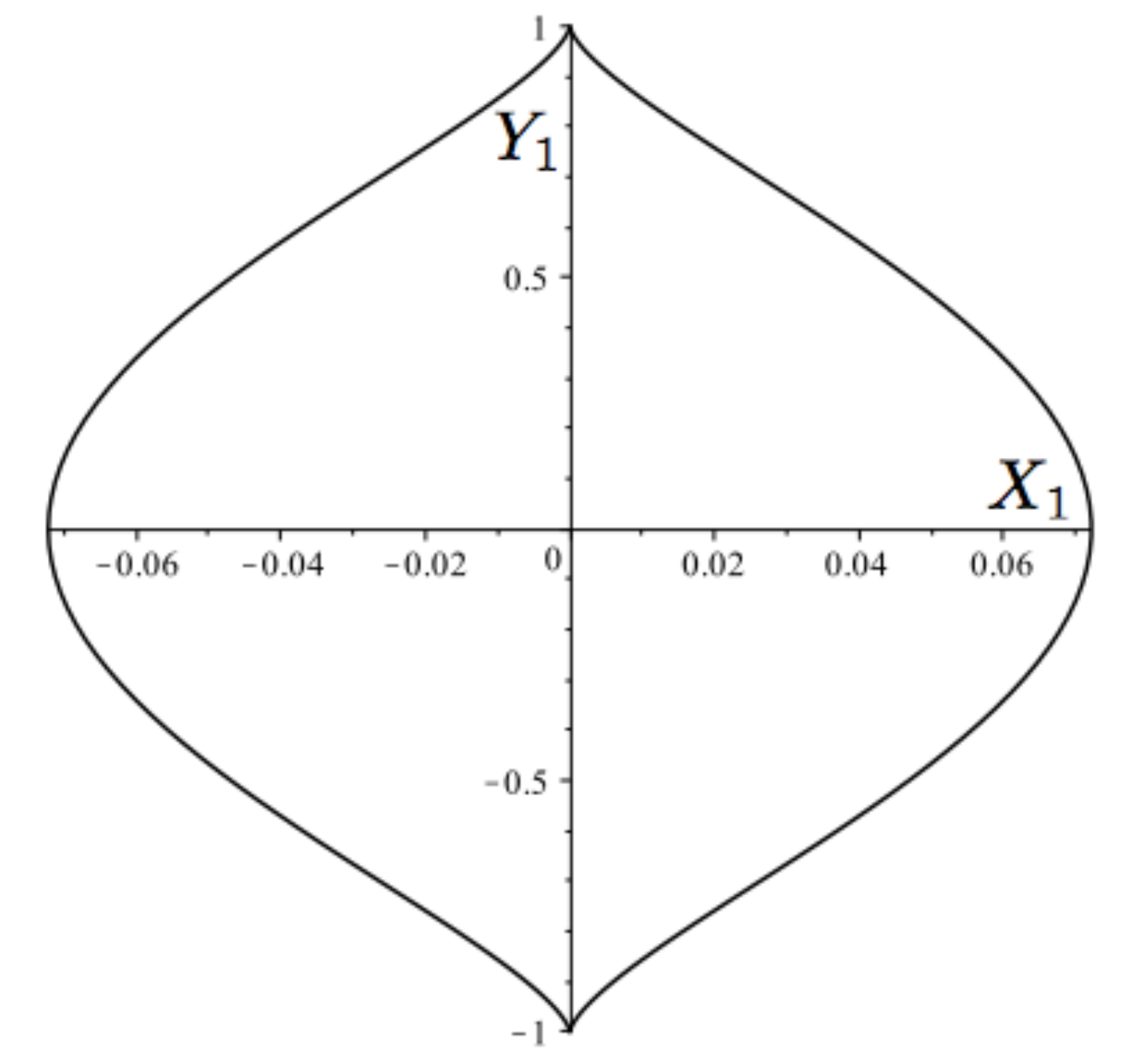} 
  \caption{The symmetric lip according to (\ref{lip_symm}), ending
in a cusp. 
}
 \label{fig:lip_symm}
   \end{figure}

This demonstrates that the gradient catastrophe in the dKP equation 
has a universal spatial signature, parameterized by the constants
$\alpha,\beta,\gamma,\delta_1$, and $\delta_2$, all of which can be 
computed in terms of the initial data and its derivatives at the 
point of gradient catastrophe $(x_c,t_c,y_c)$. The scalings introduced 
in (\ref{X1Y1}) imply that the lip expands as $\bar{t}^{3/2}$ in 
the propagation direction, and as $\bar{t}^{1/2}$ in the 
transversal direction, as announced previously. A characteristic 
feature is the cusp at the corner of the lip. This is seen most
easily for initial data which is even in $y$, for which the description 
simplifies considerably. All odd derivatives in $y$ vanish, and 
we obtain 
\begin{equation}
\label{lip_symm}
\left\{
\begin{array}{lll}
&&\dfrac{1}{2}t_c\alpha\left(s^2+\gamma Y_1^2\right)=1\\
&&X_1=-\dfrac{\alpha}{3}s^3,
\end{array}\right.
\end{equation}
shown in Fig.~\ref{fig:lip_symm}. Analyzing the neighborhood
of the point $s=0$, one finds directly that 
\begin{equation}
\label{lip_cusp}
X_1 = \pm\frac{\alpha}{3}(2\bar{Y}_1)^{3/2}\left(\bar{Y}_1-Y_1\right)^{3/2},
\quad \bar{Y}_1 = \sqrt{\frac{2}{t_c\alpha}},
\end{equation}
which is a generic 3/2 cusp \cite{EF13}. 

\section{Dissipative dKP equation and shock solutions}
\label{sec:shock}
The solution (\ref{s_u}) constructed in the previous subsection is 
unphysical for $\bar{t}>0$, in that it does not assign a unique 
value of $u$ to every point $x,y$ in the plane. In principle, 
one can construct an infinity of single-valued solutions from it,
by choosing different points at which to jump from one branch to the 
other. For conservation laws in many space dimensions, 
physically motivated constraints, known as generalized Rankine-Hugoniot 
jump conditions, have been introduced. As result, a weak solution of the 
equations (usually called the inviscid shock), is singled out 
uniquely \cite{Majda84,K70}.

Another way to select a unique solution after the singularity, is to consider
a dissipative version of (\ref{dKP}) with a viscous term added to
it, which keeps the solution regular at all times. In the limit 
of vanishing viscosity $\epsilon$ these regular solutions are expected to 
converge to (\ref{s_u}), with a particular jump condition being selected. 
In this case the shock is called the viscous shock. In the field of 
hyperbolic equations the problem of showing that the inviscid shock is 
equal to the viscous shock has generated a huge literature. We only mention
some important references in one dimension \cite{Bressan2005,Goodman92}, 
and many space dimensions \cite{Metivier2008}. Below we give an heuristic 
derivation of the equivalence of the inviscid and viscous shock for the 
dKP equation.

We consider the dissipative form of the dKP equation 
\begin{equation}
\label{DKP}
(u_t+uu_x-\epsilon(u_{xx} +c y_{yy}))_x=u_{yy},
\quad u(x,y,t=0,\epsilon))=u_0(x,y),
\end{equation}
with $c \geq 0$,
which satisfies 
\begin{equation}
\label{diss}
\frac{1}{2}\frac{\partial}{\partial t}\int_{\mathbb{R}^2} u^2(x,y,t)dxdy=
-\epsilon\left(u_x^2 +c u_y^2\right) < 0. 
\end{equation}
For given $\epsilon$-independent initial data, 
the solution  $u(x,y,t,\epsilon)$ of the dissipative equation (\ref{DKP})
is expected to be approximated as $\epsilon\to 0$ and $t<t_c$ by the 
solution $u(x,y,t)$ of the dKP equation (\ref{dKP}).

\subsection{Shock position}
On the other hand, (\ref{p}) is still satisfied at finite $\epsilon$, 
so (smooth) solutions of (\ref{DKP}) still conserve $u$ in the limit 
$\epsilon\to 0$. From the condition that $u$ be satisfied across a shock, 
we can use the generalized Rankine-Hugoniot jump conditions as in 
\cite{Majda84}, which determines the shock position 
(see also \cite{Chorin_Marsden}). 
Namely, if $v_n$ is the normal velocity of the shock, one obtains 
\beq
v_n(u_1-u_2) = 
\left({\bf f}\cdot{\bf n}\right)_1 - \left({\bf f}\cdot{\bf n}\right)_2,
\label{shock_cond_n}
\eeq
where ${\bf n}$ is the normal to the shock front, and indices 1 and 2 
denote values in front and in the back of the shock, respectively. 
Assuming that the shock position is given by the curve $x_s(\bar{y},\bar{t})$,
and using the flux ${\bf f}$ from (\ref{cons1}), this yields
\beq
\dot{x}_s(u_1-u_2) = \frac{u_1^2 - u_2^2}{2} + 
\frac{\partial x_s}{\partial y}\int_{x_2}^{x_1} u_y dx,
\label{shock_cond_x}
\eeq
where $x_{1/2}$ are $x$-values approaching the shock from the front 
and from behind, respectively. 

Now the singular contribution to $u$ 
across the shock can be written in  the form 
\[
u  = u_2+(u_1-u_2)\theta(\bar{x} - x_s(\bar{y},\bar{t})),
\]
where $\theta(x)$ is the Heaviside function and $u_{1,2}$ become 
functions only of $y$ and $t$ on the shock front 
$\bar{x}=x_s(\bar{y},\bar{t})$.
Hence
\[
u_y =(u_2)_y+(u_1-u_2)_y \theta(\bar{x} - x_s(\bar{y},\bar{t})) - 
(u_1-u_2)\frac{\partial x_s}{\partial y}
\delta(\bar{x} - x_s(\bar{y},\bar{t})),
\]
and from (\ref{shock_cond_x}) the jump condition at the 
shock finally becomes 
\beq
\dot{x}_s = \frac{u_1 + u_2}{2} - \left(\frac{\partial x_s}{\partial y}
\right)^2.
\label{shock_cond_final}
\eeq
Note that the shock speed in the $x$-direction is not only 
an average between $u$-values in front and in the back of the 
shock as for the Hopf equation, but on account of the right hand 
side of (\ref{dKP}) an additional term arises. 

Since we have mapped (\ref{dKP}) locally to the Hopf equation (\ref{Hopf}),
standard theory \cite{Chorin_Marsden} tells us that the shock 
should be at $X = 0$, according to (\ref{XT}) the equation for the 
front becomes
\eqa
&& x_s(\bar{y},\bar{t}) = \bar{t}(F^c+t_cF_t^c) + 
\bar{t}\bar{y}(F_y^c+t_cF^c_{yt}) + 
t_c\left(F^c_{y}\bar{y}+\dfrac{1}{2}F^c_{yy}\bar{y}^2 + 
\dfrac{1}{6}F_{yyy}^c\bar{y}^3\right) + \nonumber \\
&& \frac{1}{3}t_c\dfrac{(F^c_{\xi\xi y})^3}{(F^c_{\xi\xi\xi})^2}\bar{y}^3 -
\frac{1}{2}t_c\dfrac{F^c_{\xi\xi y}F^c_{\xi y y}}{F^c_{\xi\xi\xi}}\bar{y}^3 -
F^c_\xi\dfrac{F^c_{\xi\xi y}}{F^c_{\xi\xi\xi}}\bar{y}\bar{t}. 
\label{front_s}
\eeqa
This equation indeed satisfies (\ref{shock_cond_final}) to leading order, 
since 
\beq
\dot{x}_s(\bar{y},\bar{t}) = F^c+t_cF_t^c + \bar{y}t_cF^c_{yt}
+ \bar{\beta}\bar{y}
\label{xst}
\eeq
and 
\beq
\left(\frac{\partial x_s}{\partial y}\right)^2 = 
\left[t_c\left(F^c_{y} + F^c_{yy}\bar{y}\right) + O(\bar{t})\right]^2
= -t_cF_t^c - t_cF_{ty}^c\bar{y} + O(\bar{t}),
\label{xsy}
\eeq
having used (\ref{constr}). 
On the other hand, $\zeta_{\pm} = \pm\sqrt{T}$ at 
$X=0$, and so according to (\ref{u_exp}) 
\[
\frac{u_1 + u_2}{2} = F^c + \bar{\beta}\bar{y}.
\]
Combining the last three equations one can see that 
the approximate shock front (\ref{front_s})
satisfies (\ref{shock_cond_final}) to leading order. In 
Fig.~\ref{fig:lip} we have plotted  (\ref{front_s}) as the dashed line. 

\subsection{Shock structure}
Having found the shock position, we now investigate the inner structure
of the shock, in case a small amount of viscosity is present. This is 
achieved by mapping (\ref{DKP}) onto Burgers' equation \cite{Whithambook},
which in addition to (\ref{Hopf}) contains a dissipative contribution. 
We are looking for a solution $u(x,y,t;\epsilon)$ of the dissipative dKP 
equation near the gradient catastrophe $(x_c,y_c,t_c)$ of the 
(inviscid) dKP equation. To this end we use the ansatz
\beq
u(x,y,t;\epsilon)= u_c+h(X,T;\epsilon)+\bar{y}\bar{\beta},
\label{multiscale}
\eeq
with $X$ and $T$ defined in (\ref{XT}). Using the same scalings as
before, and balancing $u_t \propto \epsilon u_{xx}$, we are 
led to the multiscale expansion 
\[
h(X,T;\epsilon)=\lambda^{\frac{1}{3}}H({\cal X},{\cal T};\varepsilon)
+O(\lambda^{\alpha}),\quad \alpha >\frac{1}{3},\]
\begin{equation}
\label{subs}
X=\lambda{ \cal X},\quad T=\lambda^{\frac{2}{3}}{ \cal T},\quad 
\epsilon=\lambda^{\frac{4}{3}}\varepsilon,\quad 
\bar{y}=\lambda^{\frac{1}{3}}{\cal Y},
\end{equation}
and find the following theorem:
\begin{theorem}\label{Theo1}
Let $u(x,y,t;\epsilon)= u_c+h(X,T;\epsilon)+\bar{y}\bar{\beta}$
be a solution of the dissipative dKP equation (\ref{DKP}) with 
$X$ and $T$ defined in (\ref{XT}).
Suppose that  for  $|t-t_c|$ small the limit
\[
H({ \cal X}, {\cal T};\varepsilon)=
\lim_{\lambda\to 0}\lambda^{-\frac{1}{3}}h(\lambda{ \cal X},\,
\lambda^{2/3}{\cal T};\lambda^{\frac{4}{3}}\varepsilon)\]
exists and the function $H({ \cal X}, {\cal T};\varepsilon)$ satisfies the asymptotic conditions
\begin{equation}
\label{asymp}
H({ \cal X}, {\cal T};\varepsilon)=\mp|{\cal X}|^{\frac{1}{3}}\mp 
\dfrac{{\cal T}}{3}|{\cal X}|^{-\frac{1}{3}}+O(|{\cal X}|^{-\frac{5}{3}}),
\quad |{\cal X}|\to\infty
\end{equation}
for each fixed ${\cal T}\in\mathbb{R}$.
Then the function $H({\cal X},{\cal T};\varepsilon)$ satisfies 
the Burgers equation
\begin{equation}
\label{lim2}
H_{\cal{T}}+HH_{\cal{X}}=\sigma
H_{\cal{X}\cal{X}},\quad \sigma=\dfrac{\varepsilon}{k} 
\left(1 +c (t_cF_y^c)^2\right)
\end{equation}
with $k$ defined in (\ref{XT}).
\end{theorem}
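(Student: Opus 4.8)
The plan is to substitute the multiple-scale ansatz (\ref{multiscale}) into the evolutionary form of the dissipative dKP equation and to read off, order by order in $\lambda$, the equation forced on the limit profile $H$. Writing (\ref{DKP}) in the nonlocal form $u_t+uu_x-\partial_x^{-1}u_{yy}=\epsilon(u_{xx}+cu_{yy})$ (cf.\ (\ref{dKP1})), I would first compute all derivatives of $u=u_c+h(X,T;\epsilon)+\bar\beta\bar y$ by the chain rule through the change of variables (\ref{XT}). The two facts that make this tractable are that $X$ is affine in $\bar x$ with $X_x=1/k$, $X_{xx}=0$, and that $T_x=0$, $T_t=1/k$; hence $u_x=h_X/k$, $u_{xx}=h_{XX}/k^2$, $u_t=h_XX_t+h_T/k$, while $u_y=h_XX_y+h_TT_y+\bar\beta$ carries the $\bar y$-dependence of $X,T$.

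The first key step is to simplify the convective combination. From (\ref{XT}) one gets $kX_t=-(F^c+t_cF^c_t)-\bar y(F^c_y+t_cF^c_{yt})+F^c_\xi\frac{F^c_{\xi\xi y}}{F^c_{\xi\xi\xi}}\bar y$, and using $t_c=-1/F^c_\xi$ together with the definition (\ref{beta}) of $\bar\beta$, this collapses to $kX_t+u_c+\bar\beta\bar y=-t_cF^c_t-t_cF^c_{yt}\bar y$. The constraints (\ref{constr}), namely $F^c_t=-t_c(F^c_y)^2$ and $F^c_{ty}=-2t_cF^c_{yy}F^c_y$, then turn the convective part into $u_t+uu_x=\frac1k(h_T+hh_X)+\frac{h_X}{k}\big(t_c^2(F^c_y)^2+2t_c^2F^c_{yy}F^c_y\,\bar y\big)$. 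The second key step is to show the bracketed remainder is cancelled by the nonlocal term: since to leading order $u_{yy}=X_y^2\,h_{XX}+\dots$ with $X_y\to -t_cF^c_y/k$, one finds $u_{yy}=\frac{t_c^2(F^c_y)^2}{k^2}h_{XX}+\frac{2t_c^2F^c_yF^c_{yy}}{k^2}\bar y\,h_{XX}+O(\lambda^{-1})$, so that $\partial_x^{-1}u_{yy}$ reproduces exactly $\frac{h_X}{k}(t_c^2(F^c_y)^2+2t_c^2F^c_{yy}F^c_y\bar y)$. It is cleanest to verify this on the differentiated equation $(\cdots)_x=u_{yy}$, where $\partial_x=\frac1k\partial_X$ matches the two $u_{yy}$ terms against $\frac1k\partial_X$ of the remainder without ever inverting $\partial_x$.

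With these cancellations in place, only the genuine Hopf part $\frac1k(h_T+hh_X)$ and the dissipation survive. Inserting the scalings (\ref{subs}), the dominant dissipative contribution comes from $\epsilon u_{xx}=\frac{\varepsilon}{k^2}\lambda^{-1/3}H_{\mathcal{X}\mathcal{X}}$ and from the leading part of $\epsilon c\,u_{yy}=\frac{\varepsilon c\,t_c^2(F^c_y)^2}{k^2}\lambda^{-1/3}H_{\mathcal{X}\mathcal{X}}$, while $\frac1k(h_T+hh_X)=\frac1k\lambda^{-1/3}(H_{\mathcal{T}}+HH_{\mathcal{X}})$; all three sit at order $\lambda^{-1/3}$. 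Cancelling this common power and multiplying by $k$ yields $H_{\mathcal{T}}+HH_{\mathcal{X}}=\sigma H_{\mathcal{X}\mathcal{X}}$ with $\sigma=\frac{\varepsilon}{k}(1+c(t_cF^c_y)^2)$, as claimed; note that the transverse diffusion $c\,u_{yy}$ enters $\sigma$ precisely through the geometric factor $X_y^2\to (t_cF^c_y/k)^2$.

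The step I expect to be the main obstacle is the careful treatment of the nonlocal term $\partial_x^{-1}u_{yy}$. Inverting $\partial_x$ leaves an $X$-independent function of $(\bar y,\bar t)$ undetermined, and the whole balance at order $\lambda^{-2/3}$ (and the $\bar y$-linear cancellation at order $\lambda^{-1/3}$) hinges on this constant being zero. This is exactly what the assumed far-field behaviour (\ref{asymp}) supplies: it forces $H$ to match the inviscid cubic solution of (\ref{char2})—equivalently the Hopf solution (\ref{Hopf})—as $|\mathcal{X}|\to\infty$, pinning the integration constant and guaranteeing that the inner viscous profile connects consistently to the outer overturned solution (\ref{s_u}). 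Working with the differentiated equation circumvents the explicit inversion, but the same matching condition is needed to discard the surviving $X$-independent constant of integration.
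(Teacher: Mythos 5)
Your proposal follows essentially the same route as the paper: substitute the multiple-scale ansatz (\ref{multiscale}) into the $x$-differentiated dissipative dKP equation, use the chain rule through (\ref{XT}) together with the constraints (\ref{constr}) to cancel the convective remainder against the nonlocal term, identify the surviving $O(\lambda^{-1/3})$ balance as the $\mathcal{X}$-derivative of Burgers' equation with $\sigma=\frac{\varepsilon}{k}(1+c(t_cF_y^c)^2)$, and fix the integration constant by the far-field condition (\ref{asymp}). Your explicit tracking of where each cancellation comes from (in particular $kX_t+u_c+\bar\beta\bar y$ collapsing via $t_c=-1/F_\xi^c$ and the matching of $X_y^2$ against the transverse diffusion) is a correct unpacking of the computation the paper compresses into a single display.
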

\begin{proof}
Inserting (\ref{multiscale}) into the dissipative dKP equation one obtains
\begin{multline}
(H_{\cal{T}}+HH_{\cal{X}}-\dfrac{\varepsilon}{k} \left(1 + (t_cF_y^c)^2\right)
H_{\cal{X}\cal{X}})_{{\cal X}}+
\lambda^{-\frac{1}{3}}H_{{\cal X}{\cal X}}\left( \dfrac{\partial X}{\partial t}-
\left(\dfrac{\partial X}{\partial y}\right)^2 + 
F_c+\bar{y}(F^c_y-F^c_\xi\dfrac{F^c_{\xi\xi y}}{F^c_{\xi\xi\xi}})\right)\\
=H_{\cal{T}\cal{X}}\dfrac{\partial T}{\partial y}\dfrac{\partial X}{\partial y}+
\lambda^{\frac{1}{3}}H_{\cal{T}\cal{T}}
\left(\dfrac{\partial T}{\partial y}\right)^2+
\lambda  k H_{{\cal T}}\dfrac{\partial^2 T}{\partial y^2} + 
\lambda^{2/3}kH_{\cal{X}}\dfrac{\partial^2  X}{\partial y^2} - 
\varepsilon\left(H_{{\cal X}{\cal X}}
\left(\left(\dfrac{\partial X}{\partial y}\right)^2-
(t_cF_y^c)^2\right)\right)\\
-\frac{\varepsilon}{k}\left(\lambda^{1/3} H_{\cal{T}\cal{X}}
\dfrac{\partial T}{\partial y}\dfrac{\partial X}{\partial y}+
\lambda^{2/3}H_{\cal{T}\cal{T}}\left(\dfrac{\partial T}{\partial y}\right)^2+
\lambda^{4/3}k H_{{\cal T}}\dfrac{\partial^2  T}{\partial y^2}+
\lambda k H_{\cal{X}}\dfrac{\partial^2  X}{\partial y^2}\right)_{{\cal X}}.
\end{multline}
Using  (\ref{XT}), the constraints (\ref{constr}), and the substitution 
(\ref{subs}) one arrives at the relation
\[
(H_{\cal{T}}+HH_{\cal{X}}-\frac{\varepsilon}{k} \left(1 + c(t_cF_y^c)^2\right)
H_{\cal{X}\cal{X}})_{{\cal X}}=O(\lambda^{\frac{1}{3}}),
\]
which in the limit $\lambda\to 0$ shows that the derivative of 
(\ref{lim2}) is equal to zero.
 In order to fix the integration constant  we use the asymptotic condition (\ref{asymp}).
\end{proof}

We remark that the asymptotic condition (\ref{asymp}) implies that the  local solution near the point of  singularity formation, matches the outer solution 
given by (\ref{char2}) and (\ref{u_exp}).
We conclude that near the gradient catastrophe, up to the constant term $u_c$ 
as well as a term linear in $y$, in a suitable co-ordinate system
the solution to the dissipative dKP equation reduces to the solution of 
the one-dimensional Burgers equation. We will argue below that the 
particular solution to the Burgers equations relevant near the critical 
point, and described by the asymptotic form (\ref{asymp}), also satisfies 
the equation
\begin{equation}
\label{eqPearcey}
{\cal X}=H{\cal T}-H^3+6\sigma HH_{{\cal X}}-4\sigma^2 H_{{\cal X}{\cal X}}.
\end{equation}
\subsubsection{Burgers equation.}
To find the local solution near the shock, let us recall the solution 
to Burgers' equation 
\begin{equation}
\label{Burgers}
v_t+vv_x= \nu v_{xx},
\end{equation} 
with initial data $v_0(x)$, where 
$\nu $ is a positive constant. 
An exact solution is obtained via the Cole-Hopf transformation 
\cite{Hopf54},\cite{Whithambook} to give the formula:
\beq
v(x,t,\nu)=-2\nu\partial_x\log\int_{-\infty}^{\infty}
e^{-\frac{G(\eta,x,t)}{2\nu}}d\eta,
\label{CH}
\eeq
where 
\beq
G(\eta,x,t)=\int_0^{\eta}v_0(s)ds+\dfrac{(x-\eta)^2}{2t}. 
\label{CHkernel}
\eeq

For $\nu\to 0$ the leading contributions to the integral come from 
the neighborhood of the critical points of $G$, namely
\begin{equation}
\label{characteristics}
\partial_{\eta} G(\eta,x,t)|_{\eta=\xi}=v_0(\xi)-\dfrac{x-\xi}{t}=0.
\end{equation}
Let us assume first that there is only one such critical point, 
which means that using the method of the steepest descent \cite{Whithambook}, 
the integral can be approximated as 
\[
\int_{-\infty}^{\infty}e^{-\frac{G(\eta,x,t)}{2\nu}}d\eta \approx
\sqrt{\frac{4\pi\nu}{\partial_{\xi}^2G(\xi,x,t)}}e^{-\frac{G(\xi,x,t)}{2\nu}},
\]
where $\xi=\xi(x,t)$ is a solution of (\ref{characteristics}). Direct
evaluation of (\ref{CH}), using the characteristic 
condition (\ref{characteristics}), then yields the solution 
\beq
v(x,t,\nu)=v_0(\xi) + 
\nu\dfrac{v_0''(\xi)t}{(v_0'(\xi)t+1)^2}+O(\nu^2),
\label{perturbation}
\eeq
whose leading order contribution in the limit $\nu\rightarrow 0$ 
is the solution of the Hopf equation by characteristics. In addition,
(\ref{perturbation}), contains a linear correction coming from the 
viscosity. Alternatively, the term linear in $\nu$ can 
also be obtained using perturbation theory.

The approximation (\ref{perturbation}) remains valid as long as the 
function $G(\eta,x,t)$ has an isolated generic critical point, before the
appearance of a gradient catastrophe. However, after the critical 
triple point $(x_c,t_c)$ of the Hopf equation, where $v_0'(\xi_c)t_c+1=0$ and 
$v_0''(\xi_c)=0$, (\ref{characteristics}) has {\it three} solutions,
as illustrated on the left of Fig.~\ref{FPearcey} below. 
Near this point $G(\eta,x,t)$  can be expanded in a Taylor series as
\[
\Delta G:=G(\eta,x,t)- G(\xi_c,c_c,t_c)\simeq v_0'''(\xi_c)
\frac{\bar{\eta}^4}{4!}-\bar{\eta}^2\frac{\bar{t}}{2t_c^2}-
\bar{\eta}\dfrac{\bar{x}-v_c\bar{t}}{t_c}+v_c(\bar{x}-v_c\bar{t})+
v_c^2\bar{t},
\]
where $\bar{x}=x-x_c$, $\bar{t}=t-t_c$, 
$\bar{\eta}=\eta-\xi_c$, $v_c=v_0(\xi_c)$ and $v_0'''(\xi_c)>0$.
Thus near such critical point the solution of Burgers' equation 
can in the limit $\nu\to 0$  be approximated by 
\begin{equation}
\label{Pearcey1}
v(x,t,\nu)\simeq v_c - 2\nu\partial_x\log
\int\limits_{-\infty}^{\infty}\exp\left[-\frac{1}{2\nu}
\left(v_0'''(\xi_c)\frac{\bar{\eta}^4}{4!}-\bar{t}\frac{\bar{\eta}^2}{2t_c^2}-
\frac{\bar{\eta}}{t_c}(\bar{x}-v_c\bar{t})\right)\right]d\bar{\eta}.
\end{equation}
Some rescaling leads to the following  (see also \cite{DubEl})
\begin{theorem}\cite{Ilin}\label{Theo2}
Near a gradient catastrophe $(x_c,t_c)$ for the solution of the Hopf 
equation $v_t+vv_x=0$, the solution $v(x,t,\nu)$ 
of (\ref{Burgers}) admits the following expansion
 \begin{equation}
v(x,t,\nu) = v_c + \left(\frac{\nu}{\kappa}\right)^{1/4}U
\left(\frac{\bar{x}-v_c\bar{t}}{\left(\kappa\nu^3\right)^{1/4}},
\frac{\bar{t}}{\left(\kappa\nu\right)^{1/2}}\right) + 
O(\nu^{1/2}),
 \end{equation}
where $v_c=v(x_c,t_c)$,  $\kappa=t_c^4v_0'''(\xi_c)/6$
and the function 
U=$U(a,b)$ is defined by
\begin{equation}
\label{Pearcey2}
U(a,b)=-2\partial_a\log\int_{-\infty}^{+\infty}e^{-\frac{1}{8}(z^4 -2z^2b+4za)}dz.
\end{equation} 
\end{theorem}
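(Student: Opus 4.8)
The plan is to obtain the stated expansion directly from the approximation (\ref{Pearcey1}), which already isolates the contribution of the degenerate (quartic) critical point of the Cole--Hopf kernel (\ref{CHkernel}); what remains is to recognize the integral there as a rescaled version of the canonical Pearcey integral (\ref{Pearcey2}). Writing the exponent in (\ref{Pearcey1}) as $-\tfrac{1}{2\nu}\bigl(A\bar\eta^4-B\bar\eta^2-C\bar\eta\bigr)$ with $A=v_0'''(\xi_c)/24$, $B=\bar t/(2t_c^2)$ and $C=(\bar x-v_c\bar t)/t_c$, I would substitute $\bar\eta=c_1 z$ and fix $c_1$ so that the quartic coefficient becomes exactly $1/8$. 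Since $\kappa=t_c^4v_0'''(\xi_c)/6$, this forces $c_1^4=\nu t_c^4/\kappa$, i.e. $c_1=t_c(\nu/\kappa)^{1/4}$.

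With $c_1$ so chosen the remaining coefficients are fixed by bookkeeping: the quadratic term becomes $+\tfrac14 b\,z^2$ with $b=\bar t/(\kappa\nu)^{1/2}$, and the linear term becomes $-\tfrac12 a\,z$ with $a=-(\bar x-v_c\bar t)/(\kappa\nu^3)^{1/4}$, so the exponent reproduces $-\tfrac18(z^4-2z^2 b+4za)$. The Jacobian factor $c_1$ from $d\bar\eta=c_1\,dz$ is independent of $x$ and drops out of $\partial_x\log(\cdot)$. Applying the chain rule through $a$, with $\partial_x a=-(\kappa\nu^3)^{-1/4}$, the prefactor $-2\nu$ in (\ref{Pearcey1}) becomes the amplitude $-2\nu\,\partial_x a=2(\nu/\kappa)^{1/4}$ multiplying $\partial_a\log\int e^{-\frac18(z^4-2z^2 b+4za)}\,dz$; by the definition (\ref{Pearcey2}) this equals $-(\nu/\kappa)^{1/4}U(a,b)$. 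Finally, the symmetry $z\mapsto -z$ of the Pearcey integral makes $U$ odd in its first argument, so the sign of $a$ may be flipped to recover the stated argument $(\bar x-v_c\bar t)/(\kappa\nu^3)^{1/4}$ together with the overall $+$ sign, giving the asserted form of $v(x,t,\nu)$.

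The substantive point, and the main obstacle, is not this rescaling but the justification of (\ref{Pearcey1}) itself, together with the quantitative $O(\nu^{1/2})$ remainder. One must show that, uniformly in the scaled variables $(a,b)$ on compact sets, the full integral (\ref{CH}) is dominated by a neighborhood of the coalescing saddles of (\ref{characteristics}), so that the contributions away from the degenerate critical point are exponentially small, and that the neglected terms of the Taylor expansion of $G$ (the leading one being the quintic $v_0^{(4)}(\xi_c)\bar\eta^5/5!$) may be discarded. After the substitution $\bar\eta=c_1 z$ this quintic term contributes $O(\nu^{1/4}z^5)$ to the exponent; expanding $e^{-O(\nu^{1/4}z^5)}$ and integrating against the generically non-even Pearcey weight produces an $O(\nu^{1/4})$ relative correction to the scaling function, hence an $O(\nu^{1/2})$ correction to $v$ after multiplication by the amplitude $(\nu/\kappa)^{1/4}$, consistent with the stated error. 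Making this uniform Laplace-type analysis at a degenerate critical point fully rigorous is precisely the content attributed to \cite{Ilin}, which I would invoke for the error bound rather than reproduce.
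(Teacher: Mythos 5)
Your proposal is correct and follows essentially the same route as the paper, which derives (\ref{Pearcey1}) from the Cole--Hopf formula by a local Laplace-type analysis at the degenerate critical point and then states that ``some rescaling'' yields the theorem, delegating the rigorous $O(\nu^{1/2})$ bound to \cite{Ilin}. Your explicit rescaling $\bar\eta=t_c(\nu/\kappa)^{1/4}z$, the resulting identifications of $a$ and $b$, the sign flip via the evenness of the Pearcey integral in $a$, and the order count for the quintic correction all check out against the paper's (\ref{Pearcey1}) and (\ref{Pearcey2}).
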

\begin{figure}
\centering
   \includegraphics[width=0.7\textwidth]{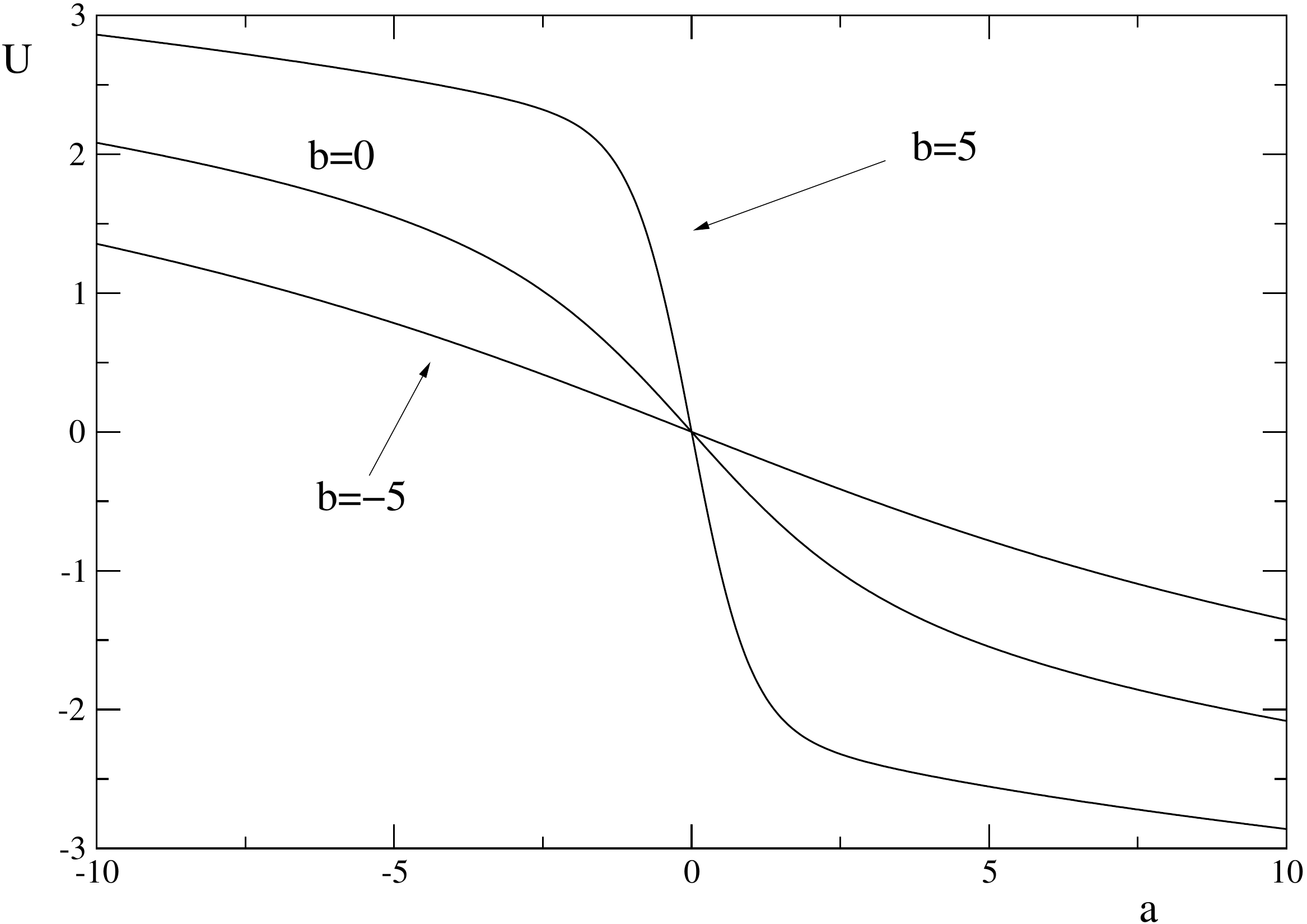} 
  \caption{The Pearcey function $U(a,b)$, for three different values of $b$.
}
 \label{fig:U}
   \end{figure}
\begin{remark}\label{moro}
The function $U(a,b)$ satisfies both the Burgers equation
\[
U_b+UU_a=U_{aa}
\]
and the non-linear ODE in the $a$-variable, containing $b$ as a 
parameter \cite{Moro13}
\beq
a=Ub-U^3+6UU_a-4U_{aa}.
\label{U_rel}
\eeq

The behavior of $U(a,b)$ is illustrated in Fig.~\ref{fig:U} for 
negative, positive, and vanishing values of reduced time $b$, 
performing the integral in (\ref{Pearcey2}) numerically. 
For large $|a|$ and fixed $b$ the  integral (\ref{Pearcey2}) behaves as 
the root of the cubic equation (\ref{char2}) (see below)
\[
U(a,b)=\mp|a|^{\frac{1}{3}}\mp \dfrac{b}{3}|a|^{-\frac{1}{3}}+
O(|a|^{-\frac{5}{3}}),\quad |a|\to\infty
\]
\end{remark}

The integral in (\ref{Pearcey2}) is related to the standard Pearcey function 
\cite{NIST:DLMF}, which describes the diffraction pattern near a cusp caustic
\cite{Nye99}, by a complex rotation. The relation (\ref{U_rel}) 
is convenient in deducing the asymptotic properties of $U(a,b)$; it 
follows from 
\[
\int_{-\infty}^{\infty}\frac{d}{dz}e^{-\frac{1}{8}(z^4 -2z^2b+4za)}dz = 0. 
\]

\begin{figure}
\centering
   \includegraphics[width=0.7\textwidth]{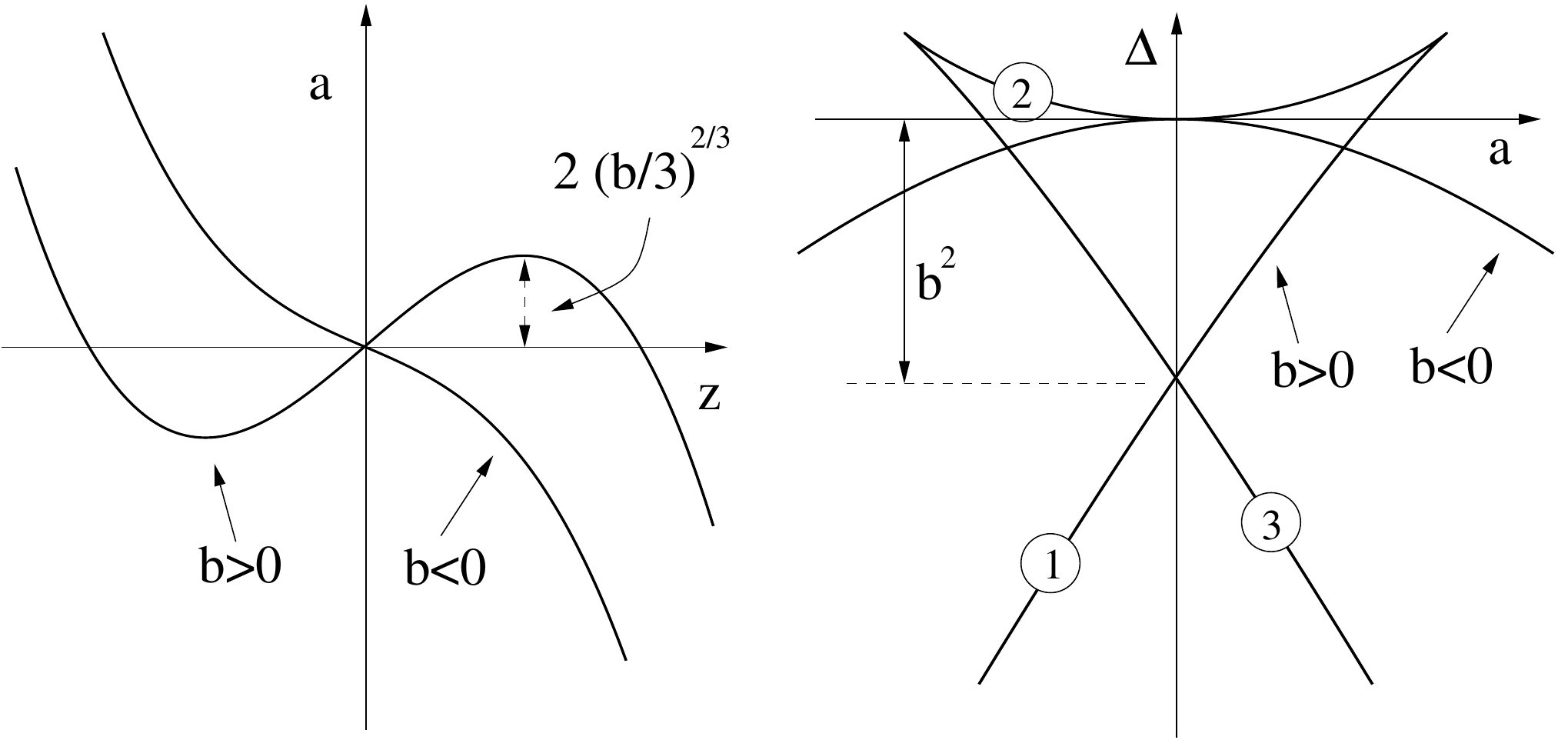} 
  \caption{The critical contributions to the integral (\ref{Pearcey2})
near a cusp catastrophe, at constant reduced time $b$. 
On the left, the critical points; there is 
a unique solution for $b<0$, and three solutions for 
$\left|a\right| \le 2(b/3)^{3/2}$ if $b>0$. On the 
right, the argument $\Delta$ of the exponential; for $b<0$ there is a
single contribution, for $b>0$ there are three contributions to a given 
value of $a$. }
 \label{FPearcey}
   \end{figure}
In catastrophe theory \cite{PS78} the potential 
\beq
\Delta(z) = z^4 -2z^2b+4za 
\label{potential}
\eeq
(the weight in the exponent of (\ref{Pearcey2}))
is the standard unfolding of the cusp catastrophe, which is a co-dimension 
2 singularity. For $b<0$ (before the gradient catastrophe), there is 
only one critical point 
\beq
0 = \frac{d\Delta}{dz} = 4z^3 - 4zb + 4a ,
\label{potential_crit}
\eeq
which is the case we considered before (see Fig.~\ref{FPearcey}). 
Evaluating $\Delta$ at the critical point (\ref{potential_crit})
yields 
\beq
\Delta = -3z^4 + 2z^2b, \quad a = -z^3 + zb.
\label{Delta_swallow}
\eeq
For $b<0$ this gives the single-valued curve shown on the right of 
Fig.~\ref{FPearcey}, which leads to the solution (\ref{perturbation}). 

If on the other hand $b>0$ (after the gradient catastrophe), in the 
range $\left|a\right| \le 2(b/3)^{3/2}$ there are 
three critical points. Thus the integral (\ref{Pearcey2}) has three 
contributions, with different values of $\Delta$ (cf. (\ref{Delta_swallow})), 
which lie on a swallowtail figure, as shown on the right of 
Fig.~\ref{FPearcey}. The integral is dominated by the {\it smallest}
value of $\Delta$, as long as the solutions are well separated. This means
we must have $b\gg 1$ (cf. Fig.~\ref{FPearcey}), or 
$\bar{t}/\epsilon^{1/2}\gg 1$. Closer to gradient catastrophe, 
a more sophisticated asymptotics is needed, or one has to evaluate
the integral numerically, as we will do below. However, outside
of the region $b\lesssim 1$, the integral is dominated by either 
solution $z_1$ or $z_3$. The changeover occurs for $a=0$, where 
$\Delta(z_1) = \Delta(z_3)$, namely on the line 
$\bar{x}-v_c\bar{t}=0.$ This is exactly the shock front near the 
gradient catastrophe $(x_c,t_c)$.
 \subsubsection{Pearcey integral and dissipative dKP equation.}
Choosing  $\lambda=\epsilon^{\frac{3}{4}}$ in Theorem~\ref{Theo1} we obtain 
that the solution  to the dissipative dKP equation satisfies in the 
rescaled variables (\ref{subs})  the Burgers equation (\ref{lim2}) with 
$\varepsilon=1$. Furthermore for $t<t_c$ such solution is asymptotic to 
the Hopf solution (\ref{char2}). Combining these observations with 
Theorem~\ref{Theo2} and remark~\ref{moro},   we come up the following  conjecture.

\begin{conjecture}
Let us consider the   double scaling limit $\epsilon\to 0$,  $x\to x_c$, $y\to y_c$  and $t\to t_c$  in such a way that  the ratios
\[
\frac{X}{\epsilon^{3/4}},\quad \frac{T}{\epsilon^{1/2}},
\]
remain bounded with  $X$ and $T$  defined in (\ref{XT}).
Then the solution $u(x,y,t;\epsilon)$ of the dissipative dKP equation 
near  the first singularity  for the solution of the dKP equation 
is described by  the expansion
 \begin{equation}
\label{KPPearcy}
u(x,y,t;\epsilon)\simeq u_c+\sigma^{1/4}
U\left(\frac{X}{\sigma^{3/4}},
\frac{T}{\sigma^{1/2}}\right)
+\bar{y}\bar{\beta}
+ O(\epsilon^{1/2}),
\end{equation}
where 
\[
\sigma=\epsilon\dfrac{6\left(1 +c \left(t_cF_y^c\right)^2\right)}
{F^c_{\xi\xi\xi}t_c^4},
\]
 and the 
function $U(a,b)$ is the Pearcey integral defined in (\ref{Pearcey2}).
\end{conjecture}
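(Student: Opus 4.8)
The plan is to establish (\ref{KPPearcy}) by chaining together the two theorems already proved, together with Remark~\ref{moro}, and then unwinding the scalings; since the statement is a conjecture I will also indicate precisely where a rigorous argument is still missing. First I would fix $\lambda=\epsilon^{3/4}$ in the substitution (\ref{subs}), so that $\varepsilon=1$, and invoke Theorem~\ref{Theo1}: in the rescaled variables $({\cal X},{\cal T})$ the leading profile $H$ solves the Burgers equation (\ref{lim2}) with effective viscosity $\sigma_1:=(1/k)\bigl(1+c(t_cF_y^c)^2\bigr)$ and the matching condition (\ref{asymp}). The whole construction of the candidate solution is then purely algebraic.

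The key step is to rescale (\ref{lim2}) to unit viscosity. Setting $a={\cal X}\,\sigma_1^{-3/4}$, $b={\cal T}\,\sigma_1^{-1/2}$ and $H=\sigma_1^{1/4}\hat U(a,b)$, a direct substitution shows that every power of $\sigma_1$ cancels and that (\ref{lim2}) becomes the unit-viscosity Burgers equation $\hat U_b+\hat U\hat U_a=\hat U_{aa}$. Feeding the same rescaling into (\ref{asymp}) and collecting powers of $\sigma_1$ turns it into $\hat U=\mp|a|^{1/3}\mp(b/3)|a|^{-1/3}+O(|a|^{-5/3})$. This is exactly the unit-viscosity problem whose solution is, by Theorem~\ref{Theo2} together with Remark~\ref{moro}, the Pearcey integral $U(a,b)$ of (\ref{Pearcey2}). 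Hence $\hat U=U$, i.e.
\[
H({\cal X},{\cal T})=\sigma_1^{1/4}\,U\!\left(\frac{\cal X}{\sigma_1^{3/4}},\frac{\cal T}{\sigma_1^{1/2}}\right),
\]
and one verifies a posteriori that this $H$ indeed satisfies (\ref{asymp}), since the large-$|a|$ expansion of $U$ from Remark~\ref{moro} reproduces the cubic Hopf profile (\ref{char2}) after undoing the $\sigma_1$-scaling.

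It then remains to undo the outer change of variables. Writing $u=u_c+\lambda^{1/3}H+\bar{y}\bar{\beta}$ to leading order and substituting ${\cal X}=X/\lambda$, ${\cal T}=T/\lambda^{2/3}$, $\lambda=\epsilon^{3/4}$, the prefactor combines as $\lambda^{1/3}\sigma_1^{1/4}=(\epsilon\sigma_1)^{1/4}=\sigma^{1/4}$, while the two arguments of $U$ become $X/(\epsilon\sigma_1)^{3/4}=X/\sigma^{3/4}$ and $T/(\epsilon\sigma_1)^{1/2}=T/\sigma^{1/2}$, where $\sigma=\epsilon\sigma_1$ is precisely the constant displayed in the statement. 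This reproduces (\ref{KPPearcy}), the remainder being inherited as $O(\epsilon^{1/2})$ from the next order in (\ref{subs}) and in Theorem~\ref{Theo2}.

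The main obstacle — and the reason the result is stated as a conjecture rather than a theorem — is the rigorous justification of the multiple-scale ansatz itself. Theorem~\ref{Theo1} \emph{presupposes} that the limit $H$ exists and obeys (\ref{asymp}); proving this for the genuine solution of the nonlocal dissipative equation (\ref{DKP}), uniformly in the double-scaling window where $X/\epsilon^{3/4}$ and $T/\epsilon^{1/2}$ are bounded, is a delicate matched-asymptotics problem. A complete argument would require: (i) an outer estimate showing that for $|{\cal X}|\gg1$ the solution is slaved to the cubic Hopf profile (\ref{char2}); (ii) an inner bound controlling the Pearcey boundary layer; and (iii) uniform control both of the nonlocal operator $\partial_x^{-1}u_{yy}$ and of the error terms discarded in passing from (\ref{multiscale}) to (\ref{lim2}), so that the $O(\epsilon^{1/2})$ remainder is genuine rather than merely formal. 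Establishing such uniform convergence for a nonlocal two-dimensional conservation law near a cusp catastrophe is exactly what is presently out of reach.
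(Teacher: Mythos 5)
Your proposal is correct and follows essentially the same route as the paper, which derives the conjecture precisely by setting $\lambda=\epsilon^{3/4}$ in Theorem~\ref{Theo1} to obtain the Burgers equation (\ref{lim2}) with $\varepsilon=1$, and then combining this with Theorem~\ref{Theo2} and Remark~\ref{moro}; your explicit unit-viscosity rescaling and the bookkeeping giving $\sigma=\epsilon\sigma_1$ simply fill in the algebra the paper leaves implicit. Your closing discussion of why the statement remains a conjecture (the unproven existence of the limit $H$ and the matching conditions assumed in Theorem~\ref{Theo1}) is consistent with the paper's own framing.
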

For $y$-symmetric initial data the expression (\ref{KPPearcy}) reduces 
to the form
\begin{equation}
\label{KPPeracy}
u(x,y,t;\epsilon)\simeq u_c+\sigma^{1/4}
U\left(\frac{\bar{x}-u_c\bar{t}-t_cF^c_{yy}\bar{y}^2/2}
{k\sigma^{3/4}},
\frac{\bar{t}-t^2_cF^c_{\xi yy}\bar{y}^2/2}
{k\sigma^{1/2}}\right)
+O(\epsilon^{1/2}),
\end{equation}
with $k$ defined in (\ref{XT}).
The center of the (smooth) shock front is located at $X=0$, 
as found previously in the inviscid limit. 

\section{Numerical solution}
\label{sec:num}
In this section we present numerical solutions of the transformed
version (\ref{eqF}) of the dKP equation, which remain smooth well 
beyond the gradient catastrophe of the original equation (\ref{dKP}),
as we will demonstrate below. In addition, we treat the dissipative 
dKP equation (\ref{DKP}), whose solutions are also observed to remain 
smooth. We use a Fourier method for the spatial dependence, and an 
\emph{exponential time differencing} (ETD) scheme for the time 
dependence, as previously for the dKP equation \cite{KR13}.

Both equations are written in \emph{evolutionary form}
\begin{equation}
    F_{t}=\partial_{\xi}^{-1}F_{yy}+t(F_{\xi}\partial_{\xi}^{-1}F_{yy}-F_{y}^{2}),
    \label{Fev}
\end{equation}
and 
\begin{equation}
    u_{t}+uu_{x}=\partial_{x}^{-1}u_{yy}+\epsilon\left(u_{xx} +c u_{yy}\right),
    \label{dKPdis}
\end{equation}
with a small dissipation parameter $\epsilon$. In Fourier space, 
the antiderivatives $\partial_{\xi}^{-1}$ and $\partial_{x}^{-1}$ are 
represented as Fourier multipliers $-i/k_{\xi}$ and $-i/k_{x}$, respectively. 
Here $k_{\xi}$, $k_{x}$, $k_{y}$ are the dual Fourier variables of $\xi$, 
$x$, $y$ respectively, and the Fourier transform of a variable will be 
denoted by a hat. Thus (\ref{Fev}) and (\ref{dKPdis}) can be written
in the form
\begin{equation}
    \hat{u}_{t}=\mathcal{L}\hat{u}+\mathcal{N}(\hat{u}),
    \label{uhat}
\end{equation}
where $\mathcal{L}$ is a linear, {\it diagonal} operator, which is 
$ik_{y}^{2}/k_{\xi}$ for (\ref{Fev}), and $ik_{y}^{2}/k_{x}-\epsilon k_{x}^{2}$ 
for (\ref{dKPdis}), and $\mathcal{N}(\hat{u})$ is a nonlinear term.
The idea of the ETD scheme to be used here is to treat the 
linear part of (\ref{uhat}) exactly. We use the fourth order EDT 
method by Cox and Matthews \cite{CM02}, but other schemes offer 
a very similar performance \cite{KR11}. 

To satisfy the constraint (\ref{constID}) on the initial condition, 
we choose initial data as the derivative of a function from the
Schwarz space of rapidly decreasing smooth functions. This is well 
suited to a Fourier method, since a Schwarz function can be continued as a 
smooth periodic function to within our finite numerical precision. 
However, the nonlocality of (\ref{Fev}) and (\ref{dKPdis}) 
implies that solutions will develop tails with an algebraic decrease towards 
infinity. This follows already from the Green function of the linearized 
equations \cite{KSM}. It was shown in \cite{KSM,KR11} that discontinuities 
at the boundaries of the computational domain can nevertheless be avoided 
by choosing a large enough domain, and one can achieve 
\emph{spectral accuracy} (an exponential decrease of the numerical error 
with the number of Fourier modes) over the time scales considered. 

The antiderivative in both (\ref{Fev}) and (\ref{dKPdis}) leads 
to Fourier multipliers which are singular in the limit of small wave 
numbers. 
These terms are regularized in Fourier space by adding a term of the order 
of the machine precision ($\sim10^{-16}$ here). In \cite{KSM}, the dKP 
equation (\ref{dKP}) was solved for $\partial_{x}^{-1}u$, which is possible 
since solutions maintain the property of being the derivative of a Schwarz
function. Together with an exponential integrator treating the term 
$ik_y^2/k_x$ explicitly, this addressed all numerical 
problems stemming from this singular operator. 

However, an explicit treatment of all singular terms is not 
possible for (\ref{Fev}), since $\mathcal{N}$ is singular as well,
which leads to numerical problems for $k_{\xi}\to0$. This can 
be addressed by applying a Krasny filter \cite{krasny}: all 
Fourier coefficients with modulus smaller than some threshold 
(typically $10^{-10}$) will be put equal to 0. In all cases considered,
our numerical algorithm could now be continued well beyond the first 
gradient catastrophe. For longer times, the above mentioned 
algebraic tails will lead to a slower decrease of the Fourier 
coefficients and thus to numerical problems once the numerical errors 
are of the order of the Krasny filter. For long time computations, 
which are beyond the scope of the current paper, one would have to use 
considerably larger domains and higher resolutions, or alternatively 
a spectral approach as in \cite{BK}.

The accuracy of the numerical solution to (\ref{eqF}) was monitored via 
the decrease of the Fourier coefficients, and checking the conservation 
of the $L^2$ norm (cf. (\ref{mass}),(\ref{L2})). To this end we compute
\begin{equation}
    \delta(t) = 1-\frac{M(t)}{M(0)},
    \label{delta_M}
\end{equation}
whose time dependence will be a measure of the numerical error. As 
shown in \cite{etna,KR13}, the maximum error in $F$ may well be 
one to two orders of magnitude greater than $\delta$,  but within these 
limits $\delta$ is nevertheless a reliable indicator of 
the accuracy, if the Fourier coefficients decrease sufficiently rapidly.

\subsection{Shock formation for symmetric initial data}
We begin with the simplest case of initial data symmetric with respect 
to $y\to-y$. We choose the same initial condition as \cite{KR13}, 
\begin{equation}
    u_{0}(x,y)=-6\partial_{x}\mbox{ sech}^{2}\sqrt{x^2+y^2},
    \label{u0sym}
\end{equation}
who solved the dKP equation (\ref{dKP}) in its original form. 
Near the gradient catastrophe, (\ref{dKP}) develops a discontinuity,
and the numerical scheme employed in \cite{KR13} breaks down. By contrast,
using the transformed equation (\ref{eqF}), we are able to reach the gradient 
catastrophe with much lower resolution (using serial instead of 
parallel computers), but are also able to continue the computation beyond 
the first and even secondary wave-breaking events. Beyond the gradient
catastrophe, we identify the lines $\Delta = 0$ along which the gradient 
of the solution blows up (cf. Fig.~\ref{fig:lip}), and show that the 
solution of (\ref{eqF}) yields the expected weak solution of dKP inside
the lip region. We also show that the solution of (\ref{eqF}) stays 
regular on time scales of order unity.

\begin{table}
 \centering
    \leavevmode
\begin{tabular}{|c|c| c |c|c|c|c|}
\hline
Breaking event & Initial data&$t_c$&$x_c$&$y_c$&$u_c$&$\xi_c$\\
\hline
First& $-6\partial_{x}\mbox{ sech}^{2}\sqrt{x^2+y^2}$&0.222&1.79& 0&2.543&1.227\\
\hline
Second & $-6\partial_{x}\mbox{ sech}^{2}\sqrt{x^2+y^2}$&0.300&-2.033 
&0&-2.48&-1.289\\
\hline
\end{tabular}
\caption{Critical parameters for the first two wave breaking events, with
symmetric initial data (\ref{u0sym}).}
\label{table:critical}
\end{table}

In \cite{KR13}, the first wave breaking event was observed at the critical 
time $t_{c}= 0.2216\dots$, see Table~\ref{table:critical}. Here we can 
identify $t_{c}$ directly from a solution of (\ref{eqF}) by tracing the 
minimum of $\Delta$ over space. The first time this quantity vanishes or 
becomes just negative will be taken as the time $t_{c}$. We use 
$N_{x}=N_{y}=2^{9}$ Fourier modes for $x,y\in[-5\pi,5\pi]^{2}$ 
and $N_{t}=1000$ time steps for $t\leq 0.23$. The first negative value 
of $\Delta$ is recorded for $t=0.222\ldots$, which is in agreement with 
\cite{KR13} to within the accuracy
of at least two digits. However, the present calculation 
requires much lower resolution to reach similar accuracy ($N_{x}=N_{y}=2^{9}$ 
compared to $N_{x}=N_{y}=2^{15}$ in \cite{KR13}), and accuracy can
easily be improved. For example, after determining the critical time to a 
certain accuracy, one uses the required resolution in time close to the 
previously determined $t_{c}$. This allows to determine the critical 
time  with the same precision as the solution to (\ref{eqF}), i.e., with the 
accuracy of the Krasny filter chosen here to be equal to $10^{-10}$. 
For our purposes an accuracy of the order of $10^{-3}$ will be sufficient. 

The location of the critical point was identified in \cite{KR13} as  
$x_{c} = 1.79\dots$ and $y_{c}=0$. Here it is calculated for $t=t_{c}$ by 
first finding the minimum $\xi_{c}=1.227\ldots$, $y_c=0$ of $\Delta$, where 
$F(\xi_{c},y_{c},t_{c})=2.543\ldots$. Then, using (\ref{GC}), we find
$x_{c}=\xi_{c}+t_{c}F(\xi_{c},y_{c},t_{c})=1.792\ldots$, again in 
excellent agreement with our previous result \cite{KR13}, estimated
to be correct to at least two digits. 

\begin{figure}
\centering
    \includegraphics[width=0.42\textwidth]{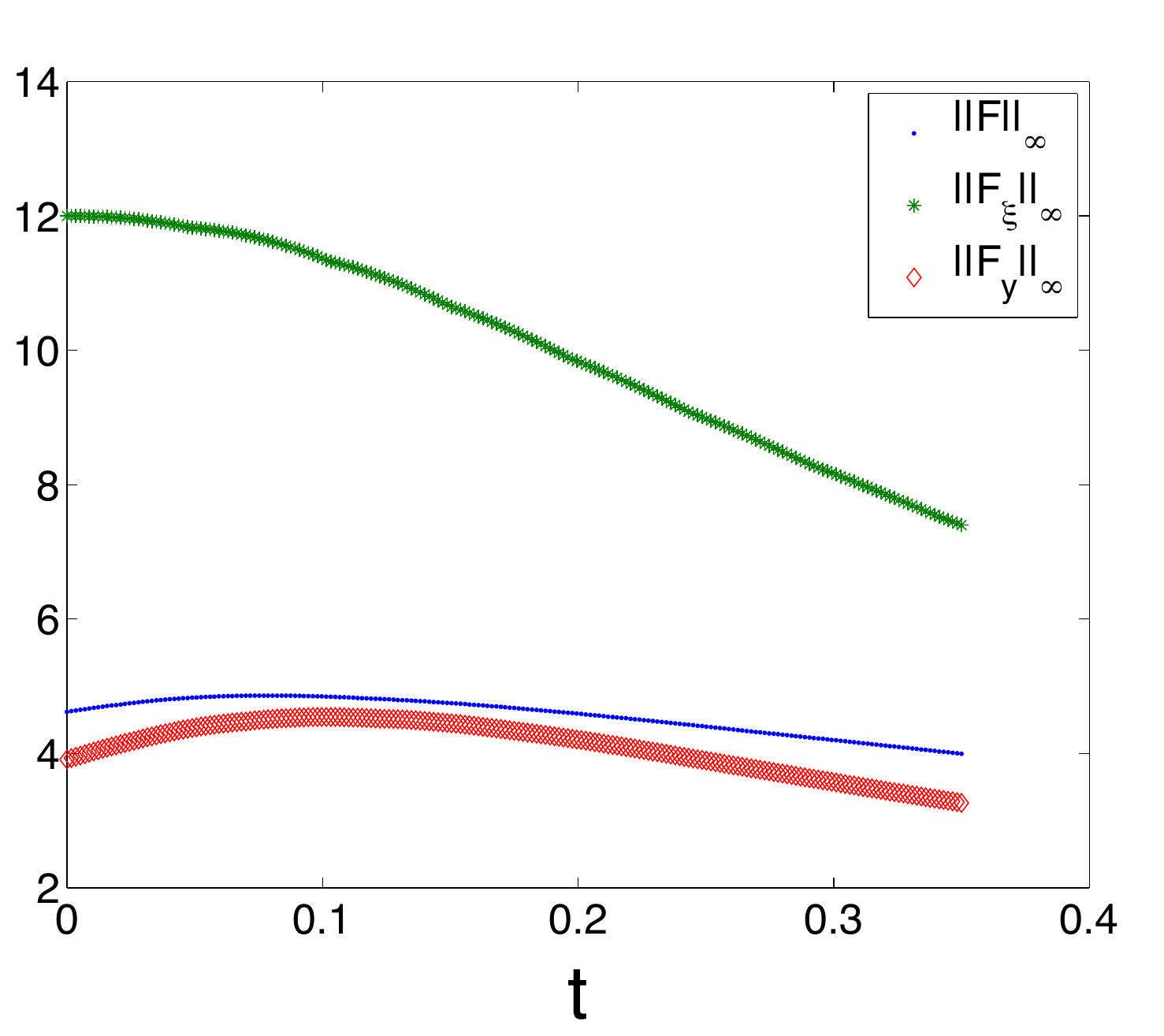}
    \includegraphics[width=0.49\textwidth]{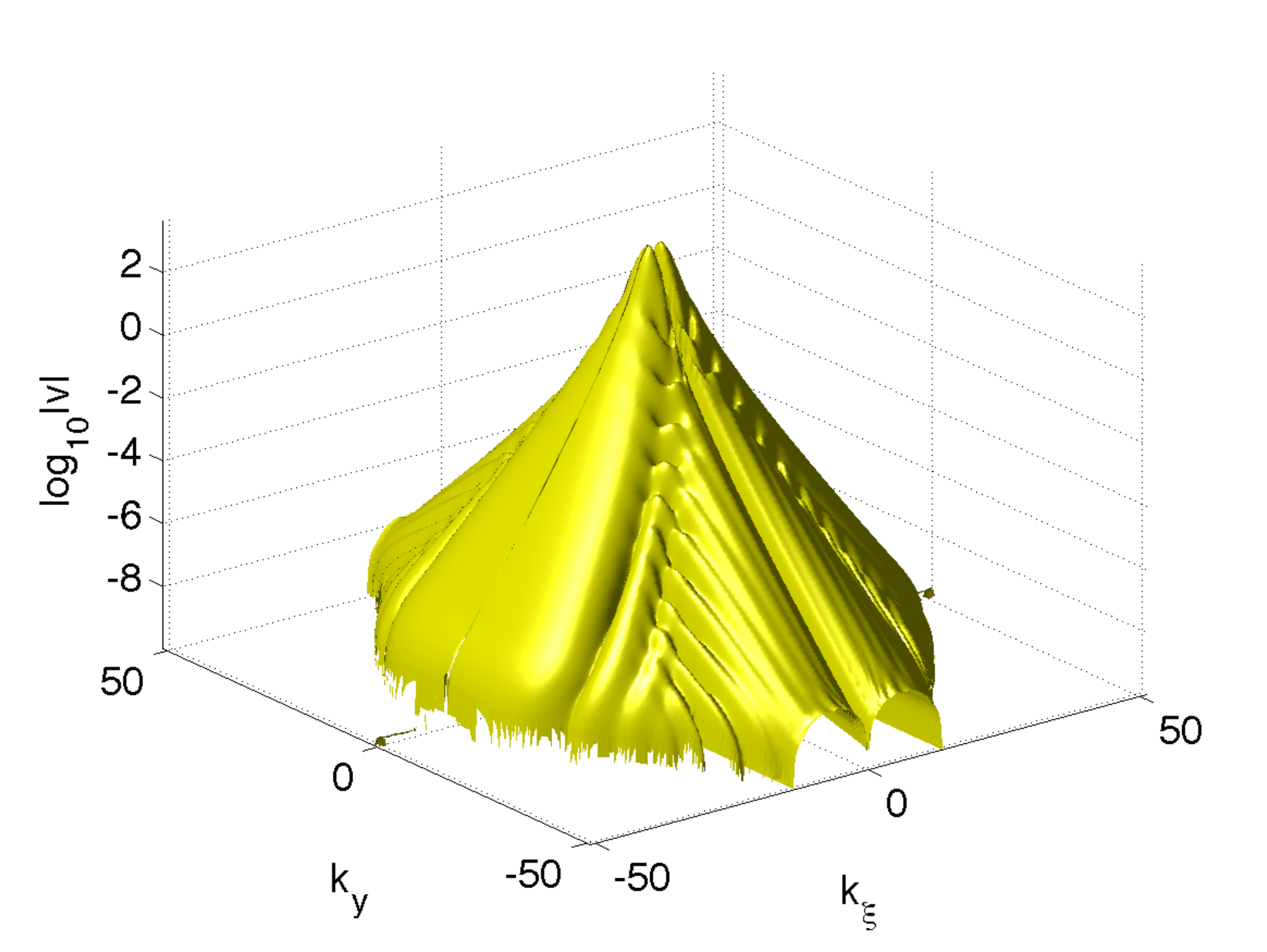}
 \caption{Measures of the smoothness of the solution to (\ref{eqF}) 
with initial data (\ref{u0sym}). On the left, the time dependence of 
the maximum norm of $F$, as well as of $F_{\xi}$ and $F_y$;
all decay for long times. 
On the right, the Fourier coefficients of the solution for $t=0.32$. }
 \label{Fsechnorm}
\end{figure}
However, the solution $F$ of (\ref{eqF}) stays perfectly regular 
well beyond the critical time $t_c$ of the dKP solution $u(x,y,t)$, 
as seen in Fig.~\ref{Fsechnorm}. On the left, we show that the
maximum norms of the first derivatives of $F$ remain bounded and smooth 
at $t_c$, and even decay for long times (of course, the derivatives of 
the original variable $u(x,y,t)$ diverge at a gradient catastrophe). 
On the right, for $t=0.32$ we demonstrate exponential decay of the Fourier 
coefficients to the level of the Krasny filter, as expected for a smooth 
function. The relative $L^2$ norm $\delta(t)$ (cf. (\ref{delta_M})) is 
conserved to the order of $10^{-14}$. On account of the algebraic decay 
of the solution in Fourier space, the computation cannot be run for much 
longer than $t = 0.35$ at the current resolution. To be able to do so 
using a Fourier method, larger domains and higher resolution would be 
needed. However, there is no indication that the solution of (\ref{eqF}) 
itself develops a singularity.

Thus it is possible to continue the computation beyond the first wave
breaking event, and to identify the second event, which occurs 
for negative $x$. This is of course not possible in the case of direct 
integration of (\ref{dKP}) as in \cite{KR13}, where the numerical method
fails at the first wave breaking. We use $N_{x}=2^{9}$, $N_{y}=2^{11}$ Fourier 
modes and $N_{t}=5000$ time steps for $t\leq 0.32$. Proceeding as for the 
first break-up in tracing the minimum of $\Delta(\xi,y,t)$,
we find $\tilde{t}_{c}=0.300\ldots$ and $\tilde{x}_{c}=-2.033\ldots$, see 
Table~\ref{table:critical}. 

\begin{figure}
\subfigure
{     \includegraphics[width=0.49\textwidth]{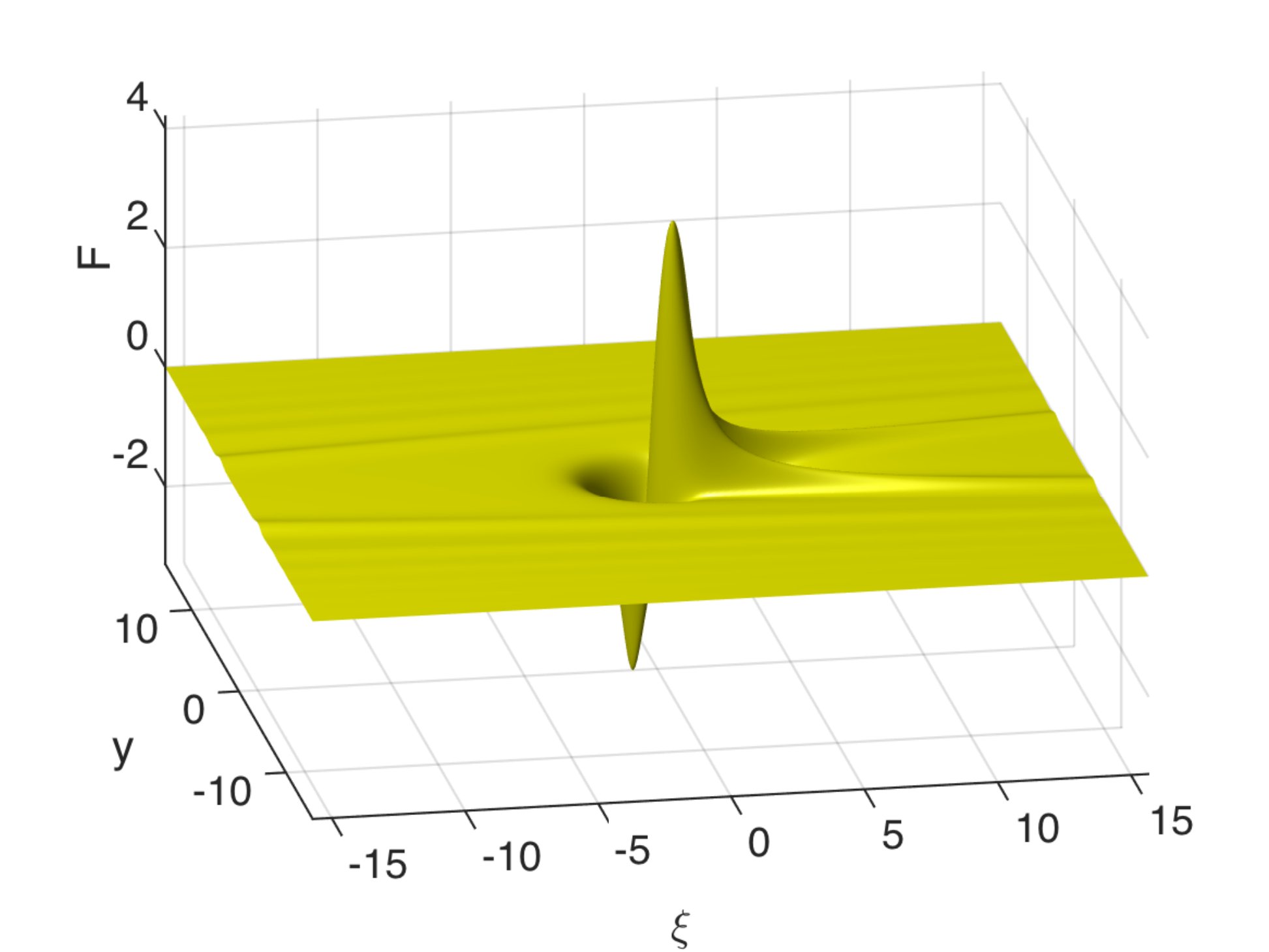}}
 \includegraphics[width=0.49\textwidth]{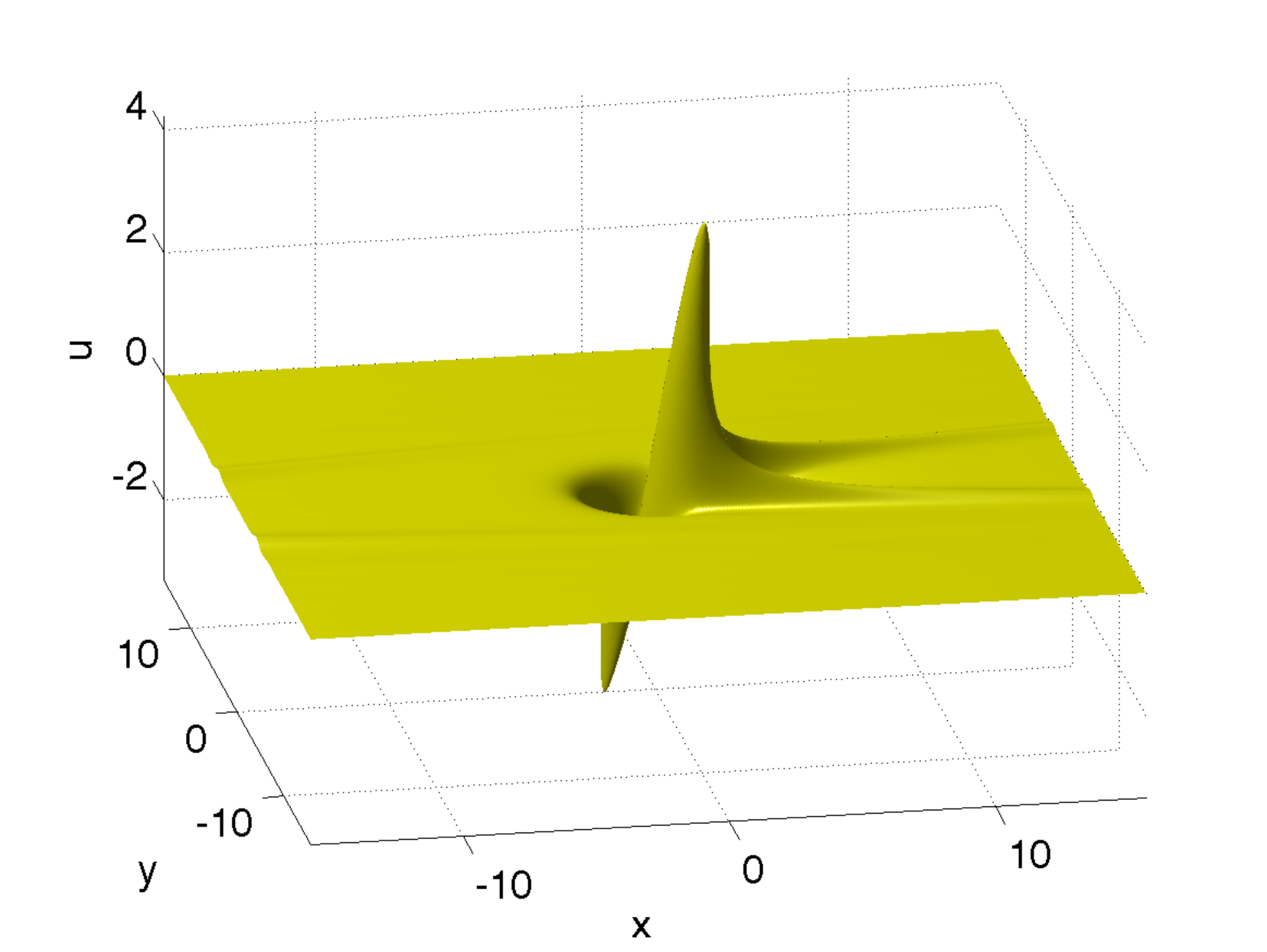}
  
    \vskip 0.1cm
\subfigure
{   \includegraphics[width=0.49\textwidth]{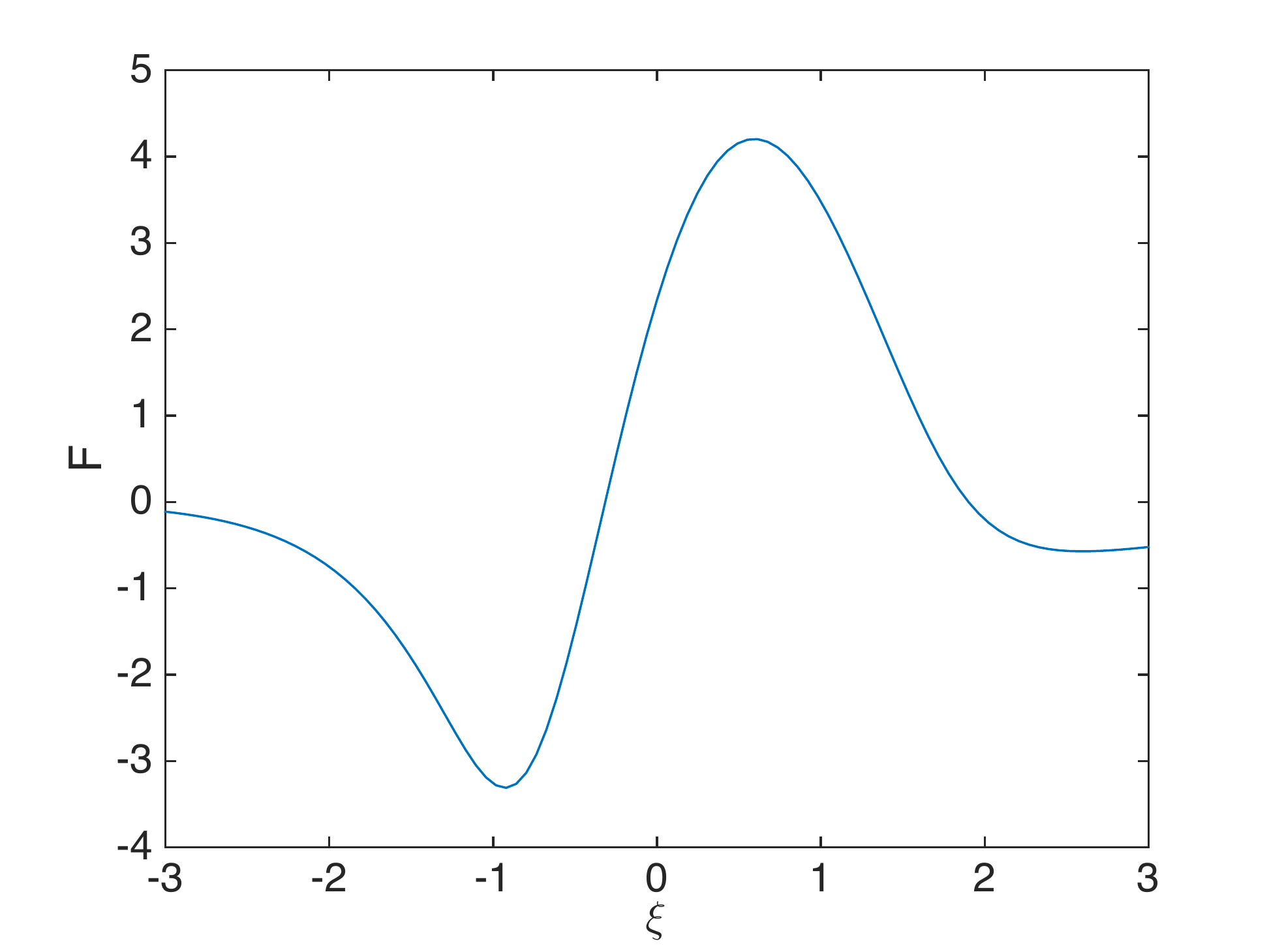} 
\includegraphics[width=0.49\textwidth]{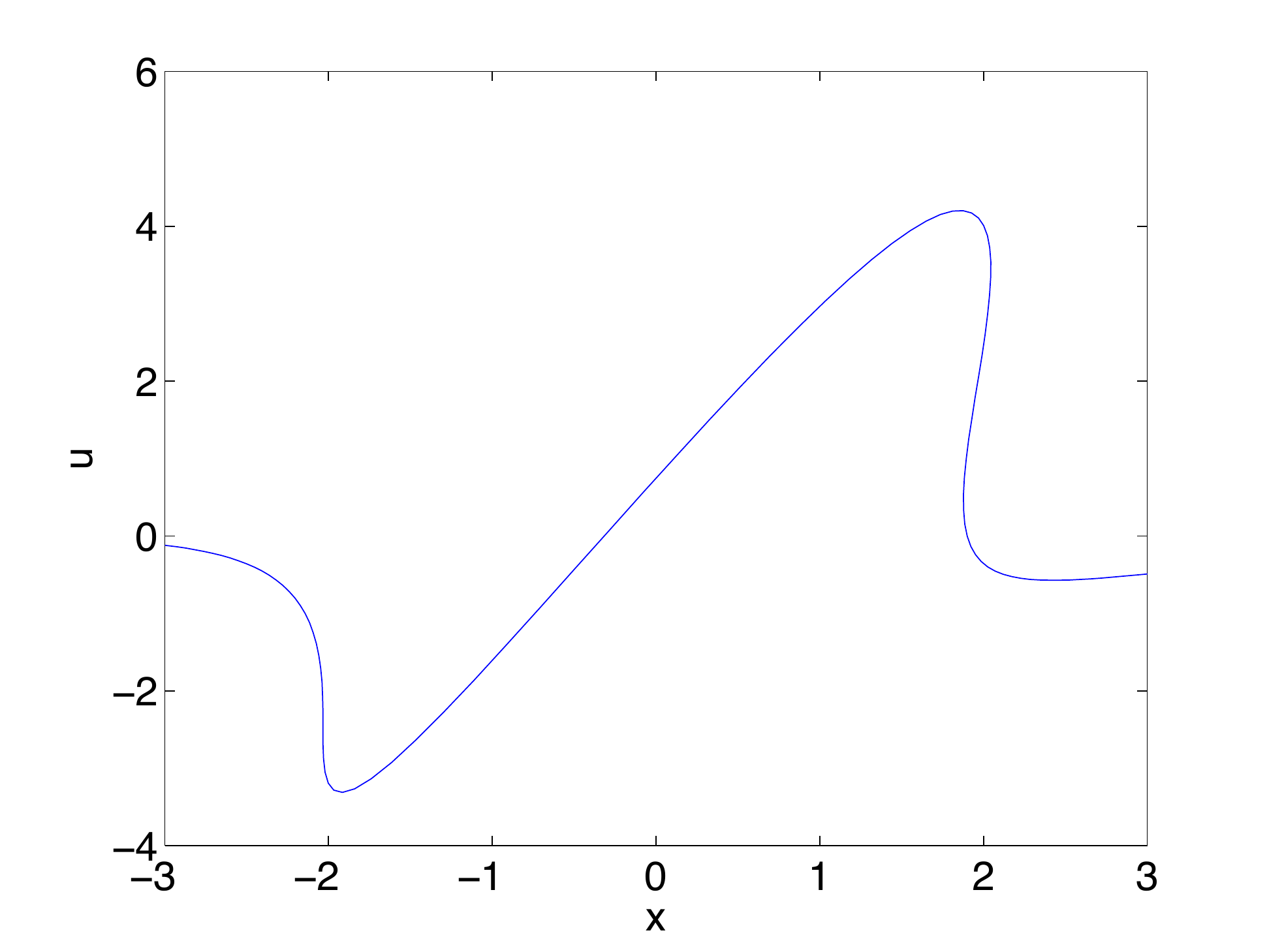}
  }
 \caption{Profiles obtained from a solution of the transformed equation
(\ref{eqF}) at $t = 0.300$, time of the second wave breaking event. On the 
left, the original solution $F(\xi,y,t)$ for initial data  (\ref{u0sym}); 
on the right, the profile $u(x,y,t)$ obtained using the transformation 
(\ref{implicit0}). The slices along the plane $y=0$ (bottom) make it clear 
that the profile $u(x,y,t)$ has overturned near $x=2$ (first breaking), 
and is at the point of breaking near $x=-2$ (second breaking). The profile 
of $F(\xi,y,t)$ remains smooth
and single valued. }
 \label{Fsechtc2}
\end{figure}
The corresponding profile $u(x,y,t)$ can be seen in Fig.~\ref{Fsechtc2} 
on the left. It is obtained by plotting $F(\xi,y,t)$ (shown on the right) 
as a function of $x=\xi+tF(\xi,y,t)$, as required by (\ref{implicit0}). 
For $t>t_c$  in a  neighborhood  of the blow-up point, one has that 
$x=\xi+tF(\xi,y,t)$ is not  invertible as a function of $\xi(x,y,t)$. 
However we can still perform a parametric plot of $u(x,y,t)$, which becomes 
a multivalued function  in the region near the first critical point 
$(x_{c},0,t_{c})$. This is even clearer from the cut along the $y=0$-axis 
shown on the bottom (recall that the critical points are all on the 
$x$-axis since the initial data are symmetric with respect to $y\to-y$, 
and since the dKP equation preserves this symmetry). Thus as for the 
solution to the Hopf equation via the characteristic method, a nonphysical 
solution which has overturned is obtained in the shock region. It is clear 
from the corresponding cut through $F(\xi,y,t)$ shown on the bottom left 
that $F$ remains smooth and single valued. 
\begin{figure}
\centering
\hskip 1cm
   \includegraphics[width=0.75\textwidth]{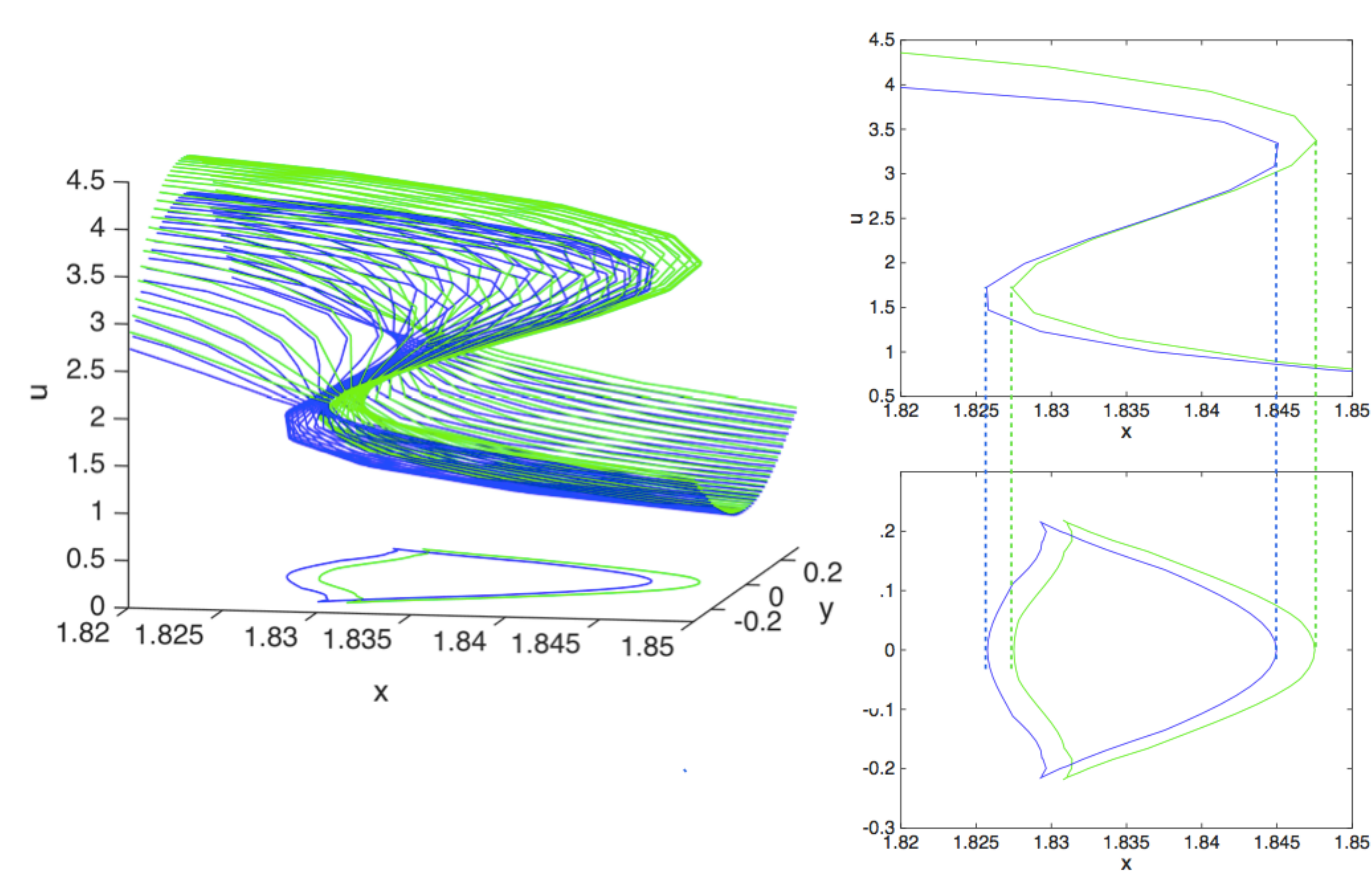}
\caption{On the left, the solution $u(x,y,t)$  (blue line) of 
the dKP equation  and its approximation (green line) (\ref{s_u}) for 
$t=0.24>t_c\approx0.222$. The  regions of multivaluedness of the solutions are 
projected on the $(x,y)$-plane. On the right top: a cut through $u(x,y,t)$
along $y=0$. On the right bottom: the corresponding multivalued 
regions  of $u(x,y)$ in the $(x,y)$-plane (blue line: numerical solution;
green line: local approximation.)
                  }
 \label{dKPcomparison}
\end{figure}

We can now test to which extent the asymptotic description of the 
overturned region in Section~\ref{sec:overturning}, which only becomes 
exact in the limit $t\sim t_{c}$, can approximate our numerical results. 
Recall that the profile is described by (\ref{s_u}), while the shape
of the overturned region is given by (\ref{X1Y1}),(\ref{slitX1Y1}).
In Fig.~\ref{dKPcomparison} we show a comparison between a numerical 
solution of the dKP equation, obtained through the transformation 
(\ref{implicit0}) (blue), with the local approximate  solution (\ref{s_u}) 
shown in green. At $t=0.24$, i.e. shortly after overturning at 
$t_c=0.222$, there is good agreement in the description of the 
multivalued region. On the left, $u(x,yt)$ is shown in a perspective
plot, on the top right an s-curve is produced by a cut along the 
$y=0$ plane. If corresponding cuts are considered for each value of $y$, 
a lip-shaped region is obtained inside which the profile has overturned
(bottom right). 

\begin{figure}
  \centering
 \includegraphics[width=0.6\textwidth]{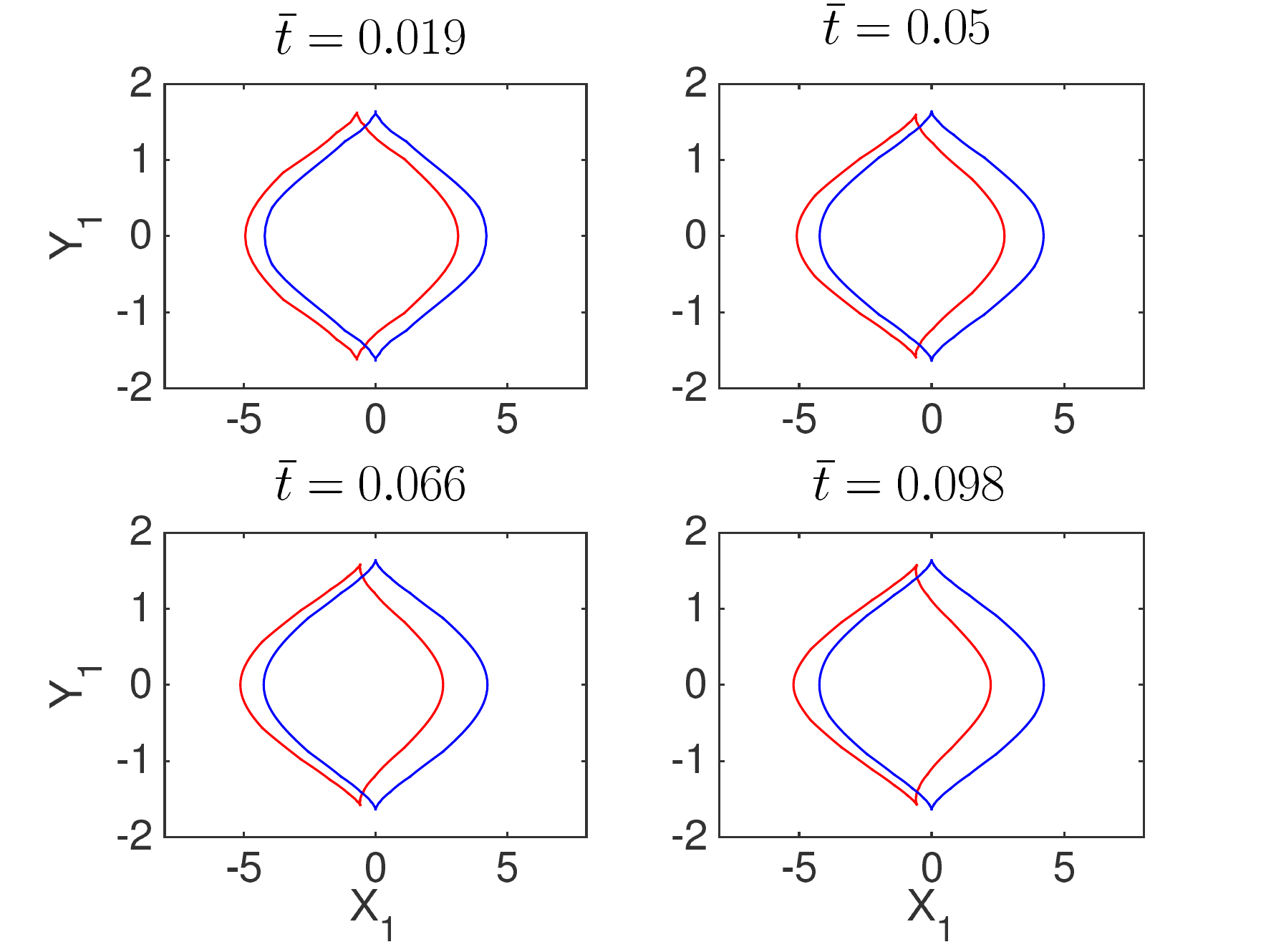}
  \caption{Multivalued region of the solution of the dKP equation 
as found from $\Delta(\xi,y,t)=0$  for the initial data (\ref{u0sym}). 
Results are written in selfsimilar rescaled coordinates 
$X_{1}$ and $Y_{1}$  defined by (\ref{X1Y1}) for several values of $\bar{t}$ 
(red lines). The corresponding asymptotic boundary (\ref{slitX1Y1}), 
shown in blue, is time-independent by construction. }
 \label{Fsechcontour4}
\end{figure}

To test for the self-similar properties of the multivalued region, 
in Fig.~\ref{Fsechcontour4} we show the numerical result as function of 
the rescaled coordinates $X_1$, $Y_{1}$, which are defined by (\ref{X1Y1}) 
(red lines). Good agreement is seen with the asymptotic prediction 
(\ref{slitX1Y1}) (blue lines), in particular for small values of three time 
distance $\bar{t}$ from the gradient catastrophe, as expected. The fact
that the numerical results stay time independent to a good approximation 
demonstrates that the typical scales of the solution agree with the 
prediction (\ref{X1Y1}): the width of the region scales like $\bar{t}^{3/2}$,
its height like $\bar{t}^{1/2}$. 

\begin{figure}
    \includegraphics[width=0.49\textwidth]{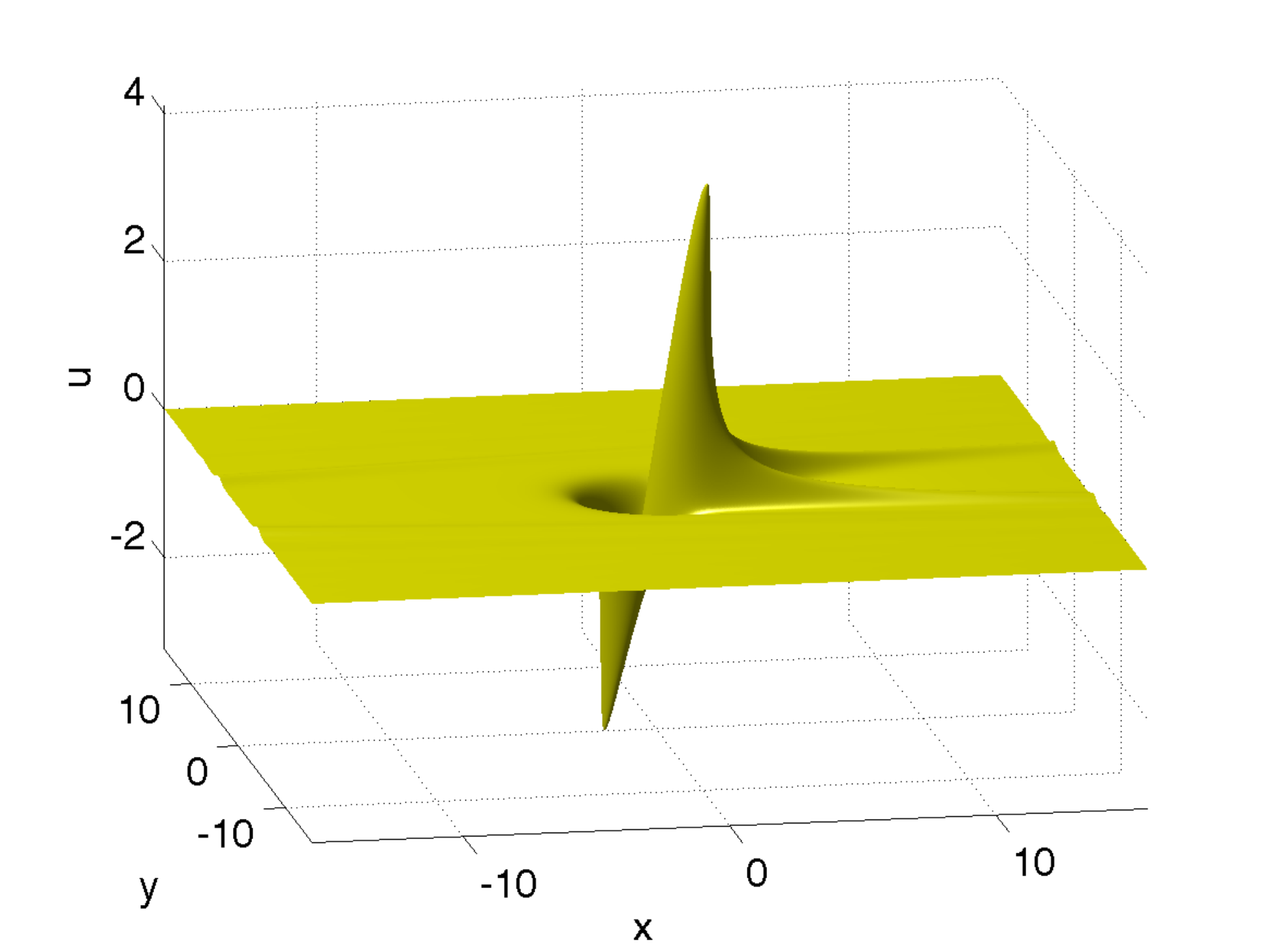}
    \includegraphics[width=0.49\textwidth]{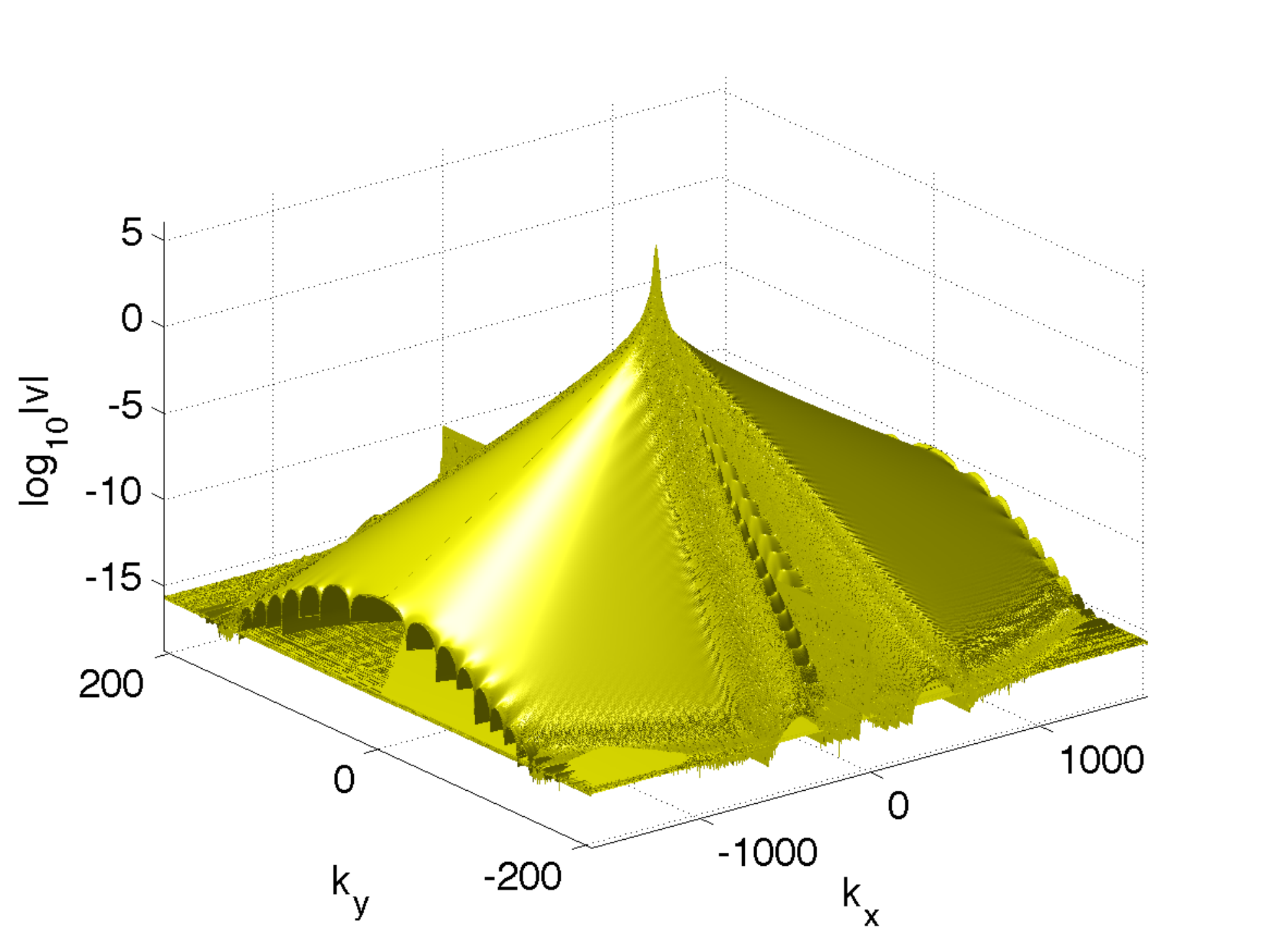}
 \caption{Numerical solution to the  dissipative dKP equation (\ref{dKPdis}) 
with $c=0$ and $\epsilon=0.01$ for initial data (\ref{u0sym}) at time
$t=0.32$ on the left, and the corresponding Fourier coefficients on 
 the right.}
 \label{dKPdissech}
\end{figure}
We now turn to the numerical solution of the dissipative dKP equation 
(\ref{DKP}), and to the comparison with our asymptotic theory, 
which is given by (\ref{KPPearcy}) in the general case, and by 
(\ref{KPPeracy}) for symmetric initial data. To resolve the strong 
gradients in the solutions to the dissipative dKP equation (\ref{dKPdis})
that occur for small $\epsilon$, much higher resolution is needed than 
for the solution of (\ref{Fev}) for the same initial data. For 
$\epsilon=0.01$ (with $c=0$) we use $N_{x}=2^{14}$, $N_{y}=2^{10}$ and 
$N_{t}=5000$ to find the solution of (\ref{dKPdis}) with initial data 
(\ref{u0sym}) at $t=0.32$, shown in Fig.~\ref{dKPdissech} on the left. 
At this value of $\epsilon$, the total loss of the $L^2$ norm 
(cf. (\ref{diss})) is of the order of $2\%$. 
A comparison between the dKP solution and 
the Fourier coefficients,  shown on the right, decay to below $10^{-10}$, 
as for the solutions to (\ref{Fev}). To achieve higher resolutions, 
parallel computation would be needed. 

\begin{figure}
\centering
\hspace{3cm}
 \includegraphics[width=0.95\textwidth]{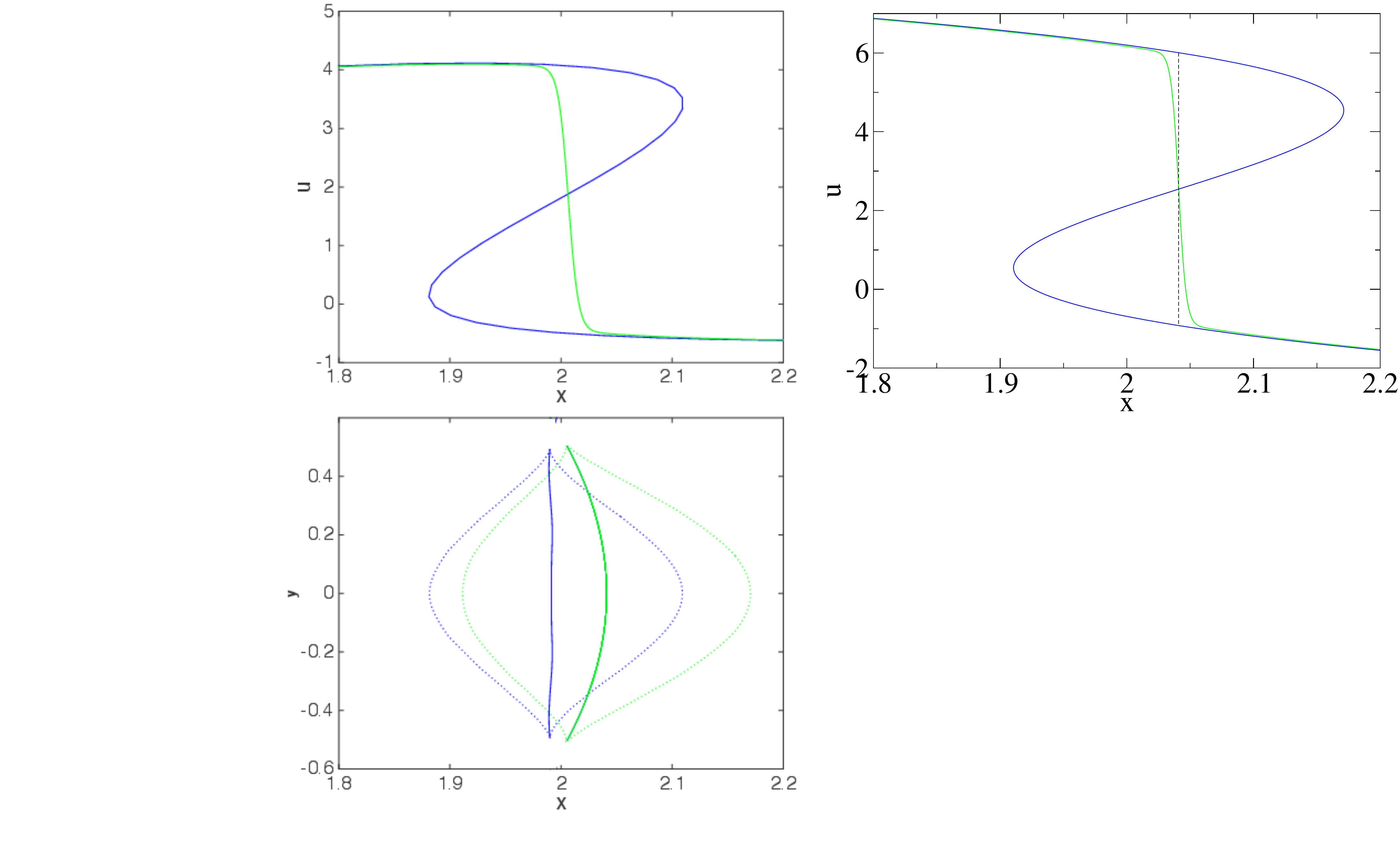}
 \caption{Top left: numerical solutions to the dKP equation (blue) and 
to the dissipative dKP equation (\ref{dKPdis}) (green), for
$c=0$ and $\epsilon=0.01$, using symmetric  initial data (\ref{u0sym}).
Shown is a slice along the line $y=0$ at $t=0.32>t_c=0.222$. 
Top right: the asymptotic approximations (\ref{u_exp}) and (\ref{KPPearcy}) 
to the same solutions; the dashed line marks the shock position $X=0$.
Bottom: the dotted lines mark the multivalued
regions for $t=0.32$, according to the numerical solution to the 
dKP equation (blue), and according to the asymptotic theory 
(\ref{lip_symm}) (green). The green solid line is the asymptotic 
prediction for the shock front, as given by (\ref{front_s}), and 
the blue solid line is a numerical estimate based on the inflection 
point of the dKP solution.}
 \label{dKPdissech2}
\end{figure}

In Fig.~\ref{dKPdissech2} (top left), we show a slice through the 
same dissipative solution at $y=0$ (green line), together with 
the corresponding dKP solution, which has become multivalued, as 
$\bar{t} \approx 0.1$. The dissipative solution exhibits a sharp 
front close to where the shock discontinuity is expected to be. 
Both curves are to be compared to our asymptotic 
results, shown on the top right, with the s-curve (\ref{s_u}) shown in 
blue, and the dissipative asymptotics (\ref{KPPeracy}) in green. 
The sharp front is seen to be localized around the theoretical shock
position, shown as the vertical dashed line. 
Since $\bar{t}$ is only moderately small, there exists a 30\% difference 
in the height of the s-curve, but otherwise the overturning of the dKP 
equation is well reproduced. Within these limitations, the shape and 
width of the shock front, as well as the front position within the 
s-curve, are very well reproduced. 

In the bottom graph of Fig.~\ref{dKPdissech2}, we report the multivalued
regions, as well as the position of the shock front, as given by the
numerical solution (blue curves, with the shock front as the solid line), 
and our asymptotic theory (green curves, shock front solid). Once more,
there is fair agreement in the shape and size of the lip-shaped multivalued
regions (dashed lines), described by the dKP equation. The numerical shock
position is estimated from the inflection point of the dKP solution, 
the theoretical prediction is the curve $X=0$.

In Fig.~\ref{fig:pearceysymc}, we show the solution to the dissipative 
dKP equation (\ref{DKP}) for $\epsilon=0.01$ and the asymptotic description 
(\ref{KPPearcy}) for the symmetric initial data (\ref{u0sym}) at the 
critical time in the vicinity of the critical point. While the asymptotic 
formula provides the best local approximation being best near the 
critical point, it can be seen to also correctly reproduce the 
$y$-dependence. 
\begin{figure}
\centering
   \includegraphics[width=0.49\textwidth]{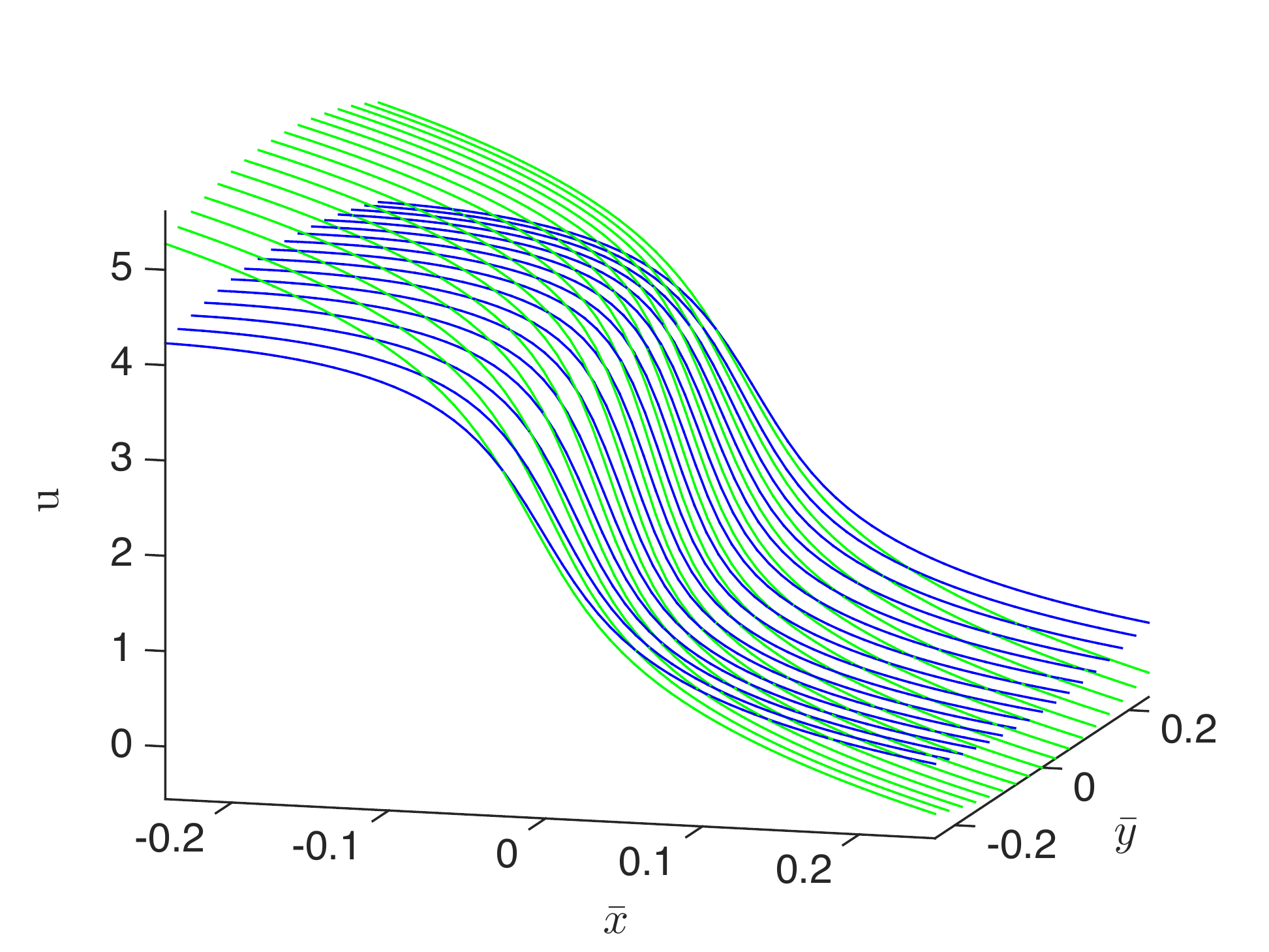} 
   \includegraphics[width=0.49\textwidth]{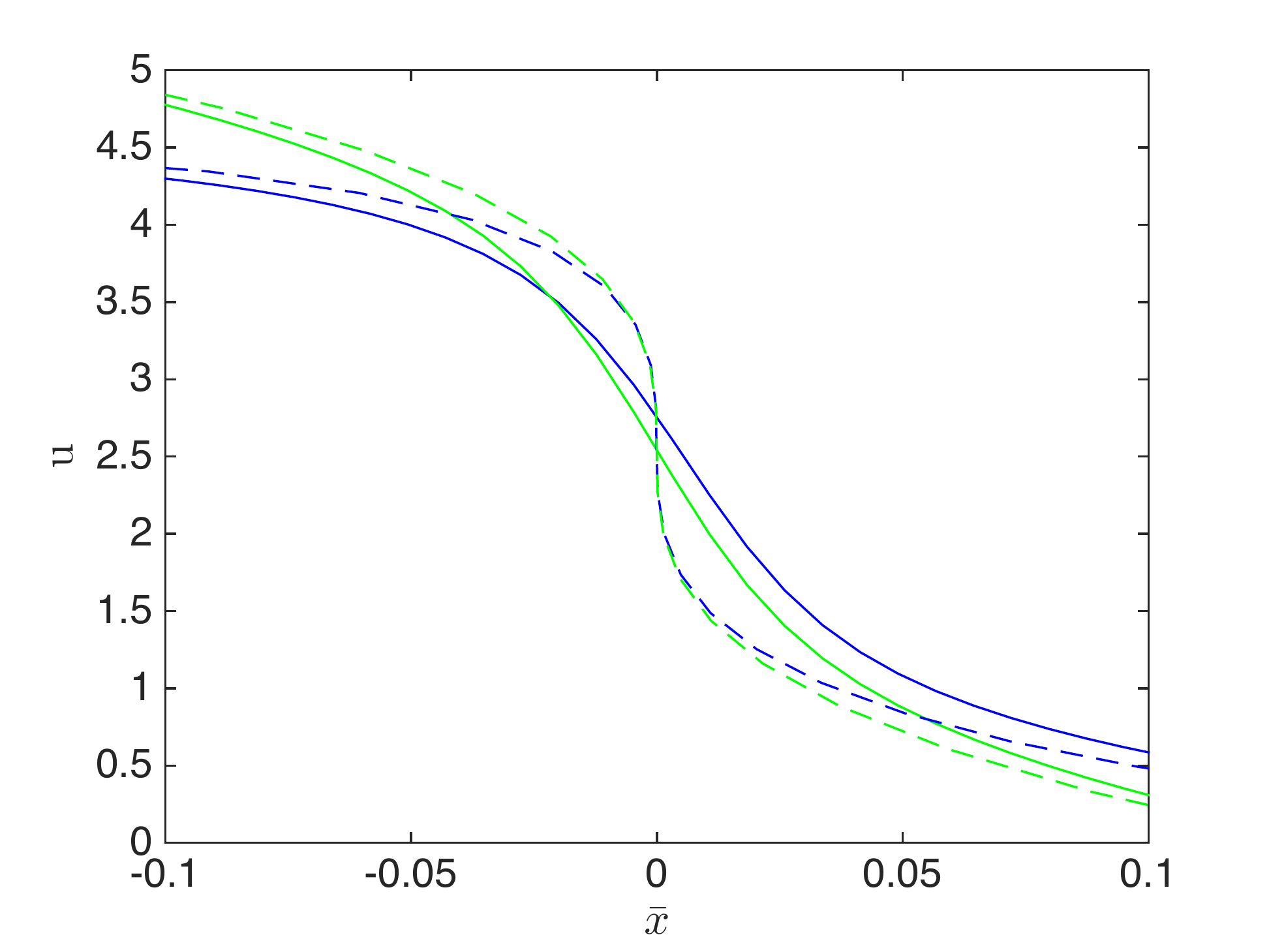}
  \caption{On the left, in blue the  solution to the dissipative dKP 
 equation (\ref{DKP}) for $\epsilon=0.01$ and 
  the symmetric initial data (\ref{u0sym}) at the critical time 
  $t_{c}=0.222$   and near the critical point, and in green  the 
  asymptotic solution (\ref{KPPearcy})  given by the Pearcey 
  integral. On the right the same plot along the line $y=0$. The dashed  
  blue line is the solution of dKP equation  and the green dashed line is the solution of the approximation (\ref{u_exp}) to the dKP solution. }
 \label{fig:pearceysymc}
   \end{figure}

The approximation is also valid for small, nonzero values of 
$\bar{t}$ as can be seen in Fig.~\ref{fig:pearceysym9t6} where the 
same situation as in Fig.~\ref{fig:pearceysymc} is shown on the 
slice $y=0$  for several values of $\bar{t}$. 
\begin{figure}
\centering
   \includegraphics[width=0.7\textwidth]{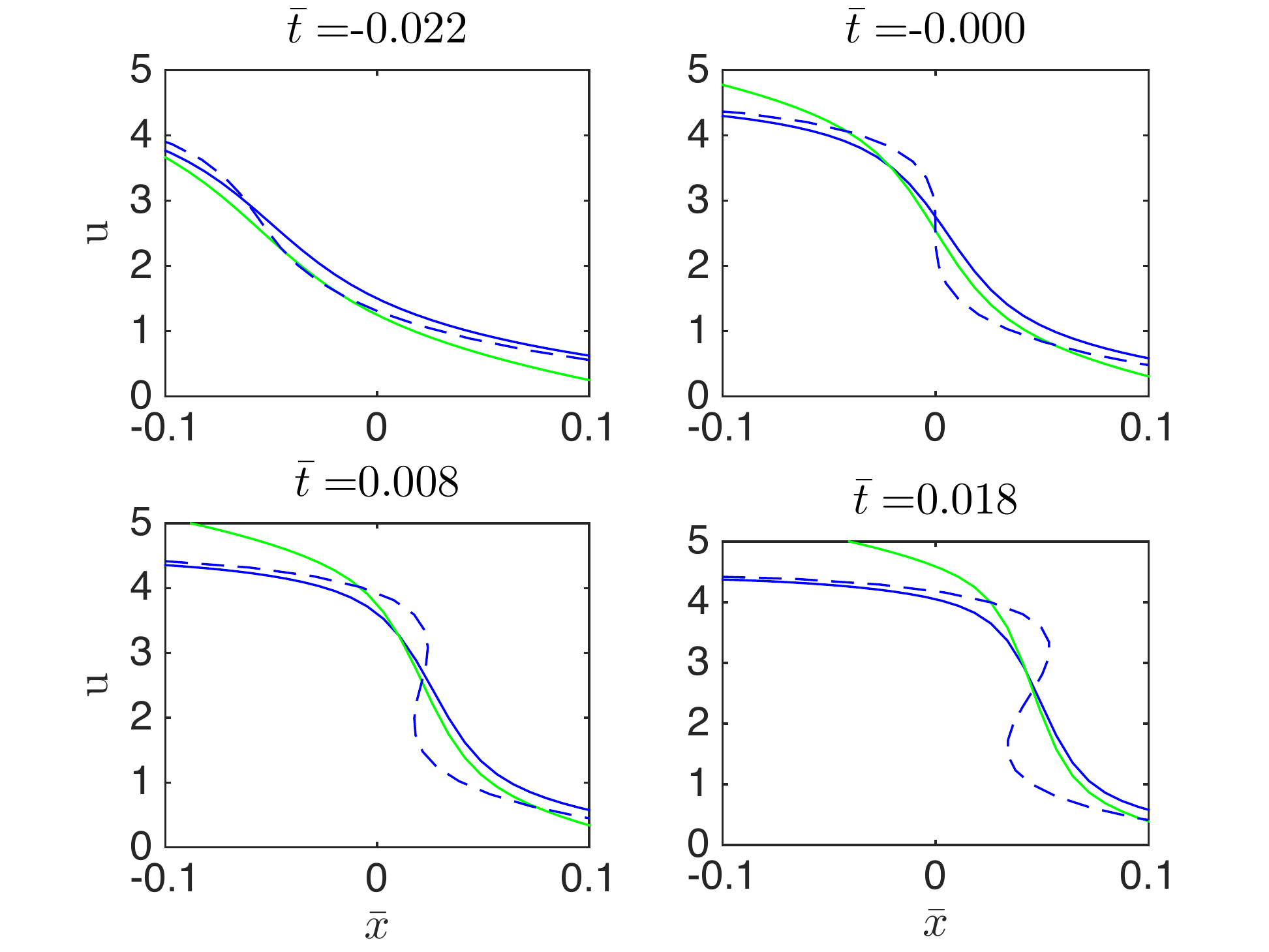} 
  \caption{Solution to the dissipative dKP equation (\ref{DKP}) for 
  $\epsilon=0.01$ and 
  the symmetric initial data (\ref{u0sym}) in blue, the 
 Pearcey  asymptotic solution (\ref{KPPearcy}) in green and the (weak) dKP 
  solution dashed on the line $y=0$ for several values of $\bar{t}$. 
}
 \label{fig:pearceysym9t6}
   \end{figure}

\subsection{Nonsymmetric initial data}
In this section we consider two different initial profiles which 
are not symmetric with respect to $y\to-y$. The first, 
\begin{equation}
    u(x,y,0) = 6\partial_{x}\left\{(x+1)(y-1)e^{-x^{2}-y^{2}}\right\},
    \label{u1}
\end{equation}
still retains a radial symmetry for $x^{2}+y^{2}\to\infty$. As 
seen in Table~\ref{table2}, we can follow the evolution through 
two successive gradient catastrophes. The second profile, 
\begin{equation}
    u(x,y,0)=6\partial_xe^{-x^2-5y^2-3xy},
    \label{u2}
\end{equation}
does not possess radial symmetry for large $x^{2}+y^{2}$, and we are 
able to compute the first gradient catastrophe only, whose critical 
parameters are also given in Table~\ref{table2}.
\begin{table}[h]
 \centering
    \leavevmode
\begin{center}\begin{tabular}{|c| c | c |c|c|c|c|}
\hline
Breaking events &Initial data&$t_c$&$x_c$&$y_c$&$u_c$&$\xi_{c}$\\
\hline First breaking &
$ 6\partial_{x}\left\{(x+1)(y-1)e^{-x^{2}-y^{2}}\right\}
$&0.0832&-1.210& -0.368&-4.958&-0.798\\
\hline
Second breaking &
$ 6\partial_{x}\left\{(x+1)(y-1)e^{-x^{2}-y^{2}}\right\}
$&0.1070&2.004&-0.368&4.4066&1.534\\
\hline
First Breaking&$6\partial_x(e^{-x^2-5y^2-3xy})$&0.086&0.088& 
-0.245&-1.477&0.215\\
\hline
\end{tabular}
\label{table2}
\caption{Critical parameters for the first two wave breaking events, with
weakly asymmetric initial data (\ref{u1}). For the strongly asymmetric 
initial data (\ref{u2}) only the first breaking could be computed.}
\hskip 100pt
\end{center}
\end{table}

\begin{figure}
\hspace{60pt}
    \includegraphics[width=0.4\textwidth]{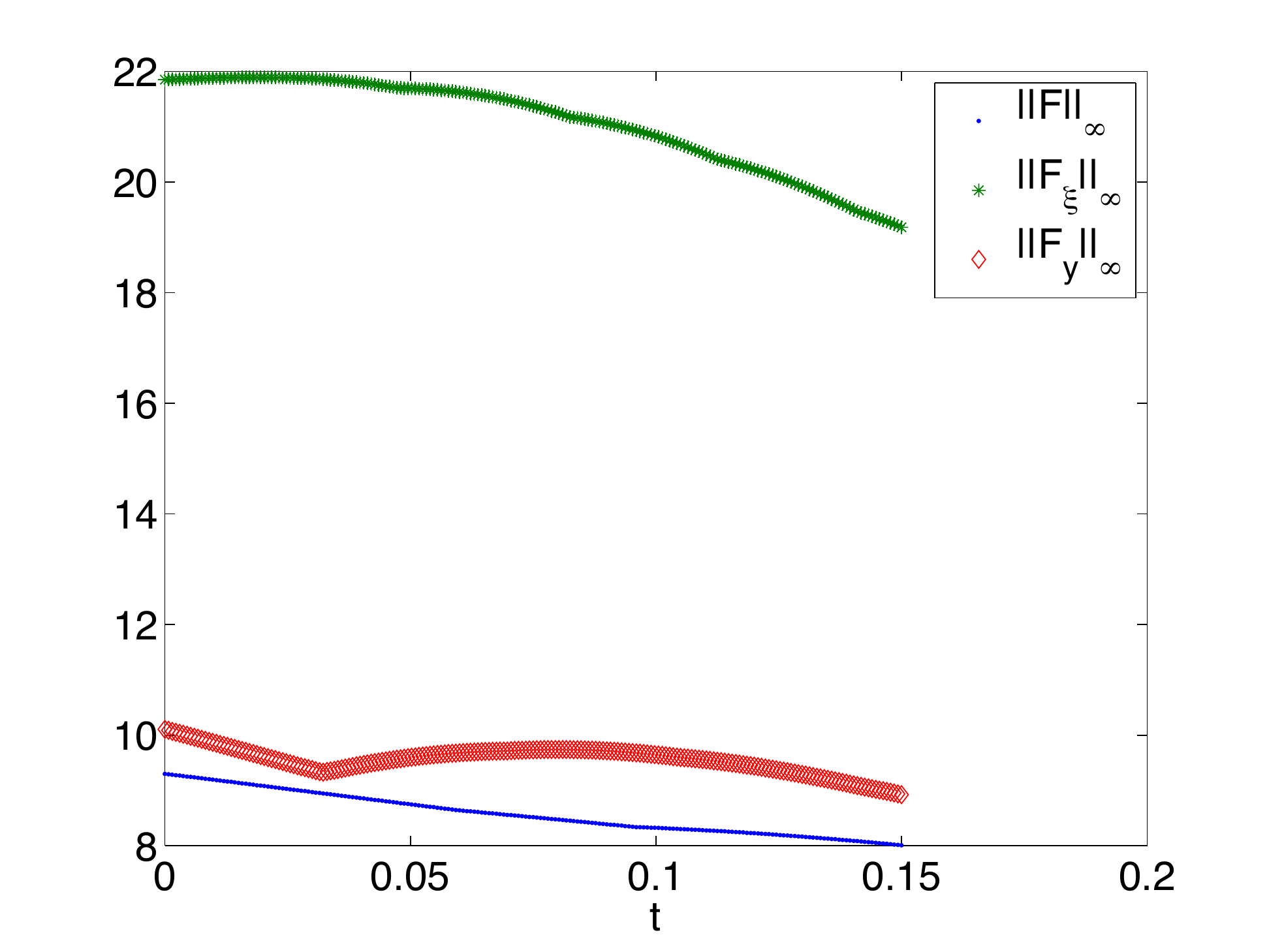}
    \includegraphics[width=0.4\textwidth]{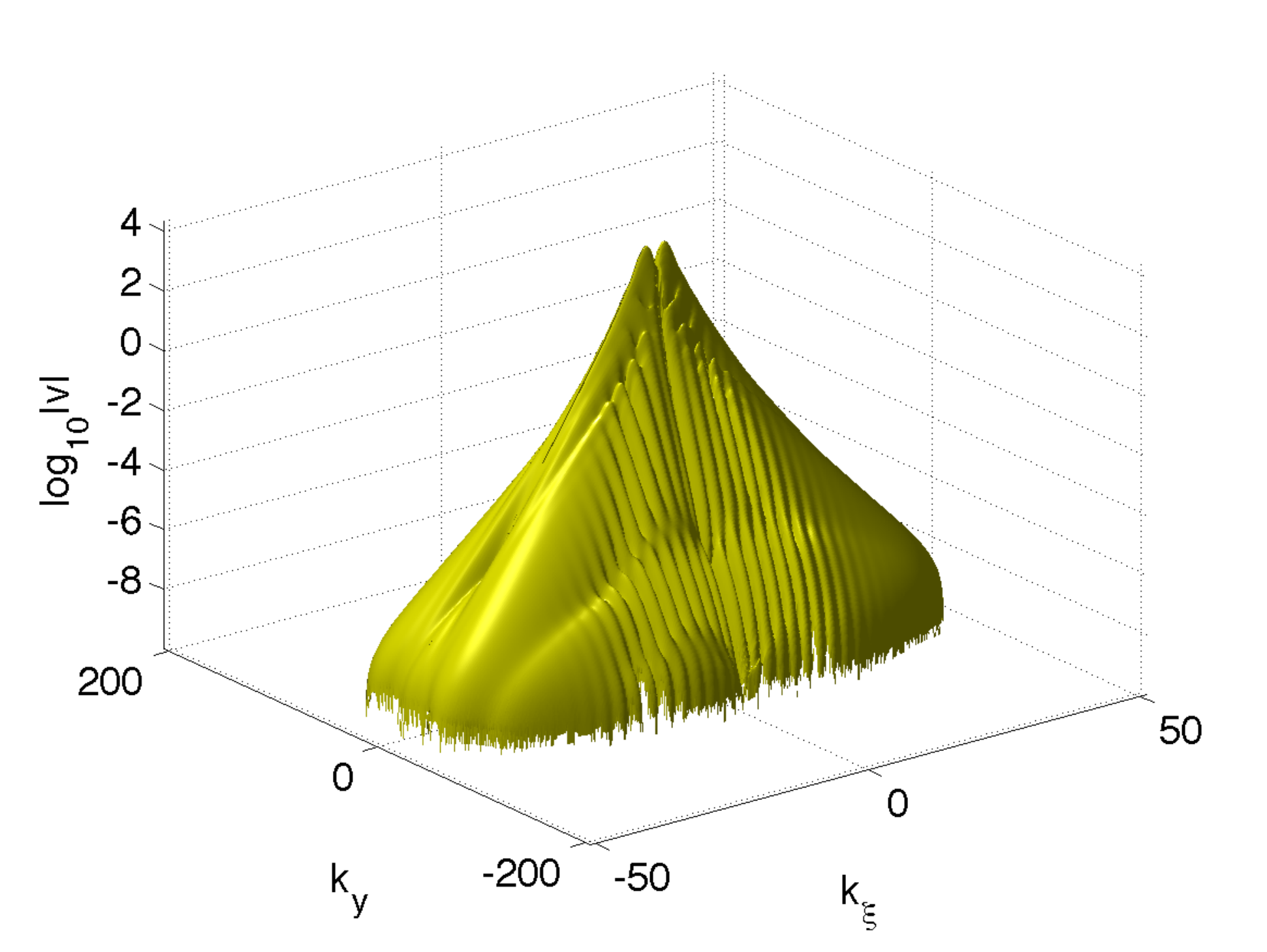}
 \caption{Same as Fig.~\ref{Fsechnorm}, but with initial data (\ref{u1})
(left). The Fourier coefficients on the right are shown for $t=0.15$. 
}
 \label{Fexpnorm}
\end{figure}
To solve the Cauchy problem with initial data (\ref{u1}) for the 
dKP equation (\ref{eqF}), we use $N_{x}=2^{9}$ and $N_{y}=2^{11}$ Fourier 
modes for $(x,y)\in[-5\pi,5\pi]^{2}$ and $N_{t}=5000$ time steps for 
$t\leq 0.15$. The first critical time is reached at $t_c=0.08323\ldots$,
the second critical time is $\tilde{t}_{c}=0.1070\ldots$; all other 
critical parameters are reported in Table~\ref{table2}. 
The relative computed $L^2$ norm is conserved to the order 
of $10^{-14}$, and the Fourier coefficients decrease to the order of 
the Krasny filter as can be seen in Fig.~\ref{Fexpnorm} (left). 
As seen in the same figure on the left, the $L^{\infty}$ norm of the solution 
$F$ and the norm of its gradient also appear to decrease for large $t$,
so again there is no indication of a blow-up of the solution. However, 
to be able to run the code for longer times, larger computational domains 
would have to be used.

\begin{figure}
\hspace{20pt}
 \includegraphics[width=0.5\textwidth]{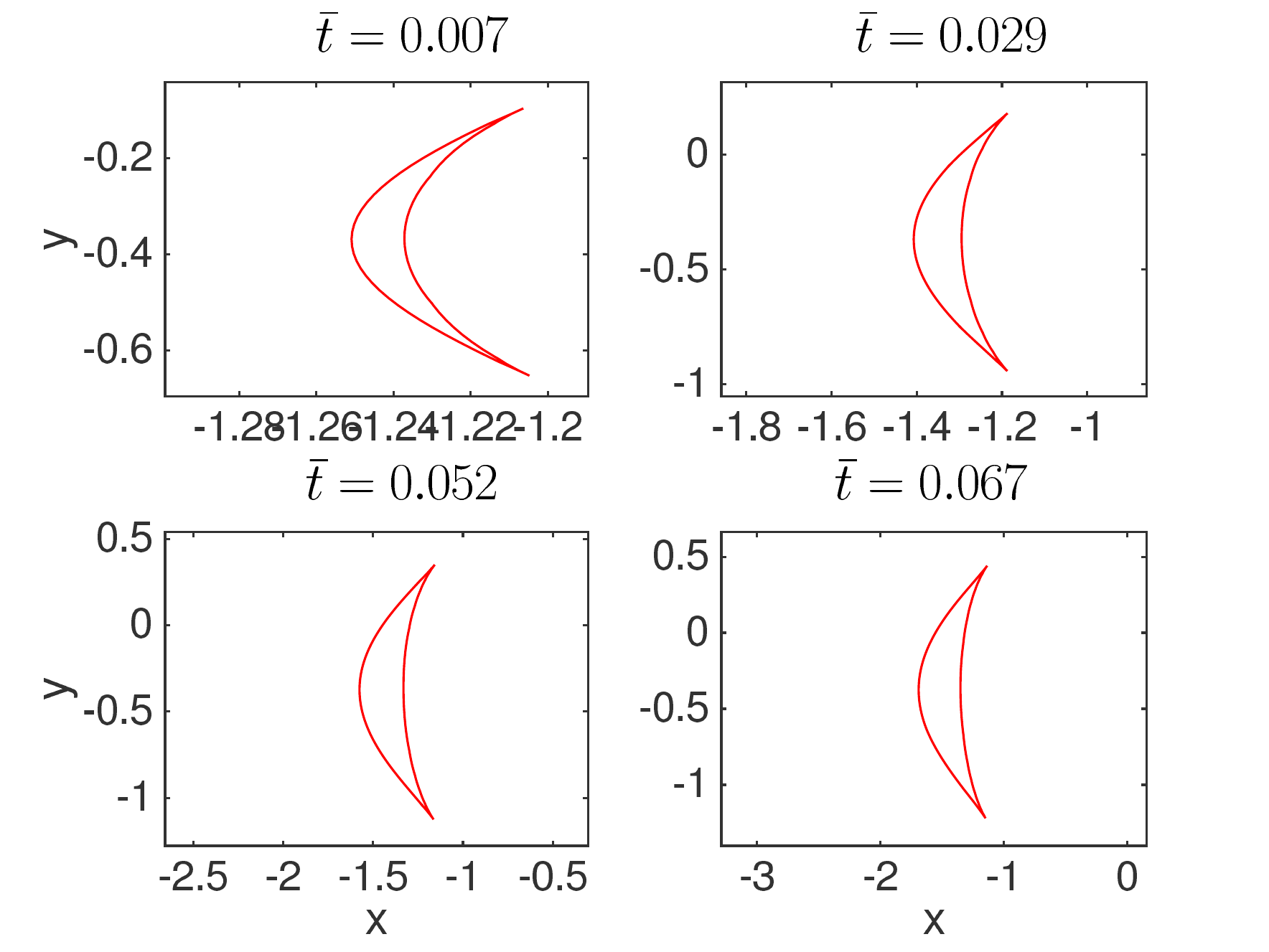}
    \includegraphics[width=0.5\textwidth]{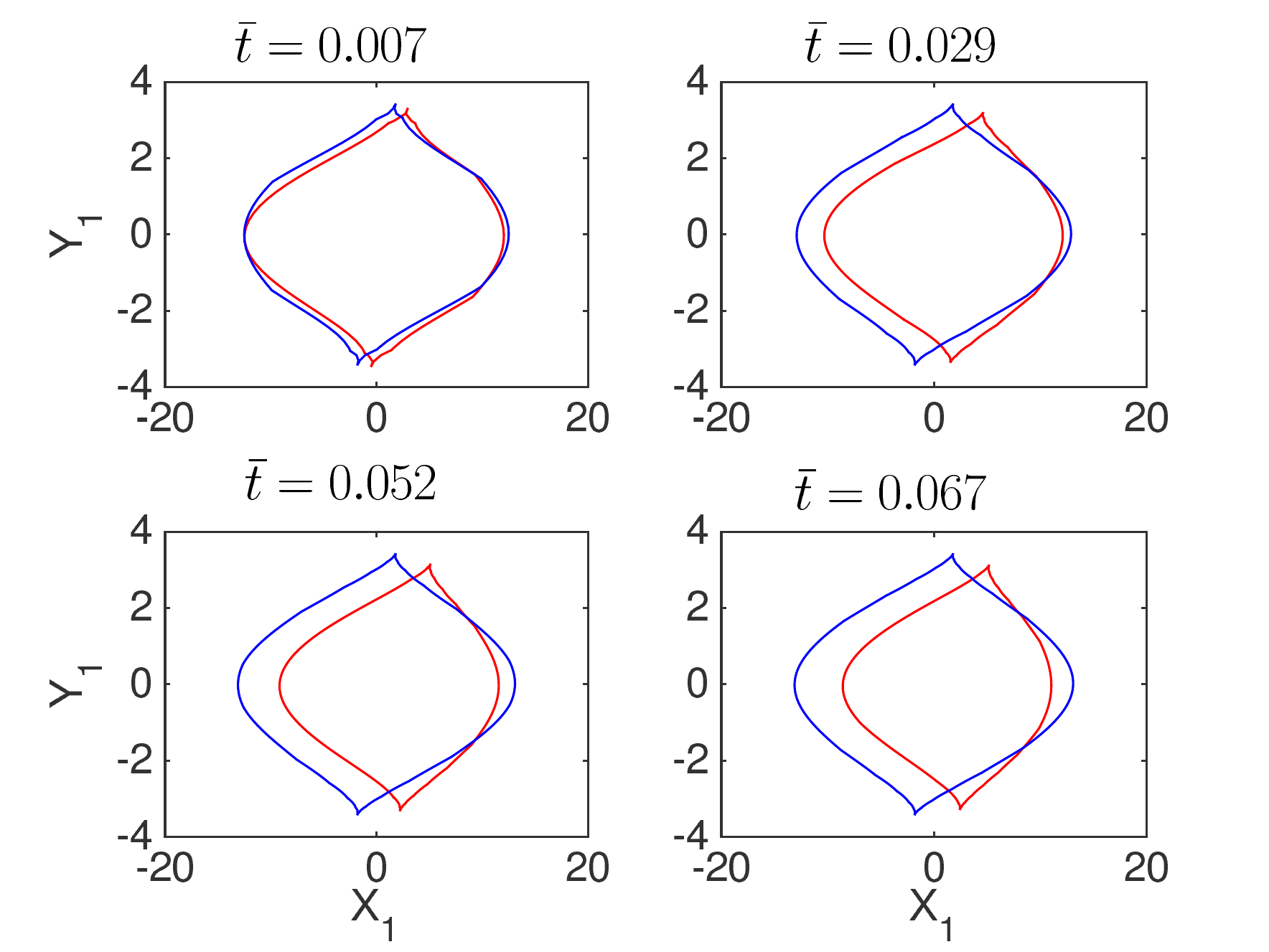}
 \caption{Left: boundary of the multivalued region found from a numerical
solution to the dKP equation for the initial data (\ref{u1}), 
for several values of $t>t_c=0.08323\ldots$  in the original (x,y) variables.
Right: The red boundaries on the right are the same data represented
in self-similar variables $X_1$ and $Y_1$ as defined in (\ref{X1Y1}), 
predicted to be time-independent by our asymptotic theory. 
The corresponding self-similar boundary, given by (\ref{slitX1Y1}),
is plotted in  blue.} 
\label{Fexp}
\end{figure}
On the left of Fig.~\ref{Fexp}, we trace the boundary of the multivalued 
regions  of $u(x,y,t)$  at four times shortly after the first gradient 
catastrophe; the times $\bar{t}$ relative to the singularity are reported 
on the top of each graph. On the right of the same figure, the same 
multivalued regions are plotted as functions of the rescaled coordinates 
$X_{1}$ and $Y_{1}$ defined in (\ref{X1Y1}). Once more, in the rescaled
coordinates the shape of the multivalued region is almost constant,
and agrees well with the theoretical prediction, shown in blue. Note 
the slight asymmetry of the lip shape with respect to the reflection 
symmetry $y\to-y$.

\begin{figure}
    \includegraphics[width=0.49\textwidth]{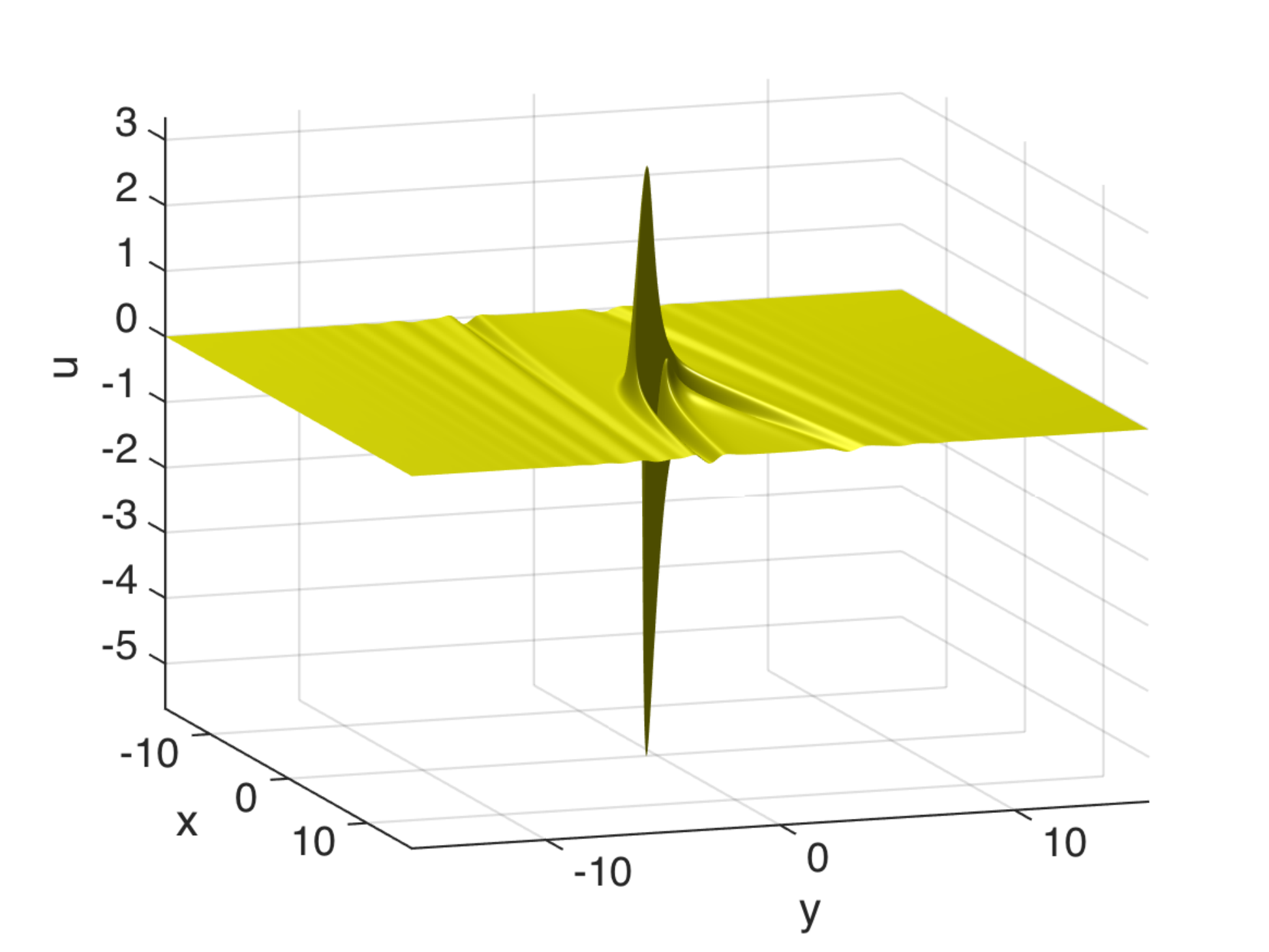}
    \includegraphics[width=0.49\textwidth]{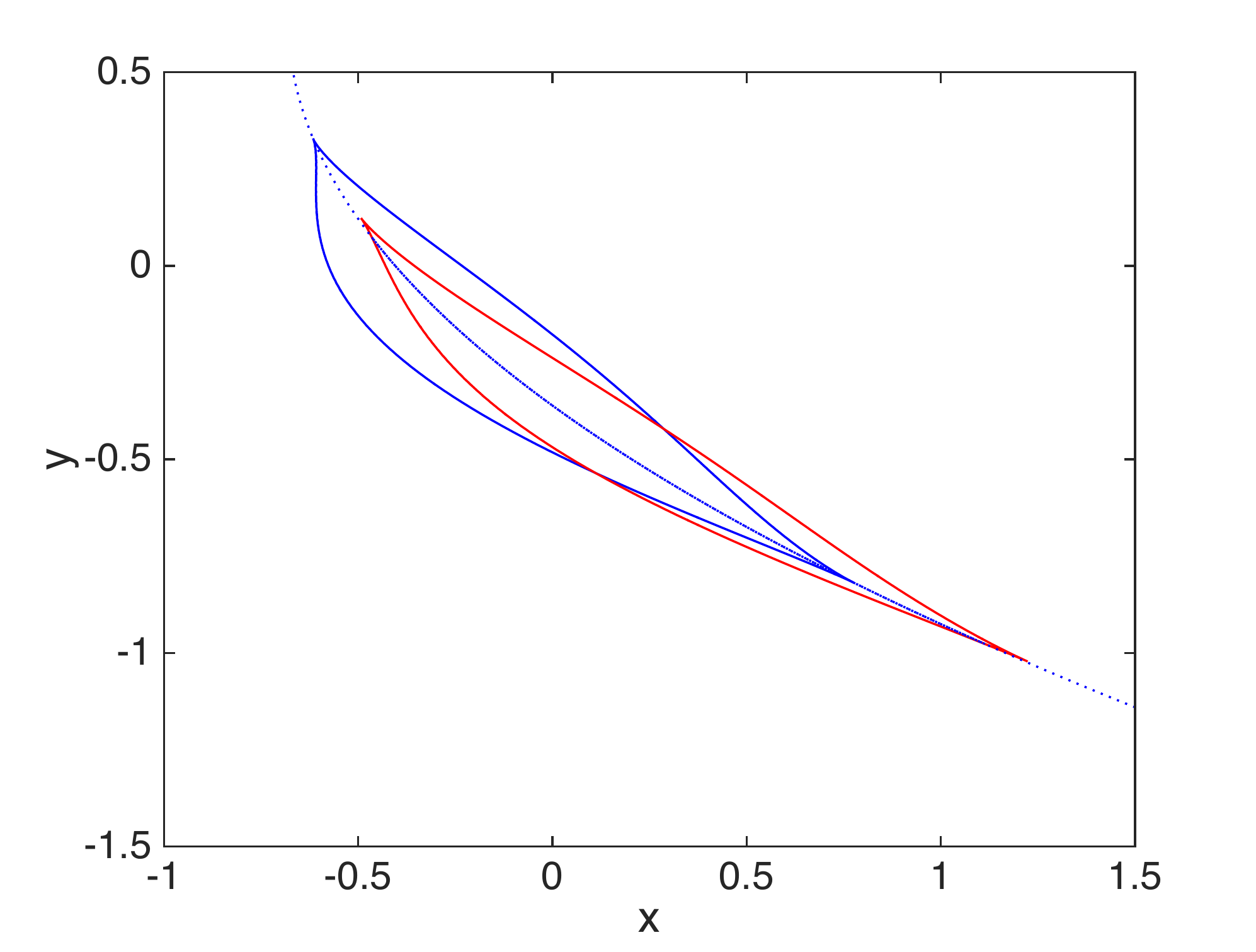}
 \caption{Left: numerical solution to the dKP  equation (\ref{dKP}) 
for strongly asymmetric initial data (\ref{u2}) at  $t=0.15>t_c=0.087$. 
Right: The corresponding  contour of the multivalued region 
$\Delta(\xi,y,t)=0$ (red), compared to the asymptotic theory 
(\ref{slitX1Y1}) (blue); the dashed line corresponds to $X=0$ as given 
by (\ref{front_s}).
       }
 \label{Fexp2}
\end{figure}
\begin{figure}
\subfigure{
    \includegraphics[width=0.35\textwidth]{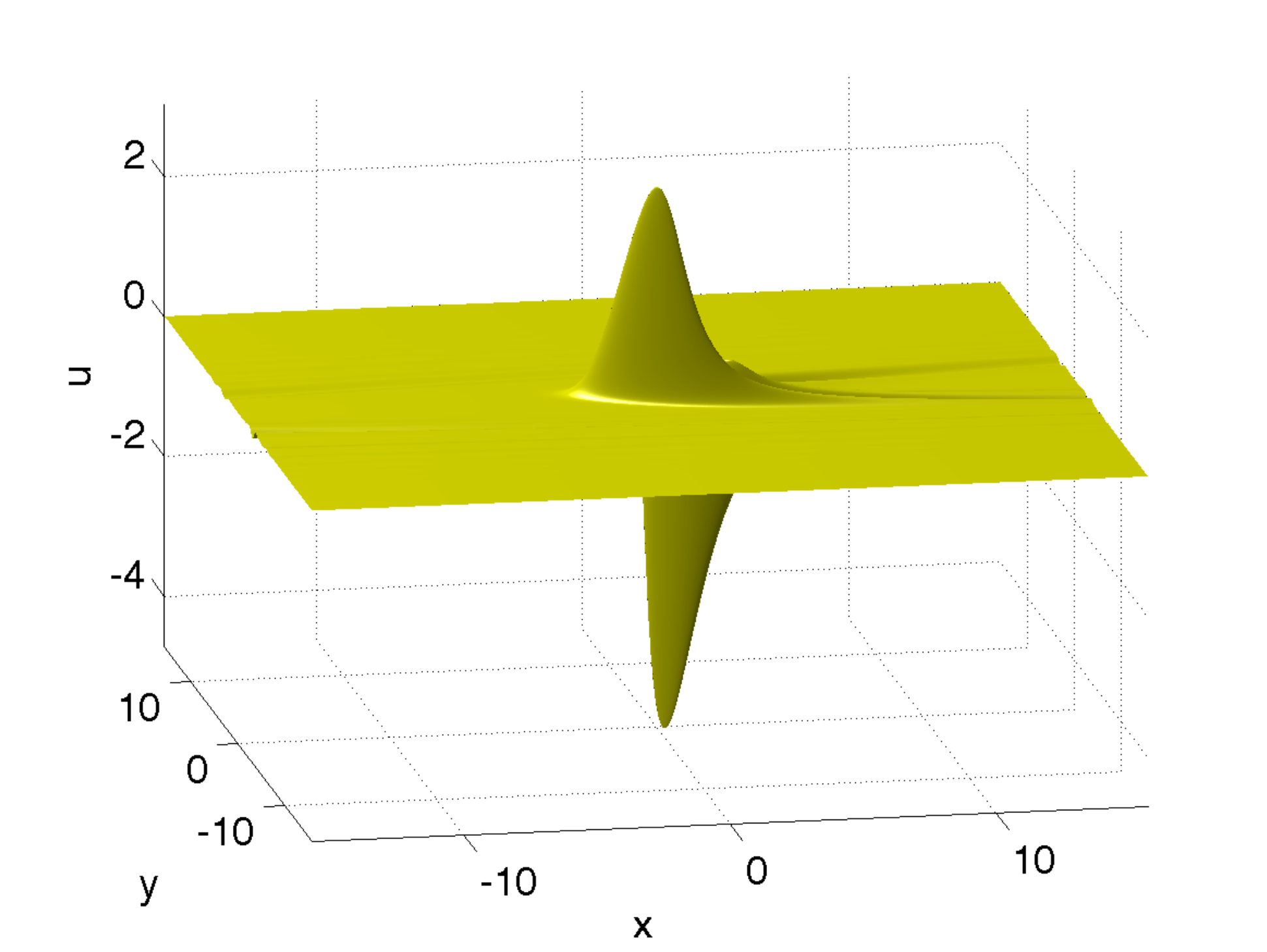}
     \includegraphics[width=0.35\textwidth]{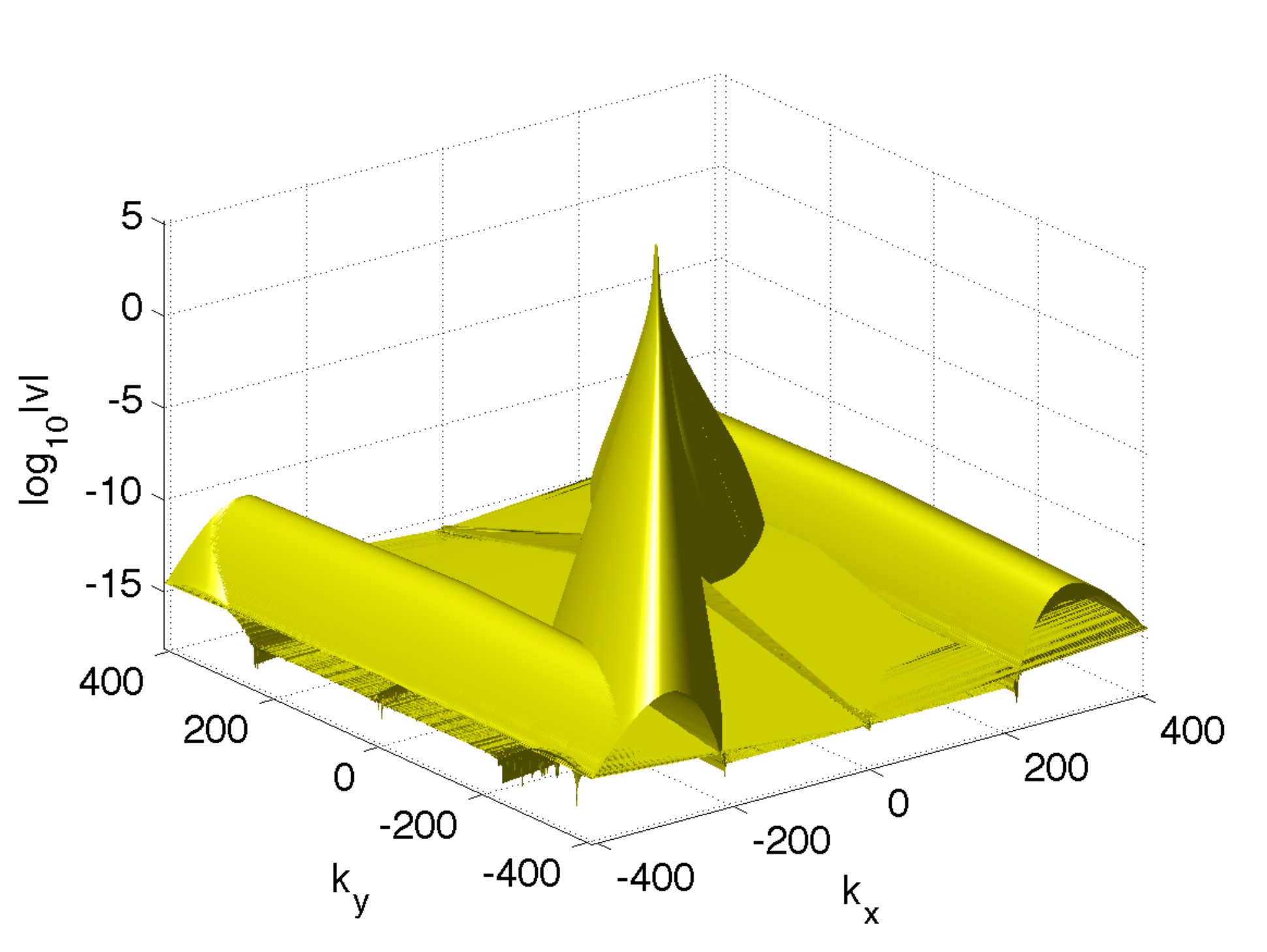}  
       \includegraphics[width=0.35\textwidth]{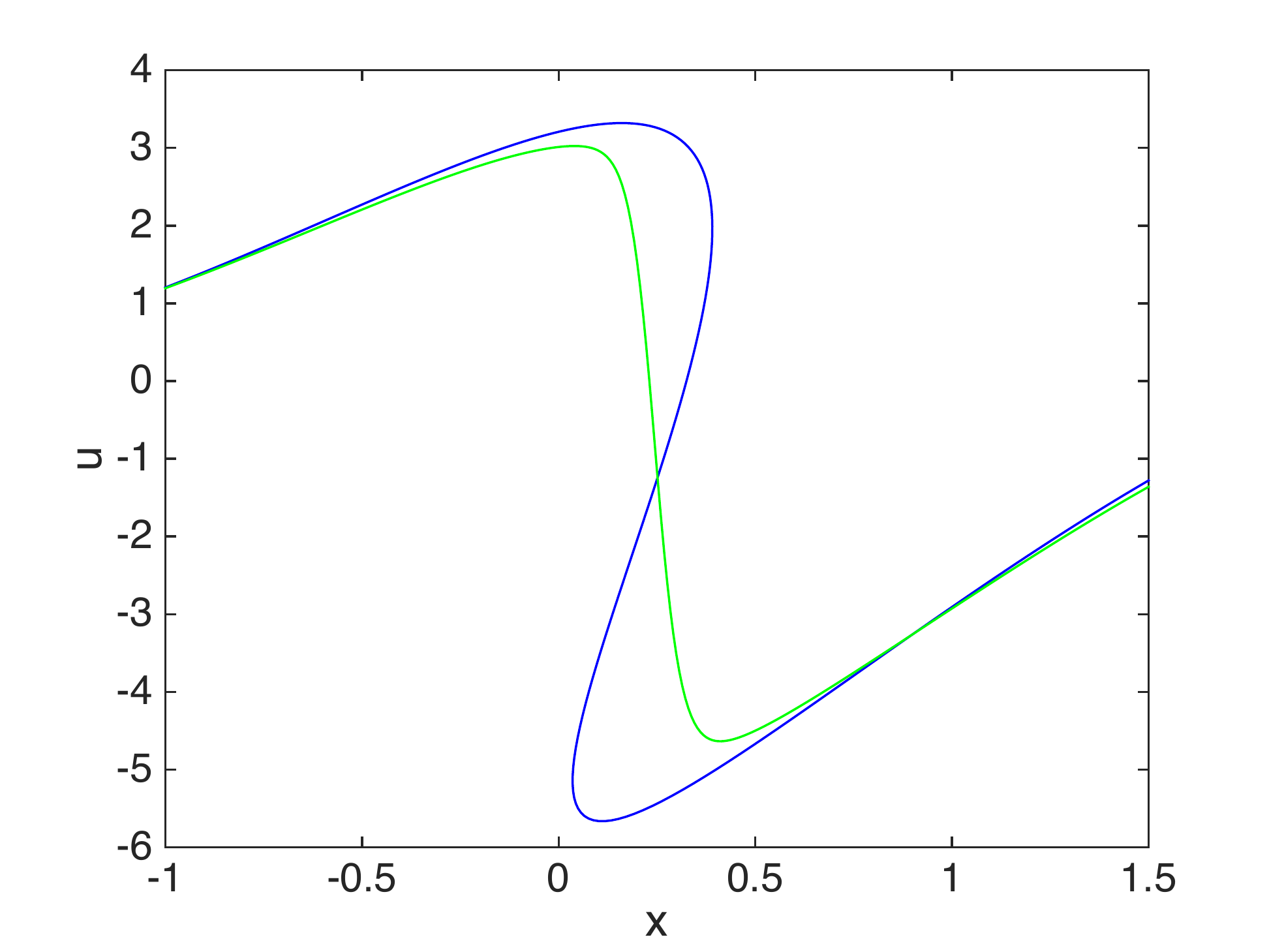}
}
 \caption{Left: numerical solution to the dissipative dKP equation 
(\ref{dKPdis}) with $\epsilon=0.04$, $c=1$, for initial data (\ref{u2}), at 
$t=0.15$. Center: the corresponding Fourier coefficients. Right:
a slice of the left plot along the line $y =-0.4985$ (green), together 
with the corresponding solution of the dKP equation (blue). 
}
 \label{dKPdisexp2}
\end{figure}
For the initial data (\ref{u2}), the code is run with $N_{x}=N_{y}=2^{11}$ 
Fourier modes on the same spatial domain as before, using $N_{t}=2000$ time 
steps for $t\leq 0.15$. The first gradient catastrophe is found at 
$t_{c}=0.087\ldots$, see Table~\ref{table2} for the remaining critical
parameters. The solution at the final time (cf. Fig.~\ref{Fexp2}, left) 
is strongly asymmetric. This also implies an asymmetry of the tails of 
the solution and thus a stronger effect of the algebraic decay of the 
solution towards spatial infinity. The asymmetry of the tails of the solution 
also affects the Fourier coefficients. Despite a higher resolution than 
that of Fig.~\ref{Fexpnorm}, there are small contributions to the high 
wave number Fourier coefficients along the $k_y$ axis above the Krasny filter,
which eventually cause the numerical scheme to break down. As a result,
we do not reach a second catastrophe in this example. At $t=0.15$, 
the relative computed $L^2$ norm is still conserved with an accuracy in 
the order of $10^{-13}$. The $L^{\infty}$ norm of $F$ and of its gradient do 
not indicate blow-up, but they are also not decreasing. If the solution 
exists for large $t$ also, then the computation did not reach the 
asymptotic regime. 

The asymmetry of the solution can also clearly be seen 
from the contour delimiting the multivalued region, seen as the red line 
in Fig.~\ref{Fexp2} (right). This is compared to the asymptotic theory at
$\bar{t} = 0.063$, shown as the blue line. Theory correctly describes the 
strong asymmetry and the orientation of the lip shape, but there are some
quantitative differences. This indicates that the size of the critical region
is smaller in the case of strong asymmetry.

 \begin{figure}
\centering
    \includegraphics[width=0.49\textwidth]{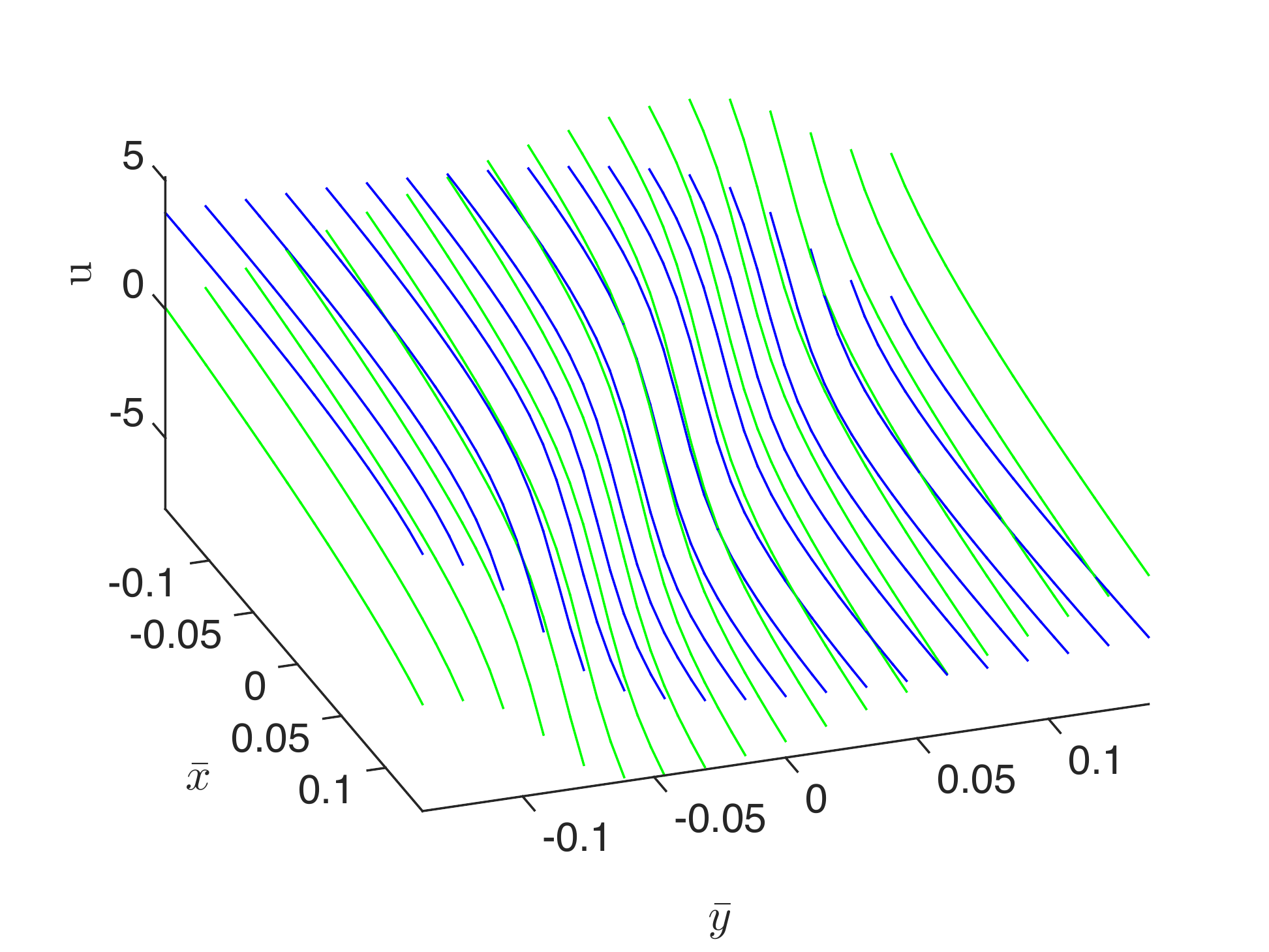} 
  \caption{ In blue the solution of the dissipative dKP equation and 
  in green the Pearcey  asymptotic solution (\ref{KPPearcy}) for 
$\epsilon=0.01$ and the  strongly asymmetric  initial data (\ref{u2}) 
at the critical time $t_{c}$ and near the critical point of the dKP solution. }
 \label{fig:dkpburgerpearceyexpc}
   \end{figure} 
For the dissipative dKP equation for the initial data (\ref{u2}), we 
consider $\epsilon=0.04$ to obtain the solution shown in 
Fig.~\ref{dKPdisexp2} on the left. The Fourier coefficients in the middle of
the same figure are also rather asymmetric, but decrease to the order 
of the Krasny filter. Due to the higher value of $\epsilon$,  the 
loss of the $L^2$ norm is of the order of $22.2\%$. On the right of
Fig.~\ref{dKPdisexp2}, we compare the dissipative solution to the 
corresponding solution of the dKP equation. Although the width of the 
front is greater, owing to a higher value of $\epsilon$, it is set 
inside the s-curve where the shock position is expected to be. 

In Fig.~\ref{fig:dkpburgerpearceyexpc} we show  the dissipative dKP 
equation (\ref{DKP}) for $\epsilon=0.01$ for initial data (\ref{u2}). 
While in the symmetric case $F_{y}^{c} = 0$, here we have 
$F_{y}^{c}\approx-17.39$, consistent with a strongly asymmetric shock. 
Even in this case, the full two-dimensional structure of the step is 
well described by the asymptotic theory. 

\section{Conclusions}
\label{sec:disc}
We have introduced a coordinate transformation, inspired by the method
of characteristics, to investigate wave breaking in the dispersionless 
Kadomtsev-Petviashvili equation. As a result, the entire region where 
the profile is overturned is mapped onto a smooth and single valued function. 
The transformed equation remains smooth near the gradient catastrophe. 
Moreover, our numerics show that solutions remain smooth even beyond 
secondary wave breaking events. This permits 
us to compute solutions up to the first gradient catastrophe with much 
reduced numerical effort, and then to continue into the overturned region, 
where direct numerical simulations of the dKP equation fail. From the 
overturned profile, one can reconstruct the shock position, using the 
jump condition (\ref{shock_cond_final}). 

Using the fact that the transformed profile remains smooth at the gradient
catastrophe, we have calculated the local similarity form of the profile. 
This allows us to calculate the lip shape of the overturned region 
analytically, and to find the position of shock. Both the shape and the 
scaling properties of this region agree well with numerical simulations. 

We have also investigated the dissipative version of the dKP equation, which 
regularizes the gradient catastrophe. We performed direct numerical 
simulations of this equation for small dissipation, which we continued 
beyond the first gradient catastrophe. Results agree with expected shock 
solutions, except that the jump at the shock position is replaced by a 
smooth but rapidly varying profile. To investigate the shape of this 
profile, we use our characteristic transformation to map the dissipative 
KP equation locally to Burgers' equation, which we can solve to obtain 
a local similarity description of the profile in two dimensions.
Asymptotic analysis leads to a description of the profile in terms of 
Pearcey's function, which is in good agreement with numerics. 

We believe that the methods developed in this paper are of interest to
study shock formation in a wider class of hyperbolic equations, including 
the compressible Euler equation. Here a significant complication lies in the 
fact that there are {\it two} families of characteristics in the 
corresponding one-dimensional problem, and hence a transformation 
based on a single characteristic cannot be expected to lead to a solution 
which avoids overturning for all times. However, shocks are generically 
expected to form with respect to one of the two characteristics only 
\cite{LL84a}, so a transformation such as (\ref{implicit0}) will still 
be able to unfold the profile locally. However, the necessary 
transformation will depend on which of the characteristics is involved, 
and thus implicitly on initial conditions. 

\section*{Acknowledgments}
JE's work was supported by a Leverhulme Trust Research Project Grant.
TG  was  partially  supported  by  Miur Research project Geometric and 
analytic  theory  of Hamiltonian  systems  in  finite  and infinite 
dimensions  of  Italian  Ministry  of  Universities and Research.

\providecommand{\bysame}{\leavevmode\hbox to3em{\hrulefill}\thinspace}
\providecommand{\MR}{\relax\ifhmode\unskip\space\fi MR }
\providecommand{\MRhref}[2]{%
  \href{http://www.ams.org/mathscinet-getitem?mr=#1}{#2}
}
\providecommand{\href}[2]{#2}

\end{document}